\newenvironment*{tightenum}{\enumerate[noitemsep]}{\endenumerate}
\newcommand*{\TCNLA}[1][]{}  
\newcommand*{\ifNoAppendix}[1]{}
\def\colorschemesepia{sepia}
\def\colorschemedark{dark}
\def\colorschemelight{light}
\let\colorscheme\colorschemelight
\colorlet{textColor}{black}
\colorlet{bgColor}{white}
\definecolor{textColor}{HTML}{433423}
\definecolor{bgColor}{HTML}{fbf0da}
\definecolor{textColor}{HTML}{bdc1c6}
\definecolor{bgColor}{HTML}{202124}
\definecolor{textBlue}{HTML}{8ab4f8}
\definecolor{textRed}{HTML}{f9968b}
\definecolor{textGreen}{HTML}{81e681}
\definecolor{textPurple}{HTML}{c58af9}
\colorlet{textBlue}{blue!50!black}
\colorlet{textRed}{red!50!black}
\colorlet{textGreen}{green!50!black}
\definecolor{textPurple}{HTML}{681da8}
\newcommand*{\defeq}{:=}
\newcommand*{\I}{\mathcal{I}}
\newcommand*{\Ical}{\mathcal{I}}
\newcommand*{\Icalhat}{\widehat{\mathcal{I}}}
\newcommand*{\mms}{\text{MMS}}
\newcommand*{\MMS}{\mathrm{MMS}}
\DeclareMathOperator{\normalize}{\mathtt{normalize}}
\DeclareMathOperator{\normalizeHyp}{\hyperref[algo:normalize]{\normalize}}
\DeclareMathOperator{\reduce}{\mathtt{reduce}}
\DeclareMathOperator{\reduceHyp}{\hyperref[sec:valid-redn]{\reduce}}
\newtheorem{lemma}{Lemma}
\newtheorem{theorem}{Theorem}
\newtheorem{corollary}{Corollary}
\newtheorem{definition}{Definition}
\crefname{claim}{Claim}{Claims}
\newtheorem{remark}{Remark}
\theoremstyle{definition}
\newtheorem{example}[theorem]{Example}
\algnewcommand{\LineComment}[1]{\State \textcolor{gray}{\texttt{//} \textit{#1}}}
\gdef\expandafter\expandafter\csname sp:#1\endcsname\expandafter{%
\expandafter\proof\BODY\endproof}
\newcommand*{\recallProof}[1]{\csname sp:#1\endcsname}
\newcommand*{\callMacro}[1]{\csname#1\endcsname}
\title{Improved MMS Approximations for Few Agent Types}
\author{Jugal Garg\footnote{University of Illinois, Urbana-Champaign. Supported by NSF Grants CCF-1942321 and CCF-2334461}\\\texttt{jugal@illinois.edu}
\and Parnian Shahkar\footnote{University of California, Irvine. Supported by NSF Grant CCF-2230414}\\\texttt{shahkarp@uci.edu}
}
\date{\empty}
\begin{document}
\maketitle
\begin{abstract}
We study fair division of indivisible goods under the \emph{maximin share} (MMS) fairness criterion in settings where agents are grouped into a small number of \emph{types}, with agents within each type having identical valuations. For the special case of a single type, an exact MMS allocation is always guaranteed to exist. However, for two or more distinct agent types, exact MMS allocations do not always exist, shifting the focus to establishing the existence of approximate-MMS allocations. A series of works over the last decade has resulted in the best-known approximation guarantee of $\frac{3}{4} + \frac{3}{3836}$. 

In this paper, we improve the approximation guarantees for settings where agents are grouped into two or three types, a scenario that arises in many practical settings. Specifically, we present novel algorithms that guarantee a $\frac{4}{5}$-MMS allocation for two agent types and a $\frac{16}{21}$-MMS allocation for three agent types. Our approach leverages the MMS partition of the majority type and adapts it to provide improved fairness guarantees for all types. 

\end{abstract}

\section{Introduction}
Fair division of resources is a fundamental problem in various multi-agent settings. In this work, we focus on the discrete setting, where a set of indivisible goods needs to be partitioned among agents with additive preferences. The maximin share (MMS) is a widely studied fairness notion in this context. An allocation is said to satisfy MMS fairness if every agent receives a bundle of goods that they value at least their MMS value. The MMS value for an agent represents the maximum value they can guaranteee for themselves by partitioning all goods into $n$ bundles (one for each agent) and receiving the least-valued bundle, where $n$ is the total number of agents. Agents with identical valuation functions are considered the same \emph{type}.

While MMS allocations provide a strong fairness guarantee for all agents, they are not guaranteed to exist for all instances~\cite{procaccia2014fair}. Notably, when all agents have identical valuation functions---i.e., they belong to a single type---an MMS allocation is trivially guaranteed. However, in settings with three or more agents and just two distinct types, MMS allocations may not always exist~\cite{FeigeST21}.  

These non-existence results have shifted the focus toward approximate MMS guarantees. In this framework, an allocation is called $\alpha$-MMS if every agent receives a bundle they value at least $\alpha$ times their MMS value. Over the last decade, extensive research has established the existence of allocations with an approximation guarantee of $\frac34 + \frac{3}{3836}$-MMS; see, e.g.,~\cite{barman2020approximation,ghodsi2018fair,garg2019approximating,Nikzad,10.1145/3140756,garg2021improved,akrami2023breaking34barrierapproximate}. This remains the best-known guarantee, even in settings with agents grouped into two distinct types. 

Improving MMS approximations has proven to be a significant challenge, with progress being relatively slow. It remains unclear how to substantially improve the approximation ratio beyond $\frac34$ for all instances. The existence of $\frac34$-MMS was first established in~\cite{ghodsi2018fair}, and despite extensive efforts by multiple researchers, the ratio has only been slightly improved over the past eight years. This naturally leads to the question: \emph{Can we achieve better approximation ratios for intermediate cases where the number of agent types lies between 1 and $n$?} 

In this paper, we address this question affirmatively by providing novel algorithms that achieve a \( \frac{4}{5} \)-MMS allocation for two agent types, and a \( \frac{16}{21} \)-MMS allocation for three agent types. Beyond their theoretical significance, these results hold practical importance, as many real-world resource allocation scenarios involve agents grouped into a few distinct categories based on shared preferences or needs. For example, co-working spaces may group users as freelancers, startups, or larger companies, while land allocation might involve agricultural, residential, or commercial priorities. These cases often involve just two or three types of agents, making it crucial to design allocation mechanisms that exploit this structure for better outcomes. These special cases have also been explored in several prior works, often motivated by additional intriguing applications; see, e.g.,~\cite{GargMQ23, GargMQ24, GhosalPNV23}.

A key innovation in our work lies in leveraging the 1-MMS partition for valuation function of the majority type, the type with the largest number of agents. We adapt this partition to achieve better approximations. This departs significantly from traditional methods that typically begin with variations of bag-filling algorithms. Similar to \cite{akrami2023breaking34barrierapproximate}, we aim to ensure that each allocated bag contains exactly one high-valued item, enabling tighter bounds for subsequent bag-filling steps. However, rather than constructing these bags from scratch, we modify the initial MMS partition of the majority type. 

Specifically, if some bags contain multiple high-valued items while others have none, we swap items to ensure that every bag contains exactly one high-valued item. The details of this process are provided in Section \ref{sec:preprocess}. While this process may compromise the 1-MMS property for the majority type, we ensure that the value of each modified bag remains at least \( \frac{4}{5} \) times the MMS for the majority type. Therefore, any of these bags can be allocated to the agents from majority type, but the other types will determine which bags. Essentially, we prioritize assigning bags that are considered low-valued by the other types to the majority type, thereby maximizing the remaining value for the non-majority types.

When there are three types of agents, the allocation problem becomes more complicated, as the two non-majority types might disagree on which bags should be allocated to the majority type. In such cases, we categorize the bags into four classes based on their value to the non-majority types: valuable to both, valuable to one but not the other, valuable to neither. By comparing the number of bags in these categories with the number of agents in each type, we develop tailored solutions for all possible cases. The most challenging scenario arises when many bags are low-valued for both non-majority types, leaving them undesired by either. In such instances, we aggregate all items from these bags and employ a sophisticated bag-filling algorithm to ensure that all agents receive allocations that meet their MMS requirements. The algorithms that achieve improved approximation guarantees for scenarios with two or three agent types are detailed in Section \ref{sec:4}. 

While our approach is effective for instances with two and three types, it faces scalability challenges as the number of types increases. Specifically, the number of bag categories grows significantly, and the resulting case analysis becomes increasingly intricate. Therefore, this paper focuses exclusively on scenarios involving two or three agent types. Notably, barring the MMS value computation for each type through a PTAS \cite{woeginger1997polynomial}, all our algorithms run in polynomial time. We will provide formal definitions, notations, and preliminaries in \cref{sec:prelim}.

\subsection{Additional Related Work}
Given the intense study of the MMS fairness notion and its special cases and variants, we focus here on closely related work.

Computing the MMS value of an agent is NP-hard. However, a Polynomial Time Approximation Scheme (PTAS) exists for this computation~\cite{woeginger1997polynomial}. As noted earlier, MMS allocations are not guaranteed to exist for more than two agents with two distinct types~\cite{procaccia2014fair,FeigeST21}. This non-existence has motivated the exploration of approximate MMS allocations to ensure their existence. A series of works has established the current best approximation factor of $\frac34+\frac{3}{3836}$ for all instances~\cite{akrami2023breaking34barrierapproximate}.

Several works have examined special cases. For example, when $m \leq n+3$, where $m$ is the total number of goods, an MMS allocation always exists~\cite{amanatidis2017approximation}. This bound was later improved to $m\leq n+5$~\cite{FeigeST21}. For $n=2$, MMS allocations always exist~\cite{bouveret2016characterizing}. For $n=3$, the MMS approximation was improved from $\frac{3}{4}$ \cite{procaccia2014fair} to $\frac{7}{8}$ \cite{amanatidis2017approximation} to $\frac{8}{9}$ \cite{gourves2019maximin}, and then to $\frac{11}{12}$~\cite{feige2022improved}. For $n=4$, $\frac{4}{5}$-MMS allocations exist~\cite{ghodsi2018fair}. For $n \geq 5$, the best known factor is the general $\frac{3}{4} + \frac{3}{3836}$ bound~\cite{akrami2023breaking34barrierapproximate}. For the special case of (non-personalized) bivalued instances, MMS allocations are known to exist~\cite{Feigebivalued}.

\paragraph{Chores} The MMS notion can naturally be defined for the fair division of chores, where items provide negative value. As with goods, MMS allocations for chores do not always exist~\cite{Aziz2017,FeigeST21}. However, substantial research on approximate MMS allocations for chores has yielded significant results. Notable works~\cite{Aziz2017,barman2020approximation,HuangL21,huang2023reduction} have led to the existence of $\frac{13}{11}$-MMS allocations. For three agents, $\frac{19}{18}$-MMS allocations exist~\cite{feige2022improved}, and for factored instances, MMS allocations are guaranteed~\cite{GargHS24}. Additionally, for the special case of personalized bivalued instances, $\frac{15}{13}$-MMS allocations exist~\cite{GargHS24}.

\section{Preliminaries}\label{sec:prelim}
In this paper, we primarily follow the notations used in previous work~\cite{simple} to be consistent with the literature. However, beginning with \cref{def:oni}, we introduce new notations that are specific to our work.

For any positive integer $n$, let $[n] = \{1, 2, \ldots, n\}$, and for two positive integers $i,j$ where $i<j$, let $[i,j] = \{i, i+1, \cdots,j\}$. A fair division instance $\mathcal{I} = (N, M, \mathcal{V})$ consists of a set of agents $N = [n]$, a set of goods $M = [m]$, and a vector of valuation functions $\mathcal{V} = (v_1, v_2, \ldots, v_n)$. Each valuation function $v_i: 2^M \to \mathbb{R}_{\geq 0}$ represents agent $i$'s preference over subsets of goods. We assume additive valuations, so for all $S \subseteq M$, $v_i(S) = \sum_{g \in S} v_i(\{g\})$. For ease of notation, for all $g \in M$, we use $v_i(g)$ or $v_{i,g}$ instead of $v_i(\{g\})$. Likewise, throughout this paper, we use the notation \( v_i(g, g') \) as a shorthand for \( v_i(\{g, g'\}) \).

For a set $S$ of goods and a positive integer $d$, let $\Pi_d(S)$ be the set of all partitions of $S$ into $d$ bundles. The maximin share (MMS) value for a valuation $v$ is defined as:
\[
\MMS_v^d(S) = \max_{P \in \Pi_d(S)} \min_{j=1}^d v(P_j).
\]
When the instance $\mathcal{I} = (N, M, \mathcal{V})$ is clear from the context, we use the notation $\MMS_{v_i}^n(M)$ as $\MMS_i(\mathcal{I})$ or $\MMS_i$. For agent $i$, an MMS partition $P^i = (P^i_1, P^i_2, \ldots, P^i_n)$ satisfies $\MMS_i = \min_{j \in [n]} v_i(P^i_j)$. An allocation $X$ is MMS if $v_i(X_i) \geq \MMS_i$ for all $i \in N$. More generally, for $0 < \alpha \leq 1$, an allocation $X$ is $\alpha$-MMS if $v_i(X_i) \geq \alpha \cdot \MMS_i$ for all $i \in N$.

For a list of items \( H \), let $|H|$ denote the number of items in the list. Given positive integers \( i\in[|H|] \) and \( j\in [|H|] \) with \( i < j \), let \( H[i: j] \) denote the subset of items in \( H \) from the \( i \)-th to the \( j \)-th position, inclusive. Also, let \( H[-1] \) represent the last item in the list.

\begin{definition}[Ordered instance]
An instance $\mathcal{I} = (N, M, \mathcal{V})$ is ordered if there exists a permutation of the goods $(g_1, g_2, \ldots, g_m)$ such that for all agents $i \in N$, $v_i(g_1) \geq v_i(g_2) \geq \ldots \geq v_i(g_m)$. For any fair division instance $\mathcal{I} = ([n], [m], \mathcal{V})$, the transformation $\mathtt{order}(\mathcal{I})$ produces an ordered instance $\mathcal{I}' = ([n], [m], \mathcal{V}')$, where for each $i \in [n]$ and $j \in [m]$, $v'_i(j)$ is the $j$th largest value in the multiset $\{v_i(g) \mid g \in [m]\}$.
\end{definition}

\begin{theorem}[Theorem 2 in \protect\cite{barman2020approximation}]\label{order-preserves}
    Given an instance $\I$ and an $\alpha$-MMS allocation of $\mathtt{order}(\I)$, one can compute an $\alpha$-MMS allocation of $\I$ in polynomial time.
\end{theorem}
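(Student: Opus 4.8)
The plan is to exhibit a polynomial-time ``unordering'' procedure driven by a picking sequence read off from the given ordered allocation. The first thing I would record is that the MMS value of a single agent depends only on the \emph{multiset} of item values, not on the labels of the goods: for any partition $(P_1,\dots,P_n)$ of $[m]$ the quantity $\min_j v_i(P_j)$ is determined solely by how the values $\{v_i(g):g\in[m]\}$ get distributed among the $n$ bundles. Since $\mathtt{order}(\I)$ keeps, for each agent $i$, the same multiset $\{v_i(g):g\in[m]\}$ and merely reassigns these values to the goods $1,\dots,m$ in nonincreasing order, it follows that $\MMS_{v'_i}^n([m]) = \MMS_i(\I)$ for every $i$. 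Hence an $\alpha$-MMS allocation $X'$ of $\mathtt{order}(\I)$ satisfies $v'_i(X'_i)\ge \alpha\,\MMS_i(\I)$ for all $i$, so it suffices to build an allocation $Y$ of $\I$ with $v_i(Y_i)\ge v'_i(X'_i)$.

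To construct $Y$, write the goods of $\mathtt{order}(\I)$ as $1,2,\dots,m$ in the common nonincreasing order, and let $\sigma_t\in[n]$ be the agent owning good $t$ under $X'$. Now process the original goods in $m$ rounds: in round $t$, agent $\sigma_t$ takes their most valued good among those still unallocated (ties broken arbitrarily). This clearly runs in polynomial time, and every good ends up allocated.

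The crux is the claim that if agent $i=\sigma_t$ picks good $g$ in round $t$, then $v_i(g)\ge v'_i(t)$, the $t$-th largest value in the multiset $\{v_i(h):h\in[m]\}$. Indeed, before round $t$ exactly $t-1$ goods have been removed, so among the $t$ goods that are top-ranked in agent $i$'s \emph{own} preference order at least one is still available; since $g$ is $i$'s favorite available good, $v_i(g)$ is at least the value of that good, which is at least the $t$-th largest value $v'_i(t)$. Summing over the rounds assigned to $i$, and using that $\{t:\sigma_t=i\}$ is exactly the set of goods of $\mathtt{order}(\I)$ that $X'$ gives to $i$,
\[
v_i(Y_i)\ \ge\ \sum_{t:\,\sigma_t=i} v'_i(t)\ =\ \sum_{g\in X'_i} v'_i(g)\ =\ v'_i(X'_i)\ \ge\ \alpha\,\MMS_i(\I),
\]
so $Y$ is an $\alpha$-MMS allocation of $\I$.

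The only genuinely delicate point is the counting step inside the claim: one must argue carefully that ``$t-1$ goods removed'' forces the good picked in round $t$ to have rank at most $t$ in the picker's own preference order (a pigeonhole over that agent's top-$t$ goods), and that this rank bound converts to the value lower bound $v'_i(t)$. A minor but necessary point worth spelling out is the multiset-invariance of the MMS value, which is precisely what lets us identify $\MMS_{v'_i}^n([m])$ with $\MMS_i(\I)$; everything else is bookkeeping.
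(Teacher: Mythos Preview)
Your argument is correct and is precisely the standard picking-sequence construction from the cited reference: read off a sequence $\sigma_1,\dots,\sigma_m$ from the ordered allocation, let each agent in turn grab their favorite remaining original good, and use the pigeonhole bound to show the $t$-th pick has value at least $v'_{\sigma_t}(t)$. The present paper does not give its own proof of this statement---it simply quotes it as Theorem~2 of \cite{barman2020approximation}---so there is nothing further to compare; your write-up matches the original source's approach.
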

\Cref{order-preserves} implies the transformation $\mathtt{order}$ is $\alpha$-\emph{MMS-preserving}. For ordered instances $\mathcal{I}$, we assume without loss of generality that $v_i(1) \geq v_i(2) \geq \ldots \geq v_i(m)$ for all $i \in [n]$.

\begin{definition}[Normalized instance]
An instance $\mathcal{I} = (N, M, \mathcal{V})$ is \emph{normalized} if for all agents $i \in N$ and all bundles $P^i_j$ in an MMS partition of $i$, $v_i(P^i_j) = 1$. For any fair division instance $\mathcal{I}$, the transformation $\mathtt{normalize}(\mathcal{I})$ computes a normalized instance $\mathcal{I}' = (N, M, \mathcal{V}')$ by determining the MMS partition $P^i$ for each agent $i \in N$ and rescaling valuations: for all $j \in [n]$ and $g \in P^i_j$, set $v'_{i,g} = v_{i,g} / v_i(P^i_j)$. 
\end{definition}

\begin{lemma}[Lemma 4 in \protect\cite{simple}]\label{thm:normalize}
Let $\I' = ([n], [m], \mathcal{V}') = \mathtt{normalize}(\I = ([n], [m], \mathcal{V}))$. Then for any allocation $A$,
$v_i(A_i) \ge v'_i(A_i)\cdot\MMS^n_{v_i}$ for all $i \in N$.
\end{lemma}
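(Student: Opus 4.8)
The plan is to prove the inequality good-by-good and then sum, exploiting additivity. Fix an agent $i \in N$, and let $P^i = (P^i_1, \ldots, P^i_n)$ denote the MMS partition of $i$ that $\mathtt{normalize}$ uses to rescale $v_i$; write $\MMS_i \defeq \MMS^n_{v_i}(M)$ for brevity. The one structural fact I need is that every bundle of an MMS partition is worth at least the MMS value, i.e. $v_i(P^i_j) \ge \MMS_i$ for every $j \in [n]$. This is immediate from the definitions: by definition of an MMS partition, $\min_{j \in [n]} v_i(P^i_j) = \MMS_i$, so in particular each $v_i(P^i_j)$ is at least this minimum.

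Next I would turn this into a per-good bound. For a good $g \in M$, let $j(g)$ be the unique index with $g \in P^i_{j(g)}$ (the $P^i_j$ partition $M$), so that by definition of $\mathtt{normalize}$ we have $v'_{i,g} = v_{i,g}/v_i(P^i_{j(g)})$. Multiplying by $\MMS_i$ and using $0 < \MMS_i \le v_i(P^i_{j(g)})$ together with $v_{i,g} \ge 0$ gives $v'_{i,g}\cdot \MMS_i = v_{i,g}\cdot \MMS_i / v_i(P^i_{j(g)}) \le v_{i,g}$. (If $\MMS_i = 0$, the claimed inequality $v_i(A_i) \ge v'_i(A_i)\cdot \MMS_i$ holds trivially, since its right-hand side is $0$ while $v_i \ge 0$; so this degenerate case can be set aside.)

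Finally, fix any allocation $A$ and any $i \in N$. Since $A_i \subseteq M$ and both $v_i$ and $v'_i$ are additive, summing the per-good bound over $g \in A_i$ yields $v'_i(A_i)\cdot \MMS_i = \sum_{g \in A_i} v'_{i,g}\cdot \MMS_i \le \sum_{g \in A_i} v_{i,g} = v_i(A_i)$, which is exactly the statement. I do not expect a genuine obstacle here: the only two points that merit a moment's care are the structural observation that each MMS bundle has value at least $\MMS_i$ (so that the rescaling factor $\MMS_i/v_i(P^i_{j(g)})$ never exceeds $1$) and the edge case $\MMS_i = 0$ noted above.
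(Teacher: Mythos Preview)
Your argument is correct: the key observation that $v_i(P^i_j) \ge \MMS_i$ for every bundle of the MMS partition makes each rescaling factor at most $1$, and summing the per-good inequality over $A_i$ gives the claim. The paper itself does not prove this lemma; it is quoted directly from \cite{simple} without a proof, so there is nothing further to compare against.
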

Note that in a normalized instance, the MMS value for every agent is 1. Consequently, \Cref{thm:normalize} establishes that $\mathtt{normalize}$ is $\alpha$-MMS-preserving. This means that if an allocation \( A \) is an $\alpha$-MMS allocation for the normalized instance \( \mathcal{I}' \), then \( A \) is also an $\alpha$-MMS allocation for the original instance \( \mathcal{I} \).
Note that in a normalized instance, the total value of all items satisfies \( v'_i([m]) = \sum_{j\in[n]} v'_i(P^i_j) = n \) for every agent \( i \in [n] \). Also, for each agent $i$ and for every MMS partition $Q$ of agent $i$, we have $v'_i(Q_j) = 1$ $\forall j \in [n]$.

Given an instance $\mathcal{I}$, a reduction rule $R(\mathcal{I})$ allocates a subset $S \subseteq M$ of goods to an agent $i$ and produces a new instance $\mathcal{I}' = (N \setminus \{i\}, M \setminus S, \mathcal{V})$.

\begin{definition}[Valid reductions]
A reduction rule $R$ is a \emph{valid $\alpha$-reduction} if, for $R(\mathcal{I}) = (N', M', \mathcal{V})$, where $\{i\} = N \setminus N'$ and $S = M \setminus M'$:
\begin{enumerate}
    \item $v_i(S) \geq \alpha \cdot \mms^{|N|}_{v_i}(M)$, and
    \item $\mms^{|N|-1}_{v_j}(M') \geq \mms^{|N|}_{v_j}(M)$ for all $j \in N'$.
\end{enumerate}
\end{definition}

If $R$ is a valid $\alpha$-reduction and an $\alpha$-MMS allocation $A$ exists for $R(\mathcal{I})$, then an $\alpha$-MMS allocation for $\mathcal{I}$ can be constructed by allocating $S$ to $i$ and distributing the remaining goods as in $A$. We now describe three standard transformations, known as \emph{reduction rules}, and demonstrate their validity.

\begin{definition}[Reduction rules]
\label{defn:redn-rules}
Consider an ordered fair division instance $(N, M, v)$, where $M \defeq \{g_1, \ldots, g_{|M|}\}$
and $v_{i,g_1} \ge \ldots \ge v_{i,g_{|M|}}$ for every agent $i$. Define
\begin{tightenum}
\item $S_1 \defeq \{g_1\}$.
\item $S_2 \defeq \{g_{|N|}, g_{|N|+1}\}$ if $|M| \ge |N|+1$, else $S_2 \defeq \emptyset$.
\item $S_3 \defeq \{g_{2|N|-1}, g_{2|N|}, g_{2|N|+1}\}$ if $|M| \ge 2|N|+1$, else $S_3 \defeq \emptyset$.
\end{tightenum}
Reduction rule $R_k(\alpha)$: If $v_i(S_k) \ge \alpha\cdot\MMS_i$ for some agent $i$,
then give $S_k$ to $i$. A fair division instance is called $R_k(\alpha)$-\emph{irreducible} if $R_k(\alpha)$
cannot be applied, i.e., $v_i(S_k) < \alpha\cdot\MMS_i$ for every agent $i$.
An instance is called \emph{totally}-$\alpha$-\emph{irreducible} if it is
$R_k(\alpha)$-irreducible for all $k \in [3]$.
\end{definition}
\begin{definition}
The $\reduce_\alpha$ operation takes an ordered fair division instance as input and iteratively applies the reduction rules $R_1(\alpha)$, $R_2(\alpha)$, and $R_3(\alpha)$ in any order until the instance becomes totally-$\alpha$-irreducible.
\end{definition}

\begin{lemma}[Lemma~3.1 in \protect\cite{garg2021improved}]\label{lem:reduce}
For an ordered instance and for $0\leq \alpha \le 1$,
$R_1(\alpha)$, $R_2(\alpha)$, and $R_3(\alpha)$ are valid $\alpha$-reductions.
\end{lemma}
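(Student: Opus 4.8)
The plan is to verify the two conditions in the definition of a valid $\alpha$-reduction for each rule $R_k(\alpha)$ separately. Condition~1 is immediate from the definition of the reduction rule: $R_k(\alpha)$ gives $S_k$ to agent $i$ only when $v_i(S_k) \ge \alpha\cdot\MMS_i = \alpha\cdot\MMS^{|N|}_{v_i}(M)$, so there is nothing to prove there. All the work is in Condition~2: after removing the agent $i$ and the goods $S_k$, every remaining agent $j$ must still have $\MMS^{|N|-1}_{v_j}(M\setminus S_k) \ge \MMS^{|N|}_{v_j}(M)$.

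For each $k$, I would fix a remaining agent $j$ and an $\MMS$-partition $P = (P_1,\dots,P_{|N|})$ of $M$ for $v_j$ into $|N|$ bundles, each of value $\ge \MMS^{|N|}_{v_j}(M)$; the goal is to build from it a partition of $M\setminus S_k$ into $|N|-1$ bundles each of value $\ge \MMS^{|N|}_{v_j}(M)$. For $R_1$ ($S_1=\{g_1\}$): $g_1$ lies in some bundle $P_t$; discard the \emph{entire} bundle $P_t$ (it contains $g_1$ and possibly more), which throws away at most one whole bundle's worth, leaving $|N|-1$ bundles covering $M\setminus P_t \supseteq M\setminus\{g_1\}$; reassign the extra goods $P_t\setminus\{g_1\}$ arbitrarily — each of the surviving $|N|-1$ bundles still has value $\ge \MMS^{|N|}_{v_j}(M)$. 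For $R_2$ ($S_2=\{g_{|N|},g_{|N|+1}\}$): by pigeonhole, among the $|N|$ bundles of $P$ at least one bundle, say $P_t$, contains no good from the top $|N|-1$ goods $\{g_1,\dots,g_{|N|-1}\}$ (since those $|N|-1$ goods occupy at most $|N|-1$ of the $|N|$ bundles); hence every good in $P_t$ has index $\ge |N|$, so $v_j$-value at most $v_{j,g_{|N|}}$ each — wait, I actually want the complementary direction: I want to argue there are two bundles whose goods are all "small", then merge appropriately. The cleaner route: show at least two of the bundles avoid $\{g_1,\dots,g_{|N|-2}\}$, pick those two bundles $P_s, P_t$; the two largest goods among $P_s\cup P_t$ have indices $\ge |N|-1$, i.e. values $\le v_{j,g_{|N|-1}} \le \dots$, and one checks $v_{j,g_{|N|}}+v_{j,g_{|N|+1}}$ can be spared. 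Concretely, replace the two bundles $P_s,P_t$ by the single bundle $(P_s\cup P_t)\setminus S_2$, after arguing $v_j(S_2) \le \min(v_j(P_s),v_j(P_t))$ using the index/sortedness bounds; this yields $|N|-1$ bundles partitioning $M\setminus S_2$, each of value $\ge \MMS^{|N|}_{v_j}(M)$. For $R_3$ ($S_3=\{g_{2|N|-1},g_{2|N|},g_{2|N|+1}\}$): analogously, find \emph{three} bundles of $P$ none of which contains any of the top $2|N|-2$ goods $\{g_1,\dots,g_{2|N|-2}\}$ (pigeonhole: $2|N|-2$ goods hit at most $2|N|-2$ bundles, leaving $\ge 2$... — here I need more care, since $2|N|-2$ can exceed $|N|$; instead count: in an $\MMS$-partition for an ordered instance one may assume each of the top $2|N|$ goods is "spread out" so that bundles containing a second-largest-type good are identifiable), then merge three such low-value bundles into two, discarding $S_3$, again using sortedness to bound $v_j(S_3)$ against the values of the merged bundles.

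The main obstacle is the pigeonhole/merging argument for $R_2$ and especially $R_3$: one must carefully identify enough bundles of the $\MMS$-partition whose total value exceeds the value of $S_k$ so that after removing $S_k$ and consolidating, the count drops by exactly one while no surviving bundle falls below $\MMS^{|N|}_{v_j}(M)$. The key inequalities are the sortedness facts $v_{j,g_1}\ge v_{j,g_2}\ge\cdots$, which let one bound the value of the "small" goods being deleted, and a counting lemma asserting that the $\MMS$-partition can be taken so that high-indexed goods do not cluster — this is where one invokes the standard structural observation (for ordered instances) that, up to relabeling, good $g_\ell$ sits in bundle $P_\ell$ for $\ell\le|N|$ and subsequent goods can be assumed to fill bundles in a controlled pattern. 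Since this is Lemma~3.1 of \cite{garg2021improved}, I would follow that argument, and the plan above is the natural route: Condition~1 is free, and Condition~2 reduces to "delete a bundle" for $R_1$ and "merge two (resp.\ three) cheap bundles after removing two (resp.\ three) small goods" for $R_2$ (resp.\ $R_3$), with all numerical checks powered by the sorted-values inequalities.
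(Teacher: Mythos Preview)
The paper does not prove this lemma at all; it is simply cited as Lemma~3.1 of \cite{garg2021improved}, so there is no in-paper proof to compare against. Your high-level plan (Condition~1 is immediate from the rule's trigger; Condition~2 is handled by surgery on an MMS partition of the surviving agent) is indeed the standard route, and your argument for $R_1$ is correct.

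For $R_2$ and $R_3$, however, your sketch has real gaps. For $R_2$ you propose picking two bundles $P_s,P_t$ that avoid $\{g_1,\ldots,g_{|N|-2}\}$ and replacing them by $(P_s\cup P_t)\setminus S_2$. But nothing forces $S_2=\{g_{|N|},g_{|N|+1}\}$ to lie inside $P_s\cup P_t$, so the resulting $|N|-1$ sets need not partition $M\setminus S_2$; and the inequality you state, $v_j(S_2)\le\min(v_j(P_s),v_j(P_t))$, is not what the sortedness gives you (the two largest goods in $P_s\cup P_t$ having index $\ge |N|-1$ does not bound $v_j(g_{|N|})+v_j(g_{|N|+1})$ against either bundle's value). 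For $R_3$ you correctly notice that the pigeonhole ``$2|N|-2$ goods miss some bundles'' fails once $2|N|-2\ge |N|$, and you never recover from that.

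The argument in \cite{garg2021improved} is organized differently: one identifies a \emph{single} bundle $P_t$ that contains none of $\{g_1,\ldots,g_{|N|-1}\}$ (for $R_2$) or at most one of $\{g_1,\ldots,g_{2|N|-2}\}$ (for $R_3$), then does a case analysis on which elements of $S_k$ already lie in $P_t$. The key point is that the goods of $P_t$ outside the top block have indices at least $|N|$ (resp.\ $2|N|-1$), so their values are dominated termwise by the goods in $S_k$; this lets one delete $P_t$, remove $S_k$ from the surviving bundles, and redistribute $P_t\setminus S_k$ back to exactly compensate the bundles that lost a good of $S_k$. A separate easy sub-case handles $|P_t|$ being too small (then many singletons already meet $\mu$). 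Your ``merge two/three cheap bundles'' picture is the right intuition but not the right mechanism; following the cited proof's single-bundle-with-swap argument closes the gaps.
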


\begin{lemma}[Lemmas 2 and 3 in \protect\cite{simple}]
\label{lem:vr-upper-bounds}
Let $\Ical \defeq ([n], [m], \mathcal{V})$ be an ordered instance where
$v_{i,1} \ge \ldots \ge v_{i,m}$ for each agent $i$.
If $\I$ is totally-$\alpha$-irreducable, then $m\geq 2n$, and
for each agent $i$ and every good $j > (k-1)n$, we have $v_{i,j} < \alpha\cdot\MMS_i/k$.
\end{lemma}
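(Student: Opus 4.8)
The plan is to read off both conclusions directly from the definition of the three reduction rules and additivity of the valuations, using the fact that in an ordered instance the goods are sorted in nonincreasing order for every agent simultaneously. First consider the claim $m \ge 2n$. Since $\I$ is $R_1(\alpha)$-irreducible, $v_{i,1} < \alpha\cdot\MMS_i$ for every $i$; more importantly, consider the MMS partition $P^i$ of some agent $i$ into $n$ bundles, each of value at least $\MMS_i$. If $m \le 2n-1$, then by pigeonhole at least one bundle $P^i_j$ contains at most one good, and since all goods have value $v_{i,g} \le v_{i,1}$, that bundle has value at most $v_{i,1}$; but a bundle of an MMS partition has value $\ge \MMS_i > v_{i,1} > v_{i,1}$ is contradictory only if we also know $v_{i,1} \ge \MMS_i$ would be needed — so the cleaner route is: a singleton (or empty) MMS bundle forces $v_{i,1} \ge \MMS_i$, contradicting $R_1(\alpha)$-irreducibility with $\alpha \le 1$. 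Actually the sharpest argument handles $R_2$: if $m \le 2n$ but $m \ge n+1$, one shows using $R_2(\alpha)$-irreducibility that $v_i(\{g_n,g_{n+1}\}) < \alpha\MMS_i \le \MMS_i$, yet summing the $n$ bundles of the MMS partition and a counting argument on how few goods can cover $n$ bundles each of value $\ge \MMS_i$ yields a contradiction. I would present whichever of these two counting arguments is shortest; both are elementary.

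For the second conclusion, fix an agent $i$ and a good index $j > (k-1)n$; I want $v_{i,j} < \alpha\cdot\MMS_i/k$. The key observation is that because the instance is ordered, the $k$ goods $g_1, g_2, \ldots$ at positions that are the largest $k$ "representatives" relevant to $R_1, R_2, R_3$ are exactly the goods whose removal the reduction rules consider, and $R_k(\alpha)$-irreducibility bounds the total value of a specific size-$k$ prefix-type set. Concretely, for $k=1$: $R_1(\alpha)$-irreducibility gives $v_{i,1} < \alpha\MMS_i$, and since $v_{i,j} \le v_{i,1}$ for all $j > 0$, we get $v_{i,j} < \alpha\MMS_i$ for $j > 0$; but we want the stronger $v_{i,j} < \alpha\MMS_i/1 = \alpha\MMS_i$ only for $j \ge 1$, which is immediate. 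For $k=2$: $R_2(\alpha)$-irreducibility gives $v_i(\{g_n, g_{n+1}\}) < \alpha\MMS_i$; since for any $j > n$ we have $v_{i,j} \le v_{i,n+1} \le v_{i,n}$, so $2 v_{i,j} \le v_{i,n} + v_{i,n+1} < \alpha\MMS_i$, hence $v_{i,j} < \alpha\MMS_i/2$. For $k=3$: $R_3(\alpha)$-irreducibility gives $v_i(\{g_{2n-1}, g_{2n}, g_{2n+1}\}) < \alpha\MMS_i$, and for any $j > 2n$ we have $v_{i,j} \le v_{i,2n+1} \le v_{i,2n} \le v_{i,2n-1}$, so $3 v_{i,j} \le v_{i,2n-1} + v_{i,2n} + v_{i,2n+1} < \alpha\MMS_i$, giving $v_{i,j} < \alpha\MMS_i/3$. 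So the bound is really just "$k$ copies of the $j$-th good are dominated by the $k$-element set that rule $R_k$ examines."

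I expect no serious obstacle here; the lemma is a bookkeeping consequence of the ordering and of the precise index sets $S_1, S_2, S_3$. The one subtlety worth stating carefully is that $R_k(\alpha)$-irreducibility is only guaranteed to be nonvacuous when the relevant set $S_k$ is nonempty, i.e.\ when $m$ is large enough ($m \ge n+1$ for $R_2$, $m \ge 2n+1$ for $R_3$); this is exactly why the $m \ge 2n$ claim must be established first, so that $S_2 \ne \emptyset$ and so that the statement "$j > (k-1)n$" is only invoked for indices $j$ that actually exist. So the order of operations is: (1) prove $m \ge 2n$ by the pigeonhole/counting argument on the MMS partition; (2) for each $k \in \{1,2,3\}$, combine $R_k(\alpha)$-irreducibility with monotonicity of the sorted values to bound $v_{i,j}$ for $j > (k-1)n$, handling the boundary cases where $S_k = \emptyset$ (which cannot occur once $m \ge 2n$, except that $S_3$ needs $m \ge 2n+1$ — but then there is no index $j > 2n$ to bound anyway).
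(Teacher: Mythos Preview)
The paper does not supply its own proof of this lemma; it is quoted verbatim as Lemmas~2 and~3 of~\cite{simple} and used as a black box. So there is nothing in the present paper to compare your argument against.

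On its own merits, your argument is correct and is the standard one. For the bound $v_{i,j} < \alpha\MMS_i/k$, your ``$k$ copies of good $j$ are dominated by the set $S_k$'' observation is exactly right: for $j > (k-1)n$ the ordering gives $k\cdot v_{i,j} \le v_i(S_k) < \alpha\MMS_i$. You also correctly handle the boundary case $m = 2n$, where $S_3 = \emptyset$ but there is no index $j > 2n$ to bound.

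The one place where your write-up needs tightening is the $m \ge 2n$ claim. Your first argument is already complete and you should commit to it: if $m \le 2n-1$, pigeonhole gives some MMS bundle $P^i_j$ with at most one good, hence $v_i(P^i_j) \le v_{i,1}$; since $v_i(P^i_j) \ge \MMS_i$ we get $v_{i,1} \ge \MMS_i \ge \alpha\MMS_i$, contradicting $R_1(\alpha)$-irreducibility. (The empty-bundle subcase gives $\MMS_i = 0$ and then $v_{i,1} \ge 0 = \alpha\MMS_i$, the same contradiction.) Your detour through an $R_2$-based counting argument is unnecessary and left unfinished; drop it.
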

\begin{lemma}[Lemma 6 in \protect\cite{simple}]
    \label{lem:upper-13}
    Let $([n], [m], \mathcal{V})$ be an ordered and normalized fair division instance.
    For all $k \in [n]$ and agent $i \in [n]$, if $v_i(k) + v_i(2n-k+1) > 1$,
    then $v_i(2n-k+1) \leq 1/3$ and $v_i(k) > 2/3$.
\end{lemma}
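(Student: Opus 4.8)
The plan is to first observe that the two conclusions are not independent: once I establish $v_i(2n-k+1) \le 1/3$, the hypothesis $v_i(k) + v_i(2n-k+1) > 1$ immediately forces $v_i(k) > 1 - 1/3 = 2/3$. So the whole content is the single inequality $v_i(2n-k+1) \le 1/3$, which I would prove by contradiction.

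Fix the agent $i$ and, using normalization, fix an MMS partition $P = (P_1, \dots, P_n)$ of the goods with $v_i(P_j) = 1$ for every $j$. Suppose for contradiction that $v_i(2n-k+1) > 1/3$. Since the instance is ordered, each of the $2n-k+1$ goods $1, 2, \dots, 2n-k+1$ then has value strictly more than $1/3$, so no bundle of value $1$ can contain three of them; hence every bundle contains at most two goods from $[2n-k+1]$. Writing $a, b, c$ for the number of bundles that contain exactly $2$, exactly $1$, and exactly $0$ of these goods, we get $a+b+c = n$ and $2a + b = 2n-k+1$, from which $a \ge n-k+1$ and $b \le k-1$ (using $k \le n$, which also ensures $k \le 2n-k+1$, so that good $k$ itself lies in $[2n-k+1]$).

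The crucial step is to locate one bundle containing two goods $p < q$ of $[2n-k+1]$ with $p \le k$. If no such bundle existed, then none of the goods $1, \dots, k$ could lie in a two-good bundle (its smaller-indexed good would then be $\le k$), nor could two of them share a bundle; so goods $1, \dots, k$ would occupy $k$ distinct one-good bundles, giving $b \ge k$ and contradicting $b \le k-1$. Given such a bundle $B$, monotonicity of the $v_i$-values gives $1 = v_i(B) \ge v_i(p) + v_i(q) \ge v_i(k) + v_i(2n-k+1)$, contradicting the hypothesis $v_i(k)+v_i(2n-k+1) > 1$. Therefore $v_i(2n-k+1) \le 1/3$, and $v_i(k) > 2/3$ follows as above.

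I expect the only delicate point to be the pigeonhole bookkeeping: getting $a \ge n-k+1$ and $b \le k-1$ simultaneously sharp enough to force a good of index $\le k$ into a doubled bundle, and sanity-checking the extremes $k=1$ (where every bundle must be doubled) and $k=n$. It is also worth noting the degenerate case $m < 2n-k+1$, where the statement is vacuous because $v_i$ of any good is at most $v_i(1) \le 1$ in a normalized instance. Everything else uses only that the instance is ordered and normalized.
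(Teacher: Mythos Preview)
Your proof is correct. The paper does not actually prove this lemma; it is merely cited as Lemma~6 of~\cite{simple}, so there is no in-paper argument to compare against. Your pigeonhole approach---bounding the number of MMS bundles that can contain two goods from $[2n-k+1]$ and then forcing some good of index $\le k$ into such a doubled bundle---is the standard argument for this statement and matches what is done in the cited source.
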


\begin{definition}\label{def:oni}
    For a fair division instance $\I$, define $\Icalhat_\alpha \defeq \mathtt{order} (\normalizeHyp(\reduceHyp_{\alpha}(\mathtt{order}(\Ical))))$ as the $\text{ONI}_\alpha$ instance of $\I$.
\end{definition}

\label{sec:simple:oni-reduce}
\begin{restatable}{lemma}{thmonireduce}
\label{thm:oni-reduce}
    Let $\Ical$ be a fair division instance, and $\Icalhat_\alpha$ be the $\text{ONI}_\alpha$ instance of $\I$. $\Icalhat_\alpha$ is ordered, normalized, and totally-$\alpha$-irreducible. Furthermore, the transformation of $\Ical$ to $\Icalhat_\alpha$ is $\alpha$-MMS-preserving,
    i.e., a $\alpha$-MMS allocation of $\Icalhat_\alpha$ can be used to obtain
    a $\alpha$-MMS allocation of $\Ical$.
\end{restatable}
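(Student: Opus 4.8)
The plan is to peel off the four maps in $\Icalhat_\alpha = \mathtt{order}(\normalize(\reduce_\alpha(\mathtt{order}(\Ical))))$ one at a time. \emph{Orderedness} is immediate, since $\Icalhat_\alpha$ is an output of $\mathtt{order}$, which by definition returns an ordered instance. For \emph{$\alpha$-MMS-preservation} I would use that each of the four maps is individually $\alpha$-MMS-preserving and that this property is closed under composition: $\mathtt{order}$ by \Cref{order-preserves}; $\normalize$ by \Cref{thm:normalize} together with the fact that a normalized instance has MMS value $1$; and $\reduce_\alpha$ because it is a finite iteration (at most $n$ steps, one agent deleted per step) of the rules $R_k(\alpha)$, each of which is a valid $\alpha$-reduction on an ordered instance by \Cref{lem:reduce}, hence $\alpha$-MMS-preserving by the remark following the definition of valid reductions. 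I would also record that this iteration is legitimate: $\reduce_\alpha$ starts from the ordered instance $\mathtt{order}(\Ical)$, and deleting goods and an agent from an ordered instance leaves it ordered, so every rule in the iteration does act on an ordered instance. Pulling an $\alpha$-MMS allocation of $\Icalhat_\alpha$ back successively through $\mathtt{order}$, $\normalize$, $\reduce_\alpha$, $\mathtt{order}$ then yields one for $\Ical$.

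For \emph{normalizedness}, let $\mathcal{J} := \reduce_\alpha(\mathtt{order}(\Ical))$, say on $n'$ agents and $m'$ goods; then $\normalize(\mathcal{J})$ is normalized by construction, so in it every agent has MMS value $1$ and total value $n'$. It remains to check that the outermost $\mathtt{order}$ preserves this. The point is that $\mathtt{order}$ leaves each agent's multiset of item values unchanged, hence changes neither the total value nor the maximin value (the latter depends only on that multiset). So in $\Icalhat_\alpha$ every agent still has MMS value $1$ and total value $n'$; since any MMS partition then consists of $n'$ bundles, each of value at least $1$, summing to $n'$, every bundle has value exactly $1$, which is precisely the definition of normalized.

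The main step, and where I expect the real work, is showing $\Icalhat_\alpha$ is \emph{totally-$\alpha$-irreducible}: $\reduce_\alpha$ makes the instance totally-$\alpha$-irreducible, but $\normalize$ (which rescales the distinct MMS-bundles of an agent by distinct factors, and so need not even preserve orderedness) and a second $\mathtt{order}$ come afterwards and could a priori reintroduce a reducible good. I would control both steps by a single pointwise bound. Let $u_{i,1}\ge u_{i,2}\ge\cdots$ be the sorted values of agent $i$ in $\mathcal{J}$ and let $\mu_i$ be agent $i$'s MMS value in $\mathcal{J}$; by \Cref{lem:vr-upper-bounds} we have $m'\ge 2n'$, and total-$\alpha$-irreducibility of $\mathcal{J}$ gives $u_{i,1}<\alpha\mu_i$, $u_{i,n'}+u_{i,n'+1}<\alpha\mu_i$, and, when $m'\ge 2n'+1$, $u_{i,2n'-1}+u_{i,2n'}+u_{i,2n'+1}<\alpha\mu_i$. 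In $\normalize(\mathcal{J})$, a good in position $g$ lying in bundle $P^i_\ell$ of the MMS partition chosen for agent $i$ has value $u_{i,g}/c_{i,\ell}$ with $c_{i,\ell}:=v_i(P^i_\ell)\ge\mu_i$ (evaluated in $\mathcal{J}$), hence its value is at most $u_{i,g}/\mu_i$, and this holds for every good. Since $a_g\le b_g$ for all $g$ implies the $j$-th largest entry of $(a_g)$ is at most the $j$-th largest entry of $(b_g)$, and $(u_{i,g}/\mu_i)$ is already decreasing, the $j$-th largest value of agent $i$ in $\Icalhat_\alpha = \mathtt{order}(\normalize(\mathcal{J}))$ is at most $u_{i,j}/\mu_i$. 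As $\Icalhat_\alpha$ has the same agents and goods as $\mathcal{J}$, the sets $S_1, S_2, S_3$ occupy the same positions (with $S_3=\emptyset$ exactly when $m'<2n'+1$, in which case $R_3(\alpha)$-irreducibility is vacuous), so dividing the three inequalities above by $\mu_i$ and using that $\Icalhat_\alpha$ has MMS value $1$ gives $v_i(S_1), v_i(S_2), v_i(S_3) < \alpha\cdot\MMS_i$ for every agent $i$ --- exactly total-$\alpha$-irreducibility, which is the last of the structural claims.
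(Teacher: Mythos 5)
Your proposal is correct and follows essentially the same route as the paper: decompose $\Icalhat_\alpha$ into the four maps, invoke \cref{order-preserves}, \cref{thm:normalize}, and \cref{lem:reduce} for $\alpha$-MMS-preservation, observe that $\mathtt{order}$ only permutes each agent's multiset of values (hence preserves normalization), and that $\normalize$ does not increase the ratio of a good's value to the MMS value (hence preserves total-$\alpha$-irreducibility). Your pointwise bound $u_{i,g}/c_{i,\ell}\le u_{i,g}/\mu_i$ together with the sorted-sequence domination argument is simply a detailed justification of the one-line claim the paper makes at this step, so the two proofs coincide in substance.
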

\begin{proof}
This proof closely follows the proof of Lemma 5 in \cite{simple}, with the key distinction that, since we employ only three reduction rules, $R_1(\alpha), R_2(\alpha), R_3(\alpha)$, for the $\mathtt{reduce}_\alpha$ transformation, and by \cref{lem:reduce}, all three reduction rules are valid $\alpha$-reductions, our lemma holds for all $0 \leq \alpha \leq 1$.

Applying $\mathtt{order}$ to $\Ical$ produces $\Ical^{(1)}$, which is ordered. The transformation $\reduce_{\alpha}$ applied to $\Ical^{(1)}$ produces $\Ical^{(2)}$, which is totally-$\alpha$-irreducible and preserves orderedness. Normalizing $\Ical^{(2)}$ yields $\Ical^{(3)}$, and since $\normalize$ does not increase the ratio of a good's value to the MMS value, $\Ical^{(3)}$ is totally-$\alpha$-irreducible. Moreover, $\Icalhat_\alpha = \mathtt{order}(\Ical^{(3)})$ is normalized, as for each agent, $\mathtt{order}$ only changes the identities of the goods, but the (multi-)set of values of the goods remains the same. Thus, $\Icalhat_\alpha$ is ordered, normalized, and totally-$\alpha$-irreducible. Since $\mathtt{order}$, $\reduce_{\alpha}$, and $\normalize$ are $\alpha$-MMS-preserving, their composition is also $\alpha$-MMS-preserving. 

The sequence of operations is important, particularly the need to apply $\mathtt{order}$ twice. This is because $\reduce$ requires an ordered input, but it may not preserve normalization, while $\normalize$ ensures normalization but may disrupt orderedness.
\end{proof}

\begin{definition}
    Let $\Icalhat_\alpha \defeq ([n], [m], \mathcal{V})$ be an ONI$_\alpha$ instanace, where $v_{i,1} \geq v_{i,2} \geq \ldots \geq v_{i,m}$ for all $i \in [n]$. The items are categorized into high-valued, middle-valued, and low-valued items as follows:
\begin{tightenum}
    \item High-valued (HV) items: $HV = [n]$.
    \item Middle-valued (MV) items: $MV = [n+1, 2n]$.
    \item Low-valued (LV) items: If $m = 2n$\footnote{By Lemma~\ref{lem:vr-upper-bounds}, $m \geq 2n$.}, then $LV = \emptyset$; otherwise, $LV = [2n+1, m]$.
\end{tightenum}
\end{definition}

\begin{corollary}[of Lemma~\ref{lem:vr-upper-bounds}]\label{col:bound}
In an ONI$_\alpha$ instance $\Icalhat_\alpha$, for each agent $i$ and every good $j$: if $j \in HV$, then $v_i(j) < \alpha$; if $j \in MV$, then $v_i(j) < \alpha/2$; and if $j \in LV$, then $v_i(j) < \alpha/3$.
\end{corollary}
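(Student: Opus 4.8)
This is a direct specialization of \cref{lem:vr-upper-bounds} to the structured instance $\Icalhat_\alpha$. The two facts I would invoke up front are that, by \cref{thm:oni-reduce}, the ONI$_\alpha$ instance $\Icalhat_\alpha$ is ordered, normalized, and totally-$\alpha$-irreducible, and that normalization forces $\MMS_i = \MMS^n_{v_i}([m]) = 1$ for every agent $i$. Because $\Icalhat_\alpha$ is ordered and totally-$\alpha$-irreducible, the hypotheses of \cref{lem:vr-upper-bounds} are met, so for every agent $i$, every integer $k \ge 1$, and every good $j > (k-1)n$ we have $v_i(j) < \alpha \cdot \MMS_i / k = \alpha/k$.

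\textbf{Case analysis on the category of $j$.} I would then just plug in the right value of $k$ for each of the three ranges. For $j \in HV = [n]$, we have $j \ge 1 > 0 = (1-1)n$, so applying the bound with $k=1$ gives $v_i(j) < \alpha$. For $j \in MV = [n+1, 2n]$, we have $j \ge n+1 > n = (2-1)n$, so applying the bound with $k=2$ gives $v_i(j) < \alpha/2$. For $j \in LV = [2n+1, m]$ (nonempty only when $m > 2n$), we have $j \ge 2n+1 > 2n = (3-1)n$, so applying the bound with $k=3$ gives $v_i(j) < \alpha/3$. This exhausts all goods since $m \ge 2n$ by \cref{lem:vr-upper-bounds} and the three categories partition $[m]$.

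\textbf{Main obstacle.} There is essentially no obstacle here: the only things to be careful about are (i) citing \cref{thm:oni-reduce} to justify that $\Icalhat_\alpha$ is simultaneously ordered, normalized, and totally-$\alpha$-irreducible, so that \cref{lem:vr-upper-bounds} is applicable and $\MMS_i$ can be replaced by $1$, and (ii) checking the strict inequalities $j > (k-1)n$ at the left endpoints of the $MV$ and $LV$ ranges, which hold because those ranges start at $n+1$ and $2n+1$ respectively. The statement then follows immediately.
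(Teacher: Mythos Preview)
Your proposal is correct and matches the paper's approach: the paper states this as an immediate corollary of \cref{lem:vr-upper-bounds} without giving an explicit proof, and your argument---invoking \cref{thm:oni-reduce} to get orderedness, normalization (hence $\MMS_i = 1$), and total-$\alpha$-irreducibility, then specializing \cref{lem:vr-upper-bounds} with $k=1,2,3$---is exactly the intended unpacking.
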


\begin{definition}[Single-High-Valued (SHV) Partition]
    Given an ONI$_\alpha$ instance $\Icalhat_\alpha \defeq ([n], [m], \mathcal{V})$, a partition of $[m]$, $A = \{A_1, A_2, \ldots, A_n\}$ is a Single-High-Valued (SHV) partition if each bundle in $A$ contains exactly one HV item.
\end{definition}

\begin{definition}[SHV $\alpha$-MMS Partition]
    Given an ONI$_\alpha$ instance $\Icalhat_\alpha \defeq ([n], [m], \mathcal{V})$, a partition of $[m]$, $A = \{A_1, A_2, \ldots, A_n\}$ is an SHV $\alpha$-$\MMS$ partition if $A$ is an SHV partition and satisfies $v_i(A_i) \geq \alpha$ for all $i \in [n]$.
\end{definition}

\begin{definition}
Agents are classified into types based on their valuation functions. All agents sharing the same valuation function $v: 2^m \rightarrow \mathbb{R}_{\geq 0}$ are considered to be of the same type. 
\end{definition}

\begin{definition}\label{dfn:ktype}
A $k$-type instance \(\I = ([n], [m], \mathcal{V})\) is a fair division instance involving \(k\) distinct agent types, characterized by the set of valuation functions \(V = \{v_1, v_2, \ldots, v_k\}\). Each agent's valuation function belongs to this set, i.e., \(\mathcal{V}_i \in V\) for all \(i \in [n]\). For each \(j \in [k]\), type \(j\) agents are defined as those whose valuation function is \(v_j\).
\end{definition}

\begin{remark}
A $k$-type instance can be fully described by \(\I \defeq ([n], [m], \{T_1, \ldots, T_k\}, \{v_1, \ldots, v_k\})\), where \(\{v_1, v_2, \ldots, v_k\}\) represents the set of valuation functions for the \(k\) types, and \(T_j\) denotes the number of type \(j\) agents for any $j\in[k]$. Since each agent belongs to exactly one of the \(k\) types, \(\sum_{j \in [k]} T_j = n\). We also assume that the type sizes are ordered as \(T_1 \geq T_2 \geq \cdots \geq T_k\).
\end{remark}

\begin{restatable}{claim}{clktype}
\label{cl:ktype}
Let $\I$ be a $k$-type instance. The $\text{ONI}_\alpha$ instance of $\I$, $\Icalhat_\alpha$ is a $k'$-type instance where $k' \leq k$.
\end{restatable}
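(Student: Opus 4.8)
The plan is to unfold the definition $\Icalhat_\alpha = \mathtt{order}(\normalize(\reduce_\alpha(\mathtt{order}(\Ical))))$ and argue that every one of its four constituent transformations is \emph{type-non-increasing}, i.e., maps a $j$-type instance to a $j'$-type instance with $j' \le j$; composing the four inequalities then yields $k' \le k$. For all four transformations the argument reduces to a single observation: to show the number of types cannot grow, it suffices to show that any two agents whose valuation functions are identical in the input still have identical valuation functions in the output, so that the number of distinct valuations among the output agents is at most the number among the input agents.

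The two applications of $\mathtt{order}$ and the single application of $\reduce_\alpha$ are the easy cases. If $v_i = v_{i'}$ then the multisets $\{v_i(g):g\in M\}$ and $\{v_{i'}(g):g\in M\}$ are equal, hence their $j$-th largest entries coincide for all $j$, so the ordered valuations satisfy $v'_i = v'_{i'}$; thus $\mathtt{order}$ is type-non-increasing (it can even decrease the count, if two distinct types have the same multiset of values). Each reduction rule $R_k(\alpha)$ removes a single agent together with a goods set $S_k$ and otherwise leaves every surviving agent's valuation unchanged, merely restricting its domain from $2^M$ to $2^{M\setminus S_k}$; restriction preserves equality of functions, and deleting an agent can only shorten the list of distinct valuations, so $\reduce_\alpha$ — an iteration of such rules, with any choice of which eligible agent receives $S_k$ — is type-non-increasing as well.

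The one delicate step, and the point I expect to be the main obstacle, is $\normalize$: the MMS partition of an agent need not be unique, so two agents of the same type normalized against \emph{different} MMS partitions could a priori receive different rescalings $v_{i,g} \mapsto v_{i,g}/v_i(P^i_j)$. I would remove this ambiguity by specifying that $\normalize$ fixes a single MMS partition per \emph{type} and uses it for all agents of that type — permissible since the procedure is free to be defined this way, and agents of the same type share the same MMS value and the same family of MMS partitions. With this convention the rescaled value of each good depends only on the common valuation and the common partition, so $v'_i = v'_{i'}$ whenever $v_i = v_{i'}$, and $\normalize$ is type-non-increasing.

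Finally, chaining the four steps: $\Ical$ has $k$ types, so $\mathtt{order}(\Ical)$ has at most $k$, hence $\reduce_\alpha(\mathtt{order}(\Ical))$ has at most $k$, hence $\normalize(\reduce_\alpha(\mathtt{order}(\Ical)))$ has at most $k$, and hence $\Icalhat_\alpha = \mathtt{order}(\normalize(\reduce_\alpha(\mathtt{order}(\Ical))))$ is a $k'$-type instance with $k' \le k$, as claimed.
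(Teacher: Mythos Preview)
Your proof is correct and follows the same approach as the paper: show that each constituent transformation preserves equality of valuation functions, so agents of the same type stay the same type. You are in fact more careful than the paper on the $\normalize$ step---the paper simply asserts that same-type agents ``experience the same transformations'' without addressing the non-uniqueness of MMS partitions, whereas you explicitly resolve this ambiguity by fixing one partition per type.
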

\begin{proof}
In the instance $\I$, agents of the same type share the same valuation function. Consequently, they experience the same transformations under the operations $\mathtt{order}$, $\mathtt{normalize}$, and $\mathtt{reduce}_\alpha$, and retain the same valuation function in $\Icalhat_\alpha$. However, agents from different types may converge to the same valuation function; for instance, if the valuations of some types are permutations of the same set of $m$ numbers, they become identical after the $\mathtt{order}$ transformation. Consequently, the number of types can only decrease, and $\Icalhat_\alpha$ becomes a $k'$-type instance where $k' \leq k$.
\end{proof}

\begin{definition}
    In an \( \alpha \)-MMS problem, given a $k$-type \( \text{ONI}_\alpha \) instance, a type \( i \) agent is said to \emph{claim} a bundle of items \( B \) if \( v_i(B) \geq \alpha \). Furthermore, a type \( i \) is considered to claim a bundle if any type $i$ agent claims the bundle. 
\end{definition}

In Sections \ref{sec:preprocess} and \ref{sec:two}, where we study the \( \frac{4}{5} \)-MMS problem, type \( i \) claims a bundle \( B \) if \( v_i(B) \geq \frac{4}{5} \). In Section \ref{sec:three}, as we consider the \( \frac{16}{21} \)-MMS problem, type \( i \) claims \( B \) if \( v_i(B) \geq \frac{16}{21} \).

\section{SHV $\frac{4}{5}$-MMS Partition of Same-type Agents}\label{sec:preprocess}

Given a $1$-type \( \text{ONI}_\alpha \) instance, where \( v \) represents the common valuation function shared by all agents, the objective is to find an SHV \( \frac{4}{5} \)-MMS partition. Since all agents belong to the same type, the MMS value is identical for each agent, i.e., \( \MMS_i = \MMS^n_v \) for all \( i \in [n] \). Moreover, having identical valuations implies an MMS partition for any single agent serves as the MMS partition for the entire instance.

The PTAS described in \cite{woeginger1997polynomial} for computing the MMS partition of a single agent can be utilized to obtain a \( (1 - \epsilon) \)-MMS partition in \( \text{poly}(\frac{1}{\epsilon}) \) time.
By choosing $\epsilon = \min(0.04, \frac{1}{5n})$, we can compute a $(1-\epsilon)$-MMS partition in polynomial time. However, such a partition may include bags with multiple HV items, which violates the current problem's constraints. To address this, we design Algorithm~\ref{alg:SHV} to modify the initial partition, ensuring an SHV $\frac{4}{5}$-MMS solution for same-type agents.

\begin{algorithm}[h]
\caption{SHV $\frac{4}{5}$-MMS of same-type agents}\label{alg:SHV}
\begin{algorithmic}[1]
    \State \textbf{Input:} A $1$-type $\text{ONI}_\alpha$ instance $\I = ([n],[m], \{n\}, \{v\})$, $\alpha = \frac{4}{5}$
    \State \textbf{Output:} SHV $\frac{4}{5}-\MMS$. 
    \State $\epsilon \gets \min(0.04, \frac{1}{5n})$.
    \State $A^0 \gets (1-\epsilon)-\MMS$ partition of an agent. 
    \State $A' \gets $ bags in $A^0$ with one HV item.
    \State $A \gets A^0\setminus A'$.
    \While{$|A'| < n$} 
        \State Let $a\in A$ be an arbitrary bag with $k>1$ HV items.
        \State Let $H =[h_1,\cdots, h_k]$ be the list of HV items in $a$ sorted in an ascending order of valuation.
        \State Let $B =\{b_1,\cdots, b_{k-1}\}\subset A$ be $k-1$ bags without any HV items.
        \State $A \gets A \setminus \{a \cup B\}$.
        \State Let $G =\{g_1,\cdots, g_{k-1}\}$ be $k-1$ highest valued items in $\cup_{b\in B}b$. 
        \State Swap each HV item in $H[1:k-1]$ with a unique item in $G$. 
        \State Denote updated bags as $\Bar{a}, \Bar{b}_1, \cdots, \Bar{b}_{k-1}$.
        \If{$v(\Bar{a}) < \frac{4}{5}$}
            \State $P\gets$ items in $\cup^{k-1}_{j=1}\Bar{b}_j$ excluding HV items $H[1:k-1]$.
            \State Construct $k-1$ new empty bags $B' = \{b'_1,\cdots, b'_{k-1}\}$
            \State Put one HV item from $H[1:k-1]$ in each bag of $B'$. 
            \State From $P$ fill each bag in $\Bar{a} \cup B'$ until it is claimed. \Comment{Bag-filling}\label{bagfilling}
            \State Add $\Bar{a}$ and all bags of $B'$ to $A'$.
        \Else  
            \State Add $\Bar{a}$ and any bag in  $\{\Bar{b}_1, \cdots, \Bar{b}_{k-1}\}$ containing one HV to $A'$, and the rest to $A$. \label{line:23}
        \EndIf
    \EndWhile    
    \Return $A'$
\end{algorithmic}
\end{algorithm}

\textbf{Main ideas of Algorithm \ref{alg:SHV}:} 
At a high level, $A$ represents the set of \emph{flawed} bags, while $A'$ consists of \emph{correct} bags. A bag $B$ is deemed \emph{correct} if it contains exactly one HV item and $v(B)\geq\frac{4}{5}$; otherwise, it is considered \emph{flawed}. As long as there exists at least one flawed bag, we iteratively select a subset of flawed bags and ensure that at least one of them is corrected in each iteration.

\noindent First, we prove some invariant properties of the algorithm in the following claim. 
\begin{restatable}{claim}{clgeneral}
\label{cl:general}
At any point in the algorithm, the following invariants hold: 
    \begin{tightenum}
        \item The total number of bags satisfies $|A| + |A'| = n$.
        \item For any $a \in A$, $a$ does not contain exactly one HV item and $v(a) \geq 1 - \epsilon$.
        \item For any $a \in A'$, $a$ contains exactly one HV item and $v(a) \geq \frac{4}{5}$.
        \item If there exists a bag $a \in A$ that contains $k$ HV items, then there must be at least $k-1$ bags in $A$ that do not contain any HV items.
    \end{tightenum}
\end{restatable}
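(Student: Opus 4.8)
The plan is to prove all four invariants simultaneously by induction on the number of while-loop iterations, tracking how $A$ and $A'$ change in a single iteration. The base case is immediate: initially $A' $ holds exactly the bags of $A^0$ with one HV item (so invariant 3 holds since $A^0$ is a $(1-\epsilon)$-MMS partition, hence every bag has value $\ge 1-\epsilon \ge \frac45$), and $A = A^0 \setminus A'$ holds the remaining bags, each of which has value $\ge 1-\epsilon$ and, not being in $A'$, does not have exactly one HV item. Since $A^0$ is a partition into $n$ bundles, $|A| + |A'| = n$. Invariant 4 follows from a counting argument: the $n$ HV items are $\{1,\dots,n\}$; if some bag in $A^0$ (equivalently in $A$) has $k \ge 2$ of them, then since there are only $n$ HV items distributed among $n$ bags with $|A'|$ bags each holding exactly one, the $|A|$ bags of $A$ together hold $n - |A'|$ HV items among $|A|$ bags; a short pigeonhole/double-counting shows that a bag with $k$ HV items forces at least $k-1$ HV-item-free bags in $A$.

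For the inductive step I would carefully walk through one iteration. We pick $a \in A$ with $k > 1$ HV items — invariant 4 (inductive hypothesis) guarantees the required $k-1$ HV-free bags $B$ exist in $A$. We remove $a$ and $B$ from $A$, perform the swap of the $k-1$ smallest HV items $H[1{:}k{-}1]$ in $a$ with the $k-1$ largest items $G$ from $\bigcup B$, and branch on whether $v(\bar a) \ge \frac45$. In the \textbf{else} branch (line~\ref{line:23}): $\bar a$ now contains exactly one HV item ($h_k$), and $v(\bar a) \ge \frac45$ by the branch condition, so $\bar a$ qualifies for $A'$; among $\bar b_1,\dots,\bar b_{k-1}$ we move to $A'$ those that ended up with exactly one HV item (each received exactly one HV item from $H$, but could have already — no, wait: the $b_j$ were HV-free, so each $\bar b_j$ has exactly one HV item; however its value may have dropped below $\frac45$, so I need to argue value). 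Here is where the value bookkeeping matters: $v(\bar b_j) = v(b_j) - v(g_j) + v(h_j)$ where $h_j \in H[1{:}k{-}1]$ and $g_j$ is among the $k-1$ largest items of $\bigcup B$. Since $b_j$ was HV-free, $v(g_j) < \frac{4}{5}$ actually $< \alpha$ is too weak; instead use that $h_j$ is an HV item so $v(h_j) \ge v(g_j)$ (HV items are the top $n$), giving $v(\bar b_j) \ge v(b_j) \ge 1 - \epsilon \ge \frac45$. So every $\bar b_j$ has exactly one HV item and value $\ge \frac45$, hence all of them go to $A'$ — re-reading line~\ref{line:23}, ``any bag containing one HV'' means all of them do, and ``the rest'' (i.e. $\bar a$ if $v(\bar a)\ge\frac45$, which it is in this branch) — I should double-check the exact semantics, but under the natural reading, in the else branch $\bar a$ and all $\bar b_j$ move to $A'$, so $|A'|$ grows by $k$ and $|A|$ shrinks by $k$ (we removed $a$ and $k-1$ $b_j$'s, total $k$ bags), preserving invariant 1. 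Every bag newly in $A'$ has exactly one HV item and value $\ge\frac45$, and nothing new enters $A$, so invariants 2 and 3 persist; invariant 4 persists because removing bags from $A$ cannot create a violation (it can only reduce the HV-item count of the remaining configuration, and more carefully: the bags left in $A$ are unchanged, so if one has $k'$ HV items the other HV-free bags among them are still there — unless we removed some, but we removed $a$ and the $b_j$'s which were HV-free; removing HV-free bags could in principle break invariant 4 for another multi-HV bag, so this needs the global count: after the iteration the new $A$ and $A'$ still partition all $n$ HV items with $A'$-bags holding one each, so the same pigeonhole argument as the base case re-establishes invariant 4).

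In the \textbf{if} branch ($v(\bar a) < \frac45$): we take $P = \bigcup_{j}\bar b_j \setminus H[1{:}k{-}1]$ (all non-HV items across the updated $b$-bags), create $k-1$ fresh bags $B'$ each seeded with one HV item from $H[1{:}k{-}1]$, and bag-fill $\bar a$ and the bags of $B'$ from $P$ until each is claimed (value $\ge\frac45$). The key points to verify: (i) there is enough value in $P$ to bring all $k$ bags ($\bar a$ and $B'$) up to $\frac45$ — this is where I expect the main obstacle, since it requires a counting argument that $v(\bar a) + \sum_j v(b'_j\text{-seed}) + v(P) = v(\bar a) + \sum_j v(\bar b_j) \ge k(1-\epsilon)$ wait that is the value \emph{before} bag-filling of the relevant items, which equals $v(\bar a \cup \bar b_1 \cup \dots \cup \bar b_{k-1})$; since these are the same items as $a \cup b_1 \cup \dots \cup b_{k-1}$ (swapping only permutes items among the bags), this total is $v(a) + \sum_j v(b_j) \ge k(1-\epsilon)$, and one must show bag-filling with item sizes bounded by $\alpha$ (Corollary~\ref{col:bound}) distributes this so each of the $k$ bags reaches $\frac45$ — a standard bag-filling lemma, using $k(1-\epsilon) \ge k\cdot\frac45 + (\text{slack to absorb the last item})$, i.e. that $1-\epsilon - \frac45 \ge$ (bound on a leftover item) per bag, which is why $\epsilon \le 0.04$ and $\epsilon \le \frac1{5n}$ were chosen; (ii) after bag-filling each of $\bar a, b'_1,\dots,b'_{k-1}$ has exactly one HV item: $\bar a$ has $h_k$ (the one HV item not swapped out) and possibly... no, $P$ contains no HV items and $\bar a$ lost $H[1{:}k{-}1]$ and gained items from $G$ which are non-HV, so $\bar a$ has exactly one HV item; each $b'_j$ is seeded with one HV item and filled from HV-free $P$, so exactly one HV item. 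Thus $k$ new correct bags join $A'$ and $k$ bags leave $A$, preserving invariant 1; invariants 2,3 persist as before; invariant 4 is re-established by the same global pigeonhole since all $n$ HV items remain partitioned with $A'$-bags holding exactly one each. I would present the base case, then the else-branch step, then the if-branch step, flagging the bag-filling sufficiency (point (i)) as the one calculation that needs the specific value of $\epsilon$ and Corollary~\ref{col:bound}; everything else is bookkeeping of HV-item counts and the observation that swaps only permute items so total value over the touched bags is invariant.
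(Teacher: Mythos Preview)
Your overall inductive structure matches the paper's, and your global pigeonhole argument for invariant~4 is exactly what the paper does. However, there is a genuine gap in your reading of the swap operation that breaks the else-branch analysis. The swap exchanges each $h_i \in H[1{:}k{-}1]$ with the item $g_i \in G$, where $G$ collects the $k-1$ highest-valued items from the \emph{union} $\bigcup_j b_j$. Nothing forces one $g_i$ per bag: several of the $g_i$ may come from the same $b_j$, in which case that $\bar b_j$ receives several HV items while another $\bar b_{j'}$ receives none. This is precisely why line~\ref{line:23} reads ``any bag in $\{\bar b_1,\dots,\bar b_{k-1}\}$ containing one HV to $A'$, and the rest to $A$'': some $\bar b_j$ genuinely return to $A$. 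Hence your assertions ``each $\bar b_j$ has exactly one HV item'' and ``nothing new enters $A$'' are both false, and your one-swap-per-bag formula $v(\bar b_j)=v(b_j)-v(g_j)+v(h_j)$ is not valid. The correct argument (which the paper gives) is that whenever a bag loses $\ell$ items to $G$ it gains $\ell$ HV items, and every HV item dominates every non-HV item in value, so $v(\bar b_j)\ge v(b_j)\ge 1-\epsilon$. With this correction, invariant~2 holds for bags re-entering $A$ (they fail the ``exactly one HV'' test by construction and keep value $\ge 1-\epsilon$), and invariant~3 holds for those moved to $A'$ since $1-\epsilon\ge \tfrac45$.

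A smaller point: your item~(i) in the if-branch---that bag-filling never exhausts $P$---is not part of \cref{cl:general} at all; it is the content of the separate \cref{lem:mainpreprocess}. The paper's proof of invariant~3 in the $v(\bar a)<\tfrac45$ case simply takes for granted that each bag is eventually claimed; the nontrivial calculation justifying this (which needs the specific choice of $\epsilon$ and the sharper per-item bound~\eqref{eq:upperg}, not just \cref{col:bound}) is deferred to that lemma. Keep the two statements separate rather than folding the bag-filling analysis into the invariant proof.
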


\begin{proof}
We will prove statements in order. 
\begin{enumerate}
    \item Consider any iteration of the while loop where $k$ flawed bags are selected $\{a, b_1, \dots, b_{k-1}\}\subseteq A$. After modifications, one of the following outcomes occurs:
    \begin{itemize}
        \item All $k$ modified bags are corrected and added to $A'$.
        \item Some of the modified bags are corrected and added to $A'$, while the remaining flawed bags are returned to $A$.
    \end{itemize}
    In both cases, the total number of flawed and correct bags remains constant at $n$. This ensures that the invariant $|A| + |A'| = n$ holds throughout the algorithm.

    \item The claim trivially holds for bags that are initially assigned from $A^0$ to $A$, as $A^0$ is a $(1-\epsilon)$-MMS partition. To extend this to bags added to $A$ later in the algorithm, we use induction. 

    A new bag is added to $A$ only in the following scenario: during a particular iteration of the while loop, after swapping, it holds that $v(\Bar{a}) \geq \frac{4}{5}$, and there exists some $b \in \{\Bar{b}_1, \dots, \Bar{b}_{k-1}\}$ such that $b$ does not contain exactly one HV item. 

    Note that during the swapping process, the value of each bag in $B$ can only increase. This is because, for every $l$ items a bag loses, it gains $l$ HV items in return. Formally, $v(\Bar{b}_j) \geq v(b_j)$ for all $j \in [k-1]$. Since $v(b_j) \geq 1 - \epsilon$ by the induction hypothesis, it follows that any newly added bag to $A$ meets the threshold condition.

    \item This claim holds trivially for bags initially added to $A'$ from $A^0$, as $A^0$ is a $(1-\epsilon)$-MMS partition. For bags added to $A'$ during the iterations of the while loop, consider the following cases:
    \begin{enumerate}
        \item Case 1: $v(\Bar{a}) < \frac{4}{5}$ after swapping:  
       In this case, the $k-1$ swapped HV items are placed in separate bags, and all $k-1$ newly constructed bags, along with $\Bar{a}$, are filled with additional items until all meet the value threshold. Once all $k$ bags are validated, they are added to $A'$. Each of these newly added bags contains exactly one HV item and satisfies the value threshold \( v \geq \frac{4}{5} \).
       \item Case 2: $v(\Bar{a}) \geq \frac{4}{5}$ after swapping:  
       Here, $\Bar{a}$ directly meets the value threshold and contains exactly one HV item, so it is added to $A'$. For the other modified bags in $\{\Bar{b}_1, \dots, \Bar{b}_{k-1}\}$, note that the swapping process ensures their value can only increase. Formally, $v(\Bar{b}_j) \geq v(b_j)$ and $v(b_j) \geq 1 - \epsilon$ according to the second statement of this claim. Thus, each bag $\Bar{b}_j$ added to $A'$ also contains exactly one HV item and satisfies the value threshold.
    \end{enumerate}
    Therefore, in all cases, every bag added to $A'$ contains exactly one HV item and has a value of at least $\frac{4}{5}$.

    \item By the first statement of this claim, the total number of bags in $A$ and $A'$ is $n$. Furthermore, by the third statement in this claim, each bag in $A'$ contains exactly one HV item. Since there are a total of $n$ HV items, it follows that the bags in $A$ collectively contain $|A|$ HV items.
    Therefore, if a bag $a \in A$ contains $k$ HV items, there must be at least $k-1$ bags in $A$ that contain no HV items.
\end{enumerate}
\end{proof}

\noindent Now, we prove the main lemma of this section. 
\begin{restatable}{lemma}{lemmainpreprocess}
\label{lem:mainpreprocess}
    If in some iteration of the while loop, after the swapping step, $v(\Bar{a}) < \frac{4}{5}$, the algorithm never runs out of items while performing bag-filling in line \ref{bagfilling}.
\end{restatable}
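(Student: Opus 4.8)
The plan is to argue by contradiction. Suppose that in the iteration in question the bag-filling of line~\ref{bagfilling} exhausts the pool $P$ before all $k$ bags it is filling — namely $\bar a$ together with the $k-1$ singletons $b'_1,\dots,b'_{k-1}$, where $b'_j$ initially holds the HV item $h_j$ — have reached value $\tfrac45$. I want to show this is incompatible with the choice $\epsilon=\min(0.04,\tfrac1{5n})\le\tfrac1{25}$. (Note first that $P\neq\emptyset$: by \cref{col:bound} every item of a $b\in B$ has value $<\alpha/2=\tfrac25$, so $v(b_j)\ge 1-\epsilon$ forces $|b_j|\ge 3$, hence $|P|=\big|\bigcup_{b\in B}b\big|-(k-1)\ge 2(k-1)$.)

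First I would pin down a lower bound on the total value available to this bag-filling. A short bookkeeping of the swap shows that the initial value of $\bar a$ plus $\sum_{j=1}^{k-1}v(h_j)$ plus $v(P)$ equals $v(a)+\sum_{j=1}^{k-1}v(b_j)$ — indeed the items being redistributed are exactly those of $a\cup B$ — and since $a,b_1,\dots,b_{k-1}\in A$, \cref{cl:general}(2) gives $v(a),v(b_1),\dots,v(b_{k-1})\ge 1-\epsilon$, so this total is at least $k(1-\epsilon)$. Next I would bound the waste of greedy filling. Let $\delta:=\max_{p\in P}v(p)$. Since $G$ consists of the $k-1$ highest-valued items of $\bigcup_{b\in B}b$ and $P=\big(\bigcup_{b\in B}b\big)\setminus G$, every item of $P$ has value at most $\min_i v(g_i)$, so $\delta\le\min_i v(g_i)$ and $(k-1)\delta\le\sum_{i=1}^{k-1}v(g_i)$. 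Each of the $k$ bags starts strictly below $\tfrac45$ — for $\bar a$ by the hypothesis, for a $b'_j$ because $v(h_j)<\alpha=\tfrac45$ by \cref{col:bound} — so any bag that becomes claimed ends below $\tfrac45+\delta$. If filling runs out then all of $P$ has been distributed into the $k$ bags, at least one of them is still below $\tfrac45$, and summing final values against the total above gives $k(1-\epsilon)<\tfrac{4k}{5}+(k-1)\delta$, hence $\delta>\tfrac{k(1/5-\epsilon)}{k-1}$.

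Finally I would convert the hypothesis $v(\bar a)<\tfrac45$ into an upper bound on $v(a)$ that collides with $v(a)\ge 1-\epsilon$. Let $w:=v(a)-\sum_{j=1}^{k-1}v(h_j)$ be the value of the part of $a$ left untouched by the swap; then $v(\bar a)=w+\sum_{i=1}^{k-1}v(g_i)$, so the hypothesis and $(k-1)\delta\le\sum_i v(g_i)$ yield $w<\tfrac45-(k-1)\delta$. Because $H$ is sorted in ascending order, $h_k$ is the largest HV item of $a$, and since $w$ consists of $v(h_k)$ plus the nonnegative value of the non-HV items of $a$, we get $v(h_j)\le v(h_k)\le w$ for all $j$, whence $v(a)=\sum_{j=1}^{k-1}v(h_j)+w\le kw<k\big(\tfrac45-(k-1)\delta\big)$. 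Substituting $\delta>\tfrac{k(1/5-\epsilon)}{k-1}$ and using $v(a)\ge 1-\epsilon$, a routine simplification gives $\epsilon>\tfrac{k^2-4k+5}{5(k^2+1)}$; and the elementary inequality $\tfrac{k^2-4k+5}{5(k^2+1)}\ge\tfrac1{25}$ — equivalent to $(k-2)(k-3)\ge 0$, true for every integer $k$ — then forces $\epsilon>\tfrac1{25}$, contradicting $\epsilon\le\tfrac1{25}$.

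I expect the main obstacle to be making the estimate work uniformly, and in particular for small $k$ (notably $k=2,3$): the crude ``total value $\ge k(1-\epsilon)$ versus overshoot $\le(k-1)\delta$'' bound of the second step is by itself too weak there, and one genuinely needs the third step, where $v(\bar a)<\tfrac45$ simultaneously caps the swapped-in items $g_i$ and — through the ordering $v(h_j)\le v(h_k)\le w$ — the HV items of $a$. The two estimates collide only barely, which is exactly why $\epsilon$ must be capped at $0.04=\tfrac1{25}$; the other place to be careful is the swap bookkeeping (the identities $v(\bar a)=w+\sum_i v(g_i)$ and ``total available'' $=v(a)+\sum_j v(b_j)$), since everything downstream rests on it.
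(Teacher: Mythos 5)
Your proof is correct. Its first half coincides with the paper's own argument: your identity $v(\bar a)=w+v(G)$ is exactly \eqref{eq:aswap}, and the chain $v(h_j)\le v(h_k)\le w$, hence $v(a)\le kw$ and $1-\epsilon\le v(a)<k\bigl(\tfrac45-(k-1)\delta\bigr)$, is an algebraic repackaging of the paper's derivation of the per-item upper bound \eqref{eq:upperg} on the pool $P$. Where you genuinely diverge is the finish. The paper plugs that item bound into the same mass-balance (total value at least $k(1-\epsilon)$ versus what the first $k-1$ claimed bags can consume) but then splits into cases $k=2$, $k=3$, and $k\ge 4$, the last case using the cruder bound $\delta\le\tfrac15$ and therefore relying on $\epsilon\le\tfrac1{5n}$. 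You instead keep $\delta=\max_{p\in P}v(p)$ symbolic, extract the lower bound $\delta>\tfrac{k(1/5-\epsilon)}{k-1}$ from the assumed failure of the bag-filling, and intersect the two constraints, reducing everything to $(k-2)(k-3)\ge 0$; this treats all $k\ge 2$ uniformly, identifies $k\in\{2,3\}$ as the tight cases (matching where the paper's estimates are tight), and shows the lemma already holds with only $\epsilon\le\tfrac1{25}$, without the additional $\tfrac1{5n}$ cap the paper invokes for $k\ge 4$. The bookkeeping you flag as delicate checks out: the disjoint decomposition behind $v(\bar a)+\sum_j v(h_j)+v(P)=v(a)+\sum_j v(b_j)$ together with \cref{cl:general} gives the total of $k(1-\epsilon)$; every bag starts strictly below $\tfrac45$ ($\bar a$ by hypothesis, each $b'_j$ because HV items are worth less than $\alpha=\tfrac45$ by \cref{col:bound}), so claimed bags end strictly below $\tfrac45+\delta$; $P\neq\emptyset$ as you note; and the strict inequalities are what make the boundary value $\epsilon=0.04$ (where $k=2,3$ give equality in $(k-2)(k-3)\ge0$) go through, just as strictness rescues the paper's own $k=2$ case.
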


\begin{proof}
    Consider an arbitrary iteration of the while loop. Since the selected bags $\{a,b_1,\cdots, b_{k-1}\} \subseteq A$, by Claim \ref{cl:general}, we have $v(a) \geq 1 - \epsilon$ and $v(b_j) \geq 1 - \epsilon$ for all $j \in [k-1]$. If the value of $\Bar{a}$ drops below $\frac{4}{5}$ after swapping, it follows that:

    \begin{equation}\label{eq:aswap}
        v(\Bar{a})  = v(a) - v(H[1:k-1]) + v(G) < \frac{4}{5},
    \end{equation}

    where $H$ is the sorted list of HV items (in ascending order of value), and $G$ represents the set of highest-valued $k-1$ items added to $\Bar{a}$ from $\cup_{j=1}^{k-1} b_j$. Since $v(a) \geq 1 - \epsilon$ and $v(H[1:k-1]) \leq (k-1)v(H[k])$, we have:

    \[
    \frac{1}{k-1}\left(\frac{1}{5} - \epsilon + v(G)\right) < v(H[k]).
    \]

    Adding \( v(G) \) to both sides gives:

    \begin{equation}\label{eq:Hk}
        \frac{1}{k-1}\left(\frac{1}{5} - \epsilon\right) + \frac{k}{k-1}v(G) < v(H[k]) + v(G).
    \end{equation}

    Since \( H[k] \in a \setminus H[1:k-1] \), it follows that \( v(H[k]) \leq v(a) - v(H[1:k-1]) \). Using this inequality with equation \eqref{eq:Hk}, and then \eqref{eq:aswap}, we obtain:

    \begin{equation}\label{eq:vG}
        \frac{1}{k-1}\left(\frac{1}{5} - \epsilon\right) + \frac{k}{k-1}v(G) < \frac{4}{5}.
    \end{equation}

    Note that \( G \) contains the highest-valued \( k-1 \) items from \( \cup_{j=1}^{k-1} b_j \). For any \( g \in \cup_{j=1}^{k-1} b_j \setminus G \) and \( g' \in G \), we have \( v(g) \leq v(g') \), which implies \( (k-1)v(g) \leq v(G) \). 

    Additionally, the set \( P = \cup_{j=1}^{k-1} \Bar{b}_j \setminus H[1:k-1] \) corresponds exactly to \( \cup_{j=1}^{k-1} b_j \setminus G \), as the items outside \( G \) and \( H[1:k-1] \) remain unchanged during the swapping process. Therefore, for any \( g \in P \), it holds that \( (k-1)v(g) \leq v(G) \). Hence,

    \begin{equation}\label{eq:upperg}
        v(g) \leq \frac{1}{k}\left(\frac{4}{5} - \frac{(0.2 - \epsilon)}{k-1}\right).
    \end{equation}

    During bag filling from \( P \), the value of any bag \( \Bar{a} \) or \( b'_1, \dots, b'_{k-1} \) cannot exceed 
    $
    \frac{4}{5} + \frac{1}{k}\left(\frac{4}{5} - \frac{(0.2 - \epsilon)}{k-1}\right).
    $
    This is because, right before adding the last item, the bag was not yet claimed, meaning its value was less than \( \frac{4}{5} \). Furthermore, from Equation \eqref{eq:upperg}, the value of any item added from the pool \( P\) is bounded above, ensuring that the total value of the bag remains within the specified limit.

    Now we show that, after \( k-1 \) bags are claimed, sufficient value remains to construct the final bag such that its value exceeds \( \frac{4}{5} \).

    \textbf{Case 1:} \( k = 2 \).  
    From Equation \eqref{eq:upperg}, the value of any item in the remaining pool is at most \( 0.3 + \frac{\epsilon}{2} \). Therefore, the first claimed bag \( c \) can have a value of at most \( 1.1 + \frac{\epsilon}{2} \). Since \( v(a \cup b_1) \geq 2(1 - \epsilon) \), the remaining value is:
    \[
    v(a \cup b_1) - v(c) \geq 2(1 - \epsilon) - \left(1.1 + \frac{\epsilon}{2}\right).
    \]
    For \( \epsilon \leq 0.04 \), this exceeds \( \frac{4}{5} \), so the second bag is claimed.

    \textbf{Case 2:} \( k = 3 \). 
    From Equation \eqref{eq:upperg}, the value of any remaining item is at most \( \frac{0.7}{3} + \frac{\epsilon}{6} \). The first two claimed bags have a total value of at most $ 2\left(\frac{4}{5} + \frac{0.7}{3} + \frac{\epsilon}{6}\right).$ Since the total value of the initial bags $a, b_1, b_2$ is at least \( 3(1 - \epsilon) \), the remaining value for the last bag is at least:
    \[
    3(1 - \epsilon) - 2\left(\frac{4}{5} + \frac{0.7}{3} + \frac{\epsilon}{6}\right) = \frac{2.8 - 10\epsilon}{3}.
    \]
    For \( \epsilon \leq 0.04 \), this exceeds \( \frac{4}{5} \), so the last bag is claimed.
    
    \textbf{Case 3:} \( k \geq 4 \).
    From Equation \eqref{eq:upperg}, the value of any item is at most \( \frac{1}{5} \) since \( \epsilon < 0.2 \). The total value of the first \( k-1 \) claimed bags is at most $    (k-1)\cdot\left(\frac{4}{5} + \frac{1}{5}\right).$ Since the total value of the initial \( k \) bags ($a, b_1,\cdots, b_{k-1}$) exceeds \( k(1 - \epsilon) \), the remaining value for the last bag is at least:
    \[
    k(1 - \epsilon) - (k-1) = 1 - k\epsilon.
    \]
    Note that $k$ denotes the number of HV items in $a$, which cannot exceed $n$. 
    Also since \( \epsilon \leq \frac{1}{5n} \), the remaining value exceeds \( \frac{4}{5} \), so the last bag is claimed.

    \noindent Thus, the algorithm never runs out of valid items during bag-filling.
\end{proof}

\noindent Combining these results, we arrive at the following theorem.
\begin{theorem}
    Given a $1$-type $\text{ONI}_\alpha$ instance, Algorithm \ref{alg:SHV} returns an SHV \(\frac{4}{5}\)-MMS partition.
\end{theorem}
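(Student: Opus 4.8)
The plan is to obtain the theorem by assembling the invariants of \cref{cl:general}, the bag-filling bound of \cref{lem:mainpreprocess}, and a short termination argument. In an $\text{ONI}_\alpha$ instance the MMS value is $1$, so for $\alpha=\frac45$ the algorithm's notion of a bundle being ``claimed'' coincides with having value $\ge\frac45$; consequently it suffices to prove that Algorithm~\ref{alg:SHV} (i) terminates, (ii) returns a partition of $[m]$ into exactly $n$ bundles, and (iii) returns only bundles that contain exactly one HV item and have value $\ge\frac45$. Part~(iii) is immediate: \cref{cl:general}(3) states that every bundle ever added to $A'$ has exactly one HV item and value $\ge\frac45$, and the returned object is $A'$.

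The one piece of genuine bookkeeping is an additional invariant I would maintain alongside those of \cref{cl:general}: at every point of the execution the multiset of items distributed over the bundles of $A\cup A'$ is exactly $[m]$. Initially $A\cup A'=A^0$, an $n$-bundle partition of $[m]$. An iteration removes $a,b_1,\dots,b_{k-1}$ from $A$ and replaces them, so I would track their items: the swaps exchange each HV item $h_i\in H[1:k-1]$ with a distinct $g_i\in G\subseteq\bigcup_j b_j$, giving $\Bar{a}=(a\setminus H[1:k-1])\cup G$ and $\bigcup_j\Bar{b}_j=(\bigcup_j b_j\setminus G)\cup H[1:k-1]$, whose union is still $a\cup\bigcup_j b_j$. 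If $v(\Bar{a})\ge\frac45$, the resulting bundles are only reassigned between $A$ and $A'$, so conservation is clear. If $v(\Bar{a})<\frac45$, the pool is $P=\bigcup_j\Bar{b}_j\setminus H[1:k-1]=\bigcup_j b_j\setminus G$, the new bundles receive one HV item each, either from $H[1:k-1]$ (for $b'_1,\dots,b'_{k-1}$) or $h_k$ (for $\Bar{a}$), and all of $P$ is distributed among them, so their union is $\{h_1,\dots,h_k\}\cup(a\setminus H)\cup G\cup P=a\cup\bigcup_j b_j$; in either case the item multiset is unchanged.

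For termination I would first argue that every iteration is well defined. Suppose $|A'|<n$. By \cref{cl:general}(1), $|A|=n-|A'|\ge 1$; by \cref{cl:general}(3) the $|A'|$ bundles of $A'$ hold exactly $|A'|$ HV items, and by item conservation the remaining $|A|$ HV items lie among the $|A|$ bundles of $A$. Since $|A|\ge 1$, at least one HV item lies in $A$, so some $a\in A$ contains at least one HV item, hence at least two by \cref{cl:general}(2); \cref{cl:general}(4) then provides the $k-1$ HV-free bundles $B\subseteq A$ needed. In the branch $v(\Bar{a})<\frac45$, \cref{lem:mainpreprocess} guarantees the bag-filling on line~\ref{bagfilling} never runs out of items, so it is well defined and outputs $k$ claimed bundles. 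Finally, no bundle is ever removed from $A'$, and each iteration inserts at least one bundle into $A'$ (namely $\Bar{a}$, together with all of $B'$ in the filling branch), so $|A'|$ strictly increases; since $|A'|\le n$ throughout by \cref{cl:general}(1), the loop executes at most $n$ times.

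When the loop exits we have $|A'|\ge n$, which with \cref{cl:general}(1) forces $|A'|=n$ and $A=\emptyset$; hence, by item conservation, $A'$ is a partition of $[m]$ into $n$ bundles, and by \cref{cl:general}(3) each of these bundles contains exactly one HV item and has value $\ge\frac45$. This establishes (i)--(iii), so $A'$ is an SHV $\frac45$-MMS partition. The substantive difficulty has already been absorbed into \cref{lem:mainpreprocess} (its case split on $k$ showing the refill step cannot fail) and \cref{cl:general} (the invariants); the remaining obstacle is purely the item-conservation bookkeeping of the second paragraph, which I expect to handle with the explicit set identities $G\cup P=\bigcup_j b_j$ and $H[1:k-1]\cup\{h_k\}\cup(a\setminus H)=a$, after which the theorem is a direct consequence of the stated results.
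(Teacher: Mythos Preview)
Your proposal is correct and follows essentially the same approach as the paper: invoke \cref{cl:general} for the per-bundle invariants, \cref{lem:mainpreprocess} to guarantee the bag-filling branch succeeds, and observe that each iteration strictly grows $A'$ so the loop terminates with $|A'|=n$. The paper's proof is terser and leaves the item-conservation and well-definedness checks implicit, whereas you spell them out; both arguments rest on the same two auxiliary results.
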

\begin{proof}
   While \( |A'| < n \), the while loop continues. In each iteration, if \( v(\bar{a}) < \frac{4}{5} \), then by \cref{lem:mainpreprocess} the bag‐filling procedure proceeds until \(\bar{a}\) and the \( k-1 \) bundles from \( B' \) are claimed; consequently, all \( k \) resulting bundles are added to \( A' \). Otherwise, at least one bundle—specifically, \(\bar{a}\)—is added to \( A' \). Hence, in each iteration at least one new bundle is added to \( A' \). Consequently, after at most linear number of iterations, \( |A'| = n \) and the algorithm terminates. By \cref{cl:general}, every bundle in \( A' \) contains exactly one HV item and has a value of at least \(\frac{4}{5}\). Hence, \( A' \) is an SHV \(\frac{4}{5}\)-MMS partition.
\end{proof}
\section{Improved MMS Approximations}\label{sec:4}

This section presents the main results, including Theorems \ref{thm:two} and \ref{thm:three}. Together with Lemma \ref{thm:oni-reduce} and Claim \ref{cl:ktype}, these results lead to the following general theorem:

\begin{theorem}
    For a $k$-type fair division instance $\I$, the $\text{ONI}_\alpha$ instance of $\I$ results in a $k'$-type instance with \(k' \leq k\), then:
    \begin{itemize}
        \item If \(k' = 2\), a \(\frac{4}{5}\)-MMS allocation exists.
        \item If \(k' = 3\), a \(\frac{16}{21}\)-MMS allocation exists.
    \end{itemize}
\end{theorem}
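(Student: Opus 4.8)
The plan is to read this theorem off as a pure composition of results already in hand: Theorem~\ref{thm:two} and Theorem~\ref{thm:three} do the real work on $\text{ONI}_\alpha$ instances with exactly two and three types respectively, while Lemma~\ref{thm:oni-reduce} and Claim~\ref{cl:ktype} supply the glue (MMS-preservation and the type-count bound). So I would not expect any new combinatorial content here; the entire task is to choose the right $\alpha$ for each bullet and to chain the transformations in the correct order.

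Concretely, for the first bullet I would set $\alpha = \tfrac45$ and form $\Icalhat_{4/5}$, the $\text{ONI}_{4/5}$ instance of $\I$. By Lemma~\ref{thm:oni-reduce}, $\Icalhat_{4/5}$ is ordered, normalized, and totally-$\tfrac45$-irreducible, and the transformation $\I \mapsto \Icalhat_{4/5}$ is $\tfrac45$-MMS-preserving. By Claim~\ref{cl:ktype}, $\Icalhat_{4/5}$ has $k' \le k$ types; in the case under consideration $k' = 2$, so $\Icalhat_{4/5}$ is exactly the kind of input Theorem~\ref{thm:two} (which handles $\text{ONI}_\alpha$ instances with two types) applies to. Invoking Theorem~\ref{thm:two} yields a $\tfrac45$-MMS allocation of $\Icalhat_{4/5}$, and the $\tfrac45$-MMS-preserving property then lifts it to a $\tfrac45$-MMS allocation of $\I$. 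The second bullet has an identical structure: set $\alpha = \tfrac{16}{21}$, form $\Icalhat_{16/21}$, note via Lemma~\ref{thm:oni-reduce} and Claim~\ref{cl:ktype} that it is an ordered/normalized/totally-$\tfrac{16}{21}$-irreducible instance with $k' \le k$ types, apply Theorem~\ref{thm:three} when $k' = 3$, and lift via the $\tfrac{16}{21}$-MMS-preserving property. Since (barring the PTAS-based MMS computations) each of $\mathtt{order}$, $\mathtt{normalize}$, $\reduce_\alpha$, Theorem~\ref{thm:two}, and Theorem~\ref{thm:three} is polynomial, the resulting allocation of $\I$ is computable in polynomial time, consistent with the remark in the introduction.

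The only point that needs care --- and the closest thing to an obstacle --- is matching the irreducibility parameter to the target ratio. To conclude a $\tfrac45$-MMS allocation of $\I$ one must run the $\text{ONI}$ transformation at $\alpha = \tfrac45$ precisely, because Lemma~\ref{thm:oni-reduce} only guarantees that $\I \mapsto \Icalhat_\alpha$ preserves $\alpha$-MMS and Theorem~\ref{thm:two} is stated for totally-$\tfrac45$-irreducible instances; likewise the three-type bullet forces $\alpha = \tfrac{16}{21}$. Hence the symbol $k'$ in each bullet should be understood as the number of types of $\Icalhat_\alpha$ for that bullet's value of $\alpha$, and the two cases are proved independently. (The degenerate possibility $k' = 1$ is not covered by the two bullets, but it is immediate: a single-type instance admits an exact MMS allocation, hence an $\alpha$-MMS allocation for every $\alpha \le 1$.)
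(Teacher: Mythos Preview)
Your proposal is correct and matches the paper's approach exactly: the paper states this theorem without an explicit proof, merely noting that it follows from combining Theorems~\ref{thm:two} and~\ref{thm:three} with Lemma~\ref{thm:oni-reduce} and Claim~\ref{cl:ktype}. Your observation that $\alpha$ must be chosen separately as $\tfrac45$ and $\tfrac{16}{21}$ for the two bullets (so that the MMS-preservation parameter matches the target ratio and the irreducibility hypothesis of the respective theorem) is the only subtlety, and you handle it correctly.
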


\begin{corollary}
    For any $2$-type fair division instance, a \(\frac{4}{5}\)-MMS allocation exists.
\end{corollary}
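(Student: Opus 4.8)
The plan is to reduce the statement to the existence results already assembled in this section. Given an arbitrary $2$-type fair division instance $\I$, I would first pass to its $\text{ONI}_{4/5}$ instance $\Icalhat_{4/5}$. By \cref{cl:ktype}, $\Icalhat_{4/5}$ is a $k'$-type instance with $k'\le 2$, and by \cref{thm:oni-reduce} it is ordered, normalized, totally-$\frac45$-irreducible, and obtained from $\I$ by a $\frac45$-MMS-preserving transformation. Hence it suffices to produce a $\frac45$-MMS allocation of $\Icalhat_{4/5}$ and then pull it back to $\I$.

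Next I would split on $k'$. If $k'=1$, all agents share a common valuation and an exact MMS allocation trivially exists (take its MMS partition and hand each agent a distinct bundle); in particular the SHV $\frac45$-MMS partition returned by \cref{alg:SHV} already works, since assigning each agent one distinct bundle gives everyone value at least $\frac45$ of the common MMS value. If $k'=2$, I would invoke \cref{thm:two}, which guarantees a $\frac45$-MMS allocation for every $2$-type $\text{ONI}_{4/5}$ instance.

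Finally, in both cases I would transfer the $\frac45$-MMS allocation of $\Icalhat_{4/5}$ back to one of $\I$ using the $\frac45$-MMS-preservation of \cref{thm:oni-reduce}, which in turn rests on \cref{order-preserves}, \cref{lem:reduce}, and \cref{thm:normalize} establishing that $\mathtt{order}$, $\reduce_{4/5}$, and $\normalize$ are each $\alpha$-MMS-preserving, so their composition is too. I do not expect any genuine obstacle in this last step: all the real work lives in \cref{thm:two}, and the corollary amounts to the observation that the degenerate $k'=1$ case is handled for free and that passing to the $\text{ONI}_{4/5}$ instance costs nothing in the approximation ratio.
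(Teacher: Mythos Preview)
Your proposal is correct and follows essentially the same route as the paper: reduce to the $\text{ONI}_{4/5}$ instance via \cref{thm:oni-reduce} and \cref{cl:ktype}, handle the trivial $k'\le 1$ case directly, and invoke \cref{thm:two} for $k'=2$. The paper leaves this corollary unproved because it is immediate from the general theorem stated just above it; your write-up simply spells out the reduction, including the degenerate $k'=1$ case that the paper handles only implicitly.
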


\begin{corollary}
    For any $3$-type fair division instance, a \(\frac{16}{21}\)-MMS allocation exists.
\end{corollary}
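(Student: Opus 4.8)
The plan is to treat this statement as a thin wrapper around three self-contained existence results, letting \cref{thm:oni-reduce} and \cref{cl:ktype} do the bookkeeping. Given $\I$, choose $\alpha=\frac45$ for the first bullet and $\alpha=\frac{16}{21}$ for the second, and form $\Icalhat_\alpha$. By \cref{thm:oni-reduce}, $\Icalhat_\alpha$ is ordered, normalized, and totally-$\alpha$-irreducible, and an $\alpha$-MMS allocation of $\Icalhat_\alpha$ pulls back to an $\alpha$-MMS allocation of $\I$; so it suffices to work inside $\Icalhat_\alpha$, where every agent's MMS value is $1$. By \cref{cl:ktype}, $\Icalhat_\alpha$ has $k'\le k$ types, so I only need: any $1$-type instance has an exact MMS allocation (trivially, or constructively via \cref{alg:SHV}, whose output already satisfies $v(A_j)\ge\frac45=\frac45\cdot\MMS$); any $2$-type $\text{ONI}_{4/5}$ instance admits a $\frac45$-MMS allocation (\cref{thm:two}); any $3$-type $\text{ONI}_{16/21}$ instance admits a $\frac{16}{21}$-MMS allocation (\cref{thm:three}). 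The corollaries then drop out: a $2$-type instance has $k'\le 2$, giving $\frac45$-MMS; a $3$-type instance has $k'\le 3$, and if $k'\le 2$ we even get $\frac45\ge\frac{16}{21}$, while if $k'=3$ we get $\frac{16}{21}$-MMS.

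For the $2$-type result I would start from the SHV $\frac45$-MMS partition $A=\{A_1,\dots,A_n\}$ of the majority type (type $1$, with $T_1\ge T_2$): each bag has exactly one HV item and $v_1(A_j)\ge\frac45$, so type $1$ is satisfied by \emph{any} $T_1$ of these bags, and the only task is to carve out $T_2$ bags each worth $\ge\frac45$ to type $2$. If type $2$ already claims at least $T_2$ of the bags of $A$, hand it $T_2$ claimed bags and give the rest to type $1$. Otherwise the bags type $2$ does not claim must be rebalanced: assign type $1$ the bags type $2$ values least (they remain $\ge\frac45$ for type $1$ since untouched), aggregate the goods of the remaining bags together with their HV items, and run the bag-filling step of \cref{alg:SHV} to rebuild $T_2$ bags each claimed by type $2$. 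This analysis relies on the smallness bounds of \cref{col:bound} (HV $<\frac45$, MV $<\frac25$, LV $<\frac{4}{15}$), the normalization identity $v_2([m])=n$, and \cref{lem:vr-upper-bounds,lem:upper-13} to handle the cases where one HV item is worth almost $\frac45$.

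For the $3$-type result I would again begin with the SHV $\frac{16}{21}$-MMS partition of the majority type and sort its $n$ bags into four classes by which of types $2$ and $3$ \emph{claim} them: claimed by both ($n_{23}$ bags), by type $2$ only ($n_2$), by type $3$ only ($n_3$), by neither ($n_0$). When $n_{23}$, $n_{23}+n_2$, and $n_{23}+n_3$ are large enough relative to $T_2$ and $T_3$, a direct greedy assignment (each non-majority type takes claimed bags, type $1$ takes the remainder) works, and I would enumerate these cases. The hard case is $n_0$ large: those bags are worthless to both non-majority types as they stand, so they cannot be handed out directly; here I would aggregate all of their goods (plus any leftover HV items) and run a more careful bag-filling routine, arguing — via \cref{col:bound}, the identities $v_2([m])=v_3([m])=n$, and the irreducibility bounds — that enough value survives to build bags worth $\ge\frac{16}{21}$ for the remaining type-$2$ and type-$3$ agents while keeping every bag retained by type $1$ at $\ge\frac{16}{21}$.

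I expect the main obstacle to be exactly this large-$n_0$ case of the $3$-type argument (and its $2$-type analogue): one must simultaneously keep every bag that stays with the majority type above threshold after items are removed from it, upper-bound the contribution of any single item so that no bag overfills and wastes value, and check that after the bag-filling round enough residual value remains for the last few agents. The thresholds $\frac45$ and $\frac{16}{21}$ are presumably forced by the tightest of these bag-filling inequalities, so pinning down the right case split and the right threshold is where the real work lies; the composition argument of the first paragraph is then routine.
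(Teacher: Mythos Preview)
Your composition argument is exactly the paper's: form $\Icalhat_{16/21}$, invoke \cref{cl:ktype} to get $k'\le 3$, and then appeal to \cref{thm:two} (for $k'\le 2$, noting that totally-$\tfrac{16}{21}$-irreducible implies totally-$\tfrac{4}{5}$-irreducible so the hypotheses still hold) or \cref{thm:three} (for $k'=3$), pulling back via \cref{thm:oni-reduce}. So at the level of the corollary itself, your proof is correct and identical to the paper's.

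Your sketch of \cref{thm:two} is also the paper's \cref{alg:main2types}, but you over-equip it: neither \cref{lem:upper-13} nor any ``HV item worth almost $\tfrac45$'' case is needed. The single inequality is that the $T_1$ bags given to type~1 are each $<\alpha$ for type~2 (by the sort), and each bag-filled bundle for type~2 is $<\tfrac{3\alpha}{2}$ (last item is MV or smaller), so the total is $T_1\alpha+T_2\cdot\tfrac{3\alpha}{2}\le \tfrac{n}{2}\alpha+\tfrac{n}{2}\cdot\tfrac{3\alpha}{2}=n$ using $T_1\ge\tfrac n2$ and $\alpha=\tfrac45$.

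For \cref{thm:three} your four-class idea is right and matches \cref{alg:organize}, but your case split is not the paper's. The paper splits on $|C_2|\gtrless T_2$, $|C_3|\gtrless T_3$, and (when both $\le$) on $|C_1|\gtrless T_1$; in particular the ``one of $|C_i|>T_i$, the other $\le T_{i'}$'' case (Case~2) already requires a two-type-style bag-filling subroutine, not a direct greedy hand-out, so your ``direct greedy when $n_{23}+n_i$ are large enough'' does not cover it. You are correct that the genuinely hard case is $|C_1|>T_1$ (your ``$n_0$ large''); the paper handles it with \cref{alg:case4}, a two-phase procedure that first pairs HV with MV items into bags that are \emph{safe} ($\le 1$) for all still-active minority types, saving unclaimed pairs, and only then bag-fills from the leftovers. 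The delicate accounting that makes $\tfrac{16}{21}$ the right threshold is \cref{thm:case4}, proved in the appendix by tracking, for each minority type, how many assigned bags were $<\alpha$ versus in $[\alpha,1]$ while that type was still active, and bounding the residual MV items case-by-case.
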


\subsection{$\frac{4}{5}$-MMS for $2$-type $\text{ONI}_\alpha$ Instance}\label{sec:two}

Given a $2$-type $\text{ONI}_\alpha$ instance, Algorithm \ref{alg:main2types} first constructs a $1$-type $\text{ONI}_\alpha$ instance \(\widehat{\mathcal{I}}\) based on the valuation function of the majority type. It computes the SHV \(\frac{4}{5}\)-MMS partition of \(\widehat{\mathcal{I}}\) and sorts the \(n\) bags in ascending order according to the minority type's valuation. The first \(T_1\) bags are assigned to type 1 agents. The remaining \(T_2\) bags are assigned to type 2 agents if all these bags are claimed by type 2 agents. If any bag is left unclaimed, the items from all \(T_2\) bags are pooled together. From this pool, each of the \(T_2\) HV items is placed into a new empty bag, and the bags are filled until each of them is claimed by type 2 agents. This process is called bag-filling. 
 
\begin{algorithm}[h]
\caption{$\frac{4}{5}$-MMS for $2$ types}\label{alg:main2types}
\begin{algorithmic}[1]
    \State \textbf{Input:} $2$-type $\text{ONI}_\alpha$ instance $\I = ([n], [m], \{T_1,T_2\}, \{v_1,v_2\})$, $\alpha = \frac{4}{5}$.
    \State \textbf{Output:} $\frac{4}{5}-$MMS. 
    \State Define a $1$-type $\text{ONI}_\alpha$ instance $\Icalhat = ([n],[m], \{n\}, \{v_1\})$
    \State Let $A$ be the SHV $\frac{4}{5}-$MMS partition after running Algorithm \ref{alg:SHV} on $\Icalhat$. 
     \State Sort bags of $A$ by type $2$ in an ascending order of valuation.
     \State Assign the first $T_1$ bags to type $1$ agents. 
     \If{type $2$ agents claim all remaining bags} 
     \State Assign all the remained bags to type $2$ agents. 
     \Else 
    \State $P\gets$ items from all remained bags.
     \State Put each HV item of $P$ in a new bag.
    \State From the remaining items in \( P \), fill each new bag until a type 2 agent claims it, then assign it to her.
     \EndIf
\end{algorithmic}
\end{algorithm}
\begin{theorem}\label{thm:two}
    Given a $2$-type $\text{ONI}_\alpha$ instance, Algorithm \ref{alg:main2types} returns a $\frac{4}{5}$-MMS.
\end{theorem}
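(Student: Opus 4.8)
The plan is to verify, for each agent, that the bundle she receives has value at least $\frac{4}{5}$ of her (normalized) MMS value, which in the $\text{ONI}_\alpha$ instance equals $1$; equivalently, that each agent claims her bundle. Type $1$ is immediate: the partition $A$ returned by Algorithm~\ref{alg:SHV} is an SHV $\frac{4}{5}$-MMS partition for $v_1$, so every one of its $n$ bags has $v_1$-value at least $\frac{4}{5}$, and the first $T_1$ bags handed to type $1$ agents inherit this. So the whole argument reduces to showing that all $T_2$ remaining bags get allocated to type $2$ agents at a value of at least $\frac{4}{5}$. If every remaining bag is already claimed by type $2$ (the \textbf{if} branch), we are done. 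The substance is the \textbf{else} branch, where we pool the items of the $T_2$ remaining bags and run bag-filling starting from the $T_2$ HV items.

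The key structural input is the sorting step. Because the $n$ bags of $A$ are sorted in ascending order of $v_2$-value and type $1$ takes the first (cheapest for type $2$) $T_1$ bags, each of the $T_2$ remaining bags has $v_2$-value at least the average of the remaining bags, which is at least $\frac{1}{T_2}\sum_{\text{remaining }B} v_2(B)$. Since $v_2([m]) = n$ in a normalized instance and the first $T_1$ bags each have $v_2$-value at most $1$ (they are among the $T_2$ smallest... more carefully: each of the $T_1$ smallest has value at most the average over all $n$ bags, which is $1$, by the sorting), the total $v_2$-value of the pool $P$ is at least $n - T_1 = T_2$. So the pool has enough total value; the remaining task is to show bag-filling succeeds, i.e., never runs out of items before all $T_2$ bags are claimed.

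For that I would argue as in Lemma~\ref{lem:mainpreprocess}: seed each of the $T_2$ new bags with one HV item, then add low/middle-valued items from $P$ one at a time. When the $(T_2)$-th-to-last bag is being finished, each already-claimed bag has value strictly less than $\frac{4}{5}$ just before its last item was added, and that last item has value bounded by \Cref{col:bound} (at most $\alpha/2 = \frac{2}{5}$ for an MV item, at most $\alpha/3$ for an LV item — and one must check whether MV items can be dumped into $P$; since type $1$ took the $T_1$ bags lowest for type $2$, and each bag of $A$ contains exactly one HV item, the non-HV content of each remaining bag consists of MV and LV items). Hence each claimed bag has value at most $\frac{4}{5} + \frac{2}{5} = \frac{6}{5}$, so the first $T_2 - 1$ claimed bags absorb at most $\frac{6}{5}(T_2-1)$ of the total value $\ge T_2 + (\text{seed HV items})$... actually since seeds are already counted, the bound needs the total pool value $\ge T_2$ together with the HV seeds contributing extra; the last bag is then left with value at least $T_2 + (\text{sum of HV seed values}) - \frac{6}{5}(T_2-1) - (\text{HV seeds already in claimed bags})$. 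I would tighten this exactly as in the three cases of Lemma~\ref{lem:mainpreprocess}, using $v_2(g) < \alpha/2$ for the items being added and the fact that $T_2 \le n/2 \le n$, possibly also invoking \Cref{lem:upper-13} to bound HV-plus-MV combinations. The main obstacle is precisely this last-bag accounting: one must rule out the degenerate possibility that a single huge item (an HV item that type $2$ values close to $\alpha$) in the pool, combined with the others, leaves too little for the final bag — this is handled because exactly one HV item is placed per new bag as a seed, so no HV item is ever added during the fill phase, and the MV/LV bound $\alpha/2$ keeps the overshoot of each claimed bag under control. A secondary subtlety is confirming the pool genuinely has $v_2$-value $\ge T_2$; this follows from normalization ($v_2([m])=n$) and the ascending sort guaranteeing the $T_1$ removed bags are each worth at most the global average $1$ to type $2$.
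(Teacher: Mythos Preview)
Your argument has a genuine gap in the \textbf{else} branch: the lower bound $v_2(P)\ge T_2$ on the pool is too weak to close the bag-filling accounting. With each claimed bag valued at most $\tfrac{4}{5}+\tfrac{\alpha}{2}=\tfrac{6}{5}$, after $T_2-1$ bags are claimed the value left for the last bag is only at least $T_2-\tfrac{6}{5}(T_2-1)=\tfrac{6}{5}-\tfrac{T_2}{5}$, which drops below $\tfrac{4}{5}$ as soon as $T_2\ge 3$. (Incidentally, your justification that ``each of the $T_1$ smallest bags has value at most the average $1$'' is false as stated---individual small bags can exceed the average---though the weaker conclusion $\sum_{j\le T_1}v_2(A_j)\le T_1$ does hold. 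But even that conclusion is not strong enough.)

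The missing idea is that the \textbf{else} condition itself gives you a much sharper bound. If some remaining bag is not claimed by type~$2$, then in particular the smallest remaining bag $A_{T_1+1}$ satisfies $v_2(A_{T_1+1})<\alpha$; since the bags are sorted ascending in $v_2$, this forces $v_2(A_j)<\alpha$ for every $j\le T_1$. Hence the $T_1$ bags handed to type~$1$ carry total $v_2$-value strictly less than $T_1\alpha$, not merely $T_1$. Combining this with the per-bag cap $\tfrac{3\alpha}{2}$ on the $T_2$ bag-filled bundles and the majority assumption $T_1\ge n/2$ gives
\[
T_1\alpha+T_2\cdot\tfrac{3\alpha}{2}\;\le\;\tfrac{n}{2}\alpha+\tfrac{n}{2}\cdot\tfrac{3\alpha}{2}\;=\;\tfrac{5n\alpha}{4}\;=\;n,
\]
so the total $v_2$-value allocated never exceeds the available $n$ and bag-filling cannot run out. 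You never invoke the else condition in your accounting, and without it the argument cannot be repaired along the lines you sketch.
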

\begin{proof}
Note that $\alpha= \frac{4}{5}$. In Algorithm \ref{alg:main2types}, first we obtain $A$, an SHV $\frac{4}{5}$-MMS partition of $\Icalhat$. Note that type $1$ agents claim all bags in $A$. Then type $2$ sorts all the $n$ bags of $A$ in an ascending order of valuation, therefore $v_{2}(A_{j}) \leq v_{2}(A_{j+1})$ for $j\in[n-1]$. Let $F = \{A_j\}^{T_1}_{j=1}$ be the collection of the first $T_1$ bags. Each bag in $F$ is assigned to a unique type $1$ agent. If $v_{2}(A_{T+1}) \geq \alpha$, since bags are sorted in ascending order, $v_{2}(A_j) \geq \alpha$ for all $j \in [T_1 +1,n]$. Therefore, all the remained bags are claimed by type $2$, and we'll assign them to the remained agents. In this case the assignment is a $\frac{4}{5}-$MMS as all the assigned bags are claimed by the corresponding agents. 

On the other hand if $v_{2}(A_{T+1}) < \alpha$, items in the remained bags are pooled in $P$, and since valuations are in an ascending order, $v_{2}(A_j) < \alpha$ for all $j \in [T_1]$. Therefore
\begin{equation}\label{eq:firstbound}
    \sum_{a\in F} v_{2}(a) < T_1\alpha.
\end{equation}
Note that since each bag in $A$ has exactly one HV item, we have exactly $n-T_1$ HV items in $P$, and as $T_2 = n-T_1$, the number of remained HV items is exactly the same as the number of type $2$ agents. We construct $T_2$ new empty bags, and by putting each HV item in a separate bag, the remained pool of items will not have any HV item any more. By Corollary \ref{col:bound} the value of each remained item in the pool is at most $\frac{\alpha}{2}$. When filling a bag until type 2 claims it, its value for type 2 cannot exceed \(\frac{3\alpha}{2}\) for the following reason; before the last item was added, the bag's value was less than \(\alpha\), and the last item contributes at most \(\frac{\alpha}{2}\). Let $A_{new}$ be the set of bags created during the bag filling phase. We have that
\begin{equation}\label{eq:secondbound}
    \sum_{a\in A_{new}}v_{2}(a) \leq T_{2}\frac{3\alpha}{2}.
\end{equation}
Set of bags assigned to all agents is $ \{A_j\}^{T_1}_{j=1} \cup A_{new}$. Putting  \cref{eq:firstbound} and \cref{eq:secondbound} together we obtain 
\begin{equation}
    \begin{aligned}
        \sum_{a\in \{A_j\}^{T_1}_{j=1} \cup A_{new}} v_{2}(a) & < T_1\alpha + T_{2} \frac{3\alpha}{2} \\
        & \leq \frac{n}{2}\alpha + \frac{n}{2}\cdot \frac{3\alpha}{2} = n
    \end{aligned}
\end{equation}

where we used $\alpha = \frac{4}{5}$, $T_1 + T_2 = n$ and $T_{1} \geq \frac{n}{2}$ since type $1$ agents are the majority by assumption. The total value of the assigned bags is upper bounded by $n$, the total available value in a normalized instance. Therefore, we never run out of goods while bag filling or, equivalently, all type $2$ agents receive a claimed bag. Therefore, this assignment is a $\frac{4}{5}-$MMS.  
\end{proof}

\subsection{$\frac{16}{21}-$MMS for $3$-type $\text{ONI}_\alpha$ Instance}\label{sec:three}

Given a $3$-type $\text{ONI}_\alpha$ instance, Algorithm \ref{alg:main3types} first constructs a $1$-type $\text{ONI}_\alpha$ instance \(\widehat{\mathcal{I}}\) based on the valuation function of the majority type, and computes the SHV \(\frac{4}{5}\)-MMS partition of \(\widehat{\mathcal{I}}\), $A$.  
For any choice of $\alpha$, each bag in $A$ is either liked by more than $\alpha$ by both types 2 and 3, or just one of them, or none. Based on the valuations of types 2 and 3, we partition the $n$ bags in $A$ into four classes $C_1,C_2,C_3$ and $C_4$ using Algorithm \ref{alg:organize}. By choosing $\alpha = \frac{16}{21}$, type $2$ agents claim all the bags in $C_2$ and $C_4$, type $3$ agents claim all the bags in $C_3$ and $C_4$, and type $1$ agents claim all the bags in $A=\bigcup^4_{i=1} C_i$. 

Finally, by comparing the number of bags in these classes with the number of agents of each type, we identify four cases. The specific approach for handling each case to achieve a $\frac{16}{21}$-MMS is detailed in Algorithm \ref{alg:main3types}. Notably, the last case addressed in this algorithm presents the greatest challenge and depends on executing Algorithm \ref{alg:case4}. We analyze each case of the algorithm separately and show that, in all four cases, Algorithm \ref{alg:main3types} guarantees a \( \frac{16}{21} \)-MMS allocation. Consequently, we derive the following theorem.
\begin{theorem}\label{thm:three}
    % There exists an efficient algorithm for obtaining $\frac{16}{21}$-MMS when having $3$ types of agents.
    Given a $3$-type $\text{ONI}_\alpha$ instance, Algorithm \ref{alg:main3types} returns a $\frac{16}{21}$-MMS.
\end{theorem}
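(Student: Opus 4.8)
The plan is to follow the roadmap sketched in the paragraph preceding the statement: reduce everything to a $3$-type $\text{ONI}_\alpha$ instance with $\alpha=\tfrac{16}{21}$, build the majority type's SHV $\tfrac45$-MMS partition $A=\{A_1,\dots,A_n\}$ via \cref{alg:SHV}, classify its $n$ bags into the four classes $C_1,C_2,C_3,C_4$ according to which of types $2,3$ claim them, and then do a case analysis on how $|C_1|,|C_2|,|C_3|,|C_4|$ compare with $T_1,T_2,T_3$. Throughout, the workhorse facts are: every bag of $A$ is claimed by type $1$ (value $\ge\tfrac45>\tfrac{16}{21}$); every bag has exactly one HV item; by \cref{col:bound} every non-HV item has value $<\tfrac{\alpha}{2}$ for each type, so any bag that is bag-filled "until claimed" has type-$i$ value in $[\alpha,\tfrac{3\alpha}{2})$; and in a normalized instance the total value is $n$ for each type. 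The overall structure mirrors \cref{thm:two}: whenever a type's remaining bags are all already claimed, assign them directly; otherwise pool their items, seed one HV item per new bag, bag-fill, and use a counting argument to show the total value assigned never exceeds $n$.

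Concretely, I would organize the cases by how many classes are "large enough." \textbf{Case (i):} there are enough already-claimed bags for everyone, e.g. $|C_4|\ge T_1$ after peeling off what types $2$ and $3$ need — here we just hand out claimed bags and are done. \textbf{Case (ii):} type $1$ can absorb the "junk" class $C_1$ (bags liked by neither minority type): give $C_1$ plus possibly some $C_2\cup C_3\cup C_4$ bags to type $1$, leaving enough bags claimed by each of types $2,3$ to assign directly; again no bag-filling needed. \textbf{Case (iii):} $C_1$ is nonempty but not huge — give type $1$ its $T_1$ least-desired bags (a type-$1$ agent likes all of them), and argue the leftover $n-T_1$ bags, which consist of the rest of $C_1\cup C_2\cup C_3$ and all of $C_4$, can be split between types $2$ and $3$ so that each gets enough claimed bags, possibly after pooling the small residual $C_1$-part. \textbf{Case (iv), the hard case:} $C_1$ is so large that after satisfying types $2,3$ from $C_2,C_3,C_4$ there are many bags left that neither minority type wants, and type $1$ alone cannot absorb all of them while leaving types $2,3$ happy. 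Here one must pool the items of all these leftover bags together with whatever type $1$ doesn't need, seed HV items into fresh bags, and run the more delicate bag-filling of \cref{alg:case4}; the key quantitative claim is that
\[
T_1\cdot\alpha \;+\; T_2\cdot\tfrac{3\alpha}{2}\;+\;T_3\cdot\tfrac{3\alpha}{2}\;\le\;n
\]
fails in general (since $\alpha=\tfrac{16}{21}$ and only $T_1\ge n/3$ is guaranteed), so the naive bound of \cref{thm:two} is not enough and one needs a sharper accounting — exploiting that the bags given to type $1$ in this case have \emph{small} type-$2$ and type-$3$ value (they came from $C_1$, and from the ascending-sorted tail), so $\sum_{a\in F}v_2(a)$ and $\sum_{a\in F}v_3(a)$ are each well below $T_1\alpha$, and that a bag-filled bag seeded with an HV item of value $<\alpha$ plus one extra item has value $<\tfrac{3\alpha}{2}$, and that all the leftover HV items (exactly $T_2+T_3$ of them, one per pooled bag) are accounted for. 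The choice $\alpha=\tfrac{16}{21}$ is exactly what makes the resulting inequalities tight in Case (iv).

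For each case I would verify the two things that make the assignment a valid $\tfrac{16}{21}$-MMS allocation: (a) every type-$1$ agent receives a bag it values $\ge\alpha$ — immediate, since it receives an original bag of $A$; and (b) every type-$2$ and type-$3$ agent receives either an original bag it claimed or a bag-filled bag that was filled "until claimed," hence of value $\ge\alpha$ — this is where the counting argument enters, to guarantee the bag-filling never runs out of items before all required bags are claimed. I'd also note, as in the surrounding text, that the instance being $\text{ONI}_\alpha$ and normalized is used via \cref{col:bound} and via $v_i([m])=n$, and that $\tfrac{16}{21}\cdot\tfrac{21}{16}$-type arithmetic ($\tfrac{16}{21}<\tfrac45$) is what lets us reuse \cref{alg:SHV}'s $\tfrac45$-partition as an $\alpha$-partition for type $1$.

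\textbf{Main obstacle.} The crux is Case (iv): making the pooled bag-filling argument go through requires a genuinely new counting lemma — bounding the total type-$2$ and type-$3$ value assigned not by the crude $T_j\cdot\tfrac{3\alpha}{2}$ but by using the smallness of the type-$2$/type-$3$ valuations of the bags diverted to type $1$ (which is why they were classified into $C_1$ in the first place), combined with careful control of how many HV items and how much residual value land in the pool. I expect the bulk of the proof effort (and the reason $\alpha=\tfrac{16}{21}$ rather than something larger) to live entirely in verifying that this inequality closes, across the sub-cases of how $|C_1|,\dots,|C_4|$ sit relative to $T_1,T_2,T_3$; the other three cases should be short, direct assignments of already-claimed bags in the style of \cref{thm:two}.
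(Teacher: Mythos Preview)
Your overall roadmap---build the SHV $\tfrac45$-partition, classify bags by which minority types claim them, and case-split---matches the paper, and your Cases (i)--(iii) are close in spirit to the paper's Cases 1--3 (though your partition of the easy cases is fuzzier than the paper's, which splits cleanly on whether $|C_2|>T_2$, $|C_3|>T_3$, and $|C_1|>T_1$; note the paper's Case~2, where exactly one of $|C_2|>T_2,|C_3|>T_3$ holds, still needs a bag-filling argument \`a la \cref{thm:two}, not just direct assignment).

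The real gap is in Case (iv). You correctly observe that the naive bound $T_1\alpha+(n-T_1)\tfrac{3\alpha}{2}\le n$ fails when $T_1$ can be as small as $n/3$, but your proposed remedy---``exploiting that the bags given to type $1$ have small type-$2$/type-$3$ value, so $\sum v_2(a)$ is well below $T_1\alpha$''---does not work: in Case~4 all $T_1$ bags handed to type~$1$ come from $C_1$, so their type-$2$ and type-$3$ values are bounded by $\alpha$ each and you recover exactly the $T_1\alpha$ term you already said is too weak. There is no slack on that side. The paper's fix is not sharper accounting of the same algorithm but a genuinely different algorithm (\cref{alg:case4}): before any bag-filling, it runs a first phase that pairs each remaining HV item with a carefully chosen MV item so that the resulting two-item bag is \emph{safe}, i.e.\ has value $\le 1$ for every still-active minority type (\cref{lem:safe}). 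This produces many bags with value in $[\alpha,1]$ rather than $[\alpha,\tfrac{3\alpha}{2})$, and only the leftover bags in the second phase incur the looser $\tfrac{3\alpha}{2}$ or $\tfrac{4\alpha}{3}$ bound (depending on whether the last item added is MV or LV). The accounting in the proof of \cref{thm:case4} then tracks, for each minority type~$i$, how many bags fell into each regime while $i$ was active versus after $i$ became inactive, and further splits on whether the HV list or the MV list empties first and on the relation between the indices $r$ and $t$; \cref{lem:upper-13} is invoked in one sub-case to get a $\tfrac13$ bound on certain MV items. The constant $\tfrac{16}{21}$ is what makes \emph{those} inequalities close, not the one you wrote down. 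Without the safe-bag pairing idea and the active/inactive bookkeeping, your proposal has no mechanism to beat the $\tfrac{3\alpha}{2}$ per-bag bound, and the proof cannot close.
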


\begin{algorithm}[h]
\caption{$\frac{16}{21}$-MMS for $3$ types}\label{alg:main3types}
\begin{algorithmic}[1]
    \State \textbf{Input:} A $3$-type $\text{ONI}_\alpha$ instance $\I = ([n], [m], \{T_1,T_2, T_3\}, \{v_1,v_2,v_3\})$, $\alpha = \frac{16}{21}$.
    \State \textbf{Output:} $\frac{16}{21}-$MMS. 
    \State Define a $1$-type $\text{ONI}_\alpha$ instance $\Icalhat = ([n],[m], \{n\}, \{v_1\})$
    \State Let $A$ be the SHV $\frac{4}{5}-$MMS partition after running Algorithm \ref{alg:SHV} on $\Icalhat$.
    \State Using Algorithm \ref{alg:organize}, partition bags of $A$ into $C_1,C_2,C_3,C_4$.
    \If{$|C_2| > T_2$, $|C_3|> T_3$} \Comment{Case 1}
        \State Assign $T_2$ bags from $C_2$ to all type 2 agents.
        \State Assign $T_3$ bags from $C_3$ to all type $3$ agents. 
        \State Assign the remained bags to all type 1 agents. 
        \EndIf

     \If{$\exists i,i'\in\{2,3\} \text{s.t.} i\neq i', |C_i| > T_i$, $|C_{i'}|\leq T_{i'}$:} \Comment{Case 2}
     \State Assign $T_i$ bags of $C_i$ to all type $i$ agents. 
     \State Type $i'$ sorts all remained bags in an ascending order of valuation.
     \State Assign the first $T_1$ bags to type $1$ agents. 
     \If{all remained bags are claimed by type $i'$} 
     \State Assign all the remained bags to type $i'$ agents. 
     \Else 
    \State $P\gets$ items from all remained bags.
     \State Put each HV item of $P$ in a new bag.
     \State Fill each new bag with remaining items from \( P \) until a type \( i' \) agent claims it, then assign it to her.
     \EndIf
     \Else 
            \If{$|C_1|\leq T_1$}\Comment{Case 3}
                    \State Assign all bags in $C_1$ to some type 1 agents.
                    \State Assign all bags in $C_2$ to some type 2 agents.
                    \State Assign all bags in $C_3$ to some type $3$ agents.
                    \State Assign all remaining bags in \( C_4 \) to any remaining agents.
                    \EndIf
                    \If{$|C_1| > T_1$}\Comment{Case 4}
                        \State Assign $T_1$ bags in $C_1$ to all agents of type $1$. 
                        \State $P\gets$ items from all remained bags.
                        \State Let $H = HV\cap P$, $M=MV\cap P$, $L  = LV\cap P$. 
                        \State Run Algorithm \ref{alg:case4} with $\alpha = \frac{16}{21}$.
                    \EndIf
     \EndIf
\end{algorithmic}
\end{algorithm}

\begin{algorithm}
\caption{Clustering bags of $A$}\label{alg:organize}
\begin{algorithmic}[1]
    \State \textbf{Input:} $A= \{A_1, \cdots ,A_n\}$, $\alpha = \frac{16}{21}$.
    \State \textbf{Output:} A partition of bags in $A$ into 4 classes $C_1, C_2,C_3, C_4$.
    \For{$i\in\{1,2,3,4\}$}  $C_i \gets \emptyset$
    \EndFor
    \For{$a \in A$}
        \If{$v_2(a) < \alpha, v_3(a) < \alpha$} $C_1 \gets C_1 \cup \{a\}$.\EndIf
        \If{$v_2(a) \geq \alpha$, $v_3(a) < \alpha$} $C_2 \gets C_2\cup \{a\}$.\EndIf
        \If{$v_2(a) < \alpha$, $v_3(a) \geq \alpha$} $C_3 \gets C_3 \cup \{a\}$
        .\EndIf
        \If{$v_2(a) \geq \alpha$, $v_3(a) \geq \alpha$} $C_4 \gets C_4 \cup \{a\}$
        .\EndIf
    \EndFor
\end{algorithmic}
\end{algorithm}

\subsubsection{\textbf{Case 1: If $|C_2| > T_2$ and $|C_3|> T_3$}}
After allocating $T_2$ bags from $C_2$ to all type $2$ agents and $T_3$ bags from $C_3$ to all type $3$ agents, the remaining bags are assigned to all type $1$ agents. Since all bags in $A$ are claimed by type $1$ agents, all bags in $C_2$ by type $2$ agents, and all bags in $C_3$ by type $3$ agents, this allocation achieves a $\frac{16}{21}$-MMS guarantee.

\subsubsection{\textbf{Case 2: If $\exists i,i'\in\{2,3\} \text{ s.t. } i\neq i', |C_i| > T_i$ and $|C_{i'}|\leq T_{i'}$}}
Let \( A' \) denote the set of \( T_i \) arbitrary bags allocated to type \( i \) agents from \( C_i \). After this allocation, only two types of agents remain. Type 1 agents will receive \( T_1 \) bags from $A\setminus A'$. As each of the bags in $A$ has a value of at least \( \frac{4}{5} \) for type 1 agents, all of them are claimed by these agents. Since all the bags in \( A' \) are valued at less than \( \alpha \) by type \( i' \) agents, the loss incurred by type \( i' \) is minimal. Essentially, this scenario is equivalent to considering type \( i' \) agents forfeiting these bags to type \( 1 \) agents in a two type setting where the number of type $1$ agents is $T_1+T_i$. 
\begin{restatable}{lemma}{lemcase}
\label{lem:case2}
If $\exists i,i'\in\{2,3\} \text{ s.t. } i\neq i', |C_i| > T_i$ and $|C_{i'}|\leq T_{i'}$, Algorithm \ref{alg:main3types} returns a $\frac{16}{21}$-MMS. 
\end{restatable}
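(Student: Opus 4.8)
The plan is to follow the blueprint of the proof of \cref{thm:two}, but tracking the valuation $v_{i'}$ of the minority type that ``forfeits'' bags to type~$1$. By the symmetry of the hypothesis I would first assume \wLoG\ that $i=2$ and $i'=3$, so $|C_2|>T_2$ and $|C_3|\le T_3$. The allocation is the one prescribed by \cref{alg:main3types}: give $T_2$ bags of $C_2$ to the type-$2$ agents (each is claimed, since $v_2(a)\ge\alpha$ for every $a\in C_2$); then let type~$3$ sort the remaining $n-T_2$ bags in ascending order of $v_3$ and assign the first $T_1$ of them, call this set $F$, to the type-$1$ agents (each is claimed, since every bag of $A$ has $v_1$-value at least $\frac45\ge\alpha$). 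If all of the last $T_3$ bags are claimed by type~$3$, the allocation is already $\frac{16}{21}$-MMS; the work is in the other case.

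So suppose at least one of the last $T_3$ bags has $v_3$-value below $\alpha$. Since the bags are sorted in ascending $v_3$-order, the cheapest of those $T_3$ bags for type~$3$, and hence every bag in $F$, has $v_3<\alpha$; thus $\sum_{a\in F}v_3(a)<T_1\alpha$. Moreover the $T_2$ bags taken from $C_2$ have $v_3<\alpha$ each by definition of $C_2$, contributing less than $T_2\alpha$. Since the instance is normalized, $v_3([m])=n$, so the pool $P$ formed from the last $T_3$ bags satisfies $v_3(P)>n-(T_1+T_2)\alpha=n-(n-T_3)\alpha$, using $T_1+T_2+T_3=n$. In the bag-filling step each of the $T_3$ HV items of $P$ starts its own bag, and by \cref{col:bound} every other item of $P$ has $v_3<\alpha/2$, so a bag has $v_3$-value below $\frac{3\alpha}{2}$ the moment it becomes claimed.

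The remaining step is a conservation argument: if the pool were exhausted before all $T_3$ bags are claimed, then at that instant every item lies in some bag, at most $T_3-1$ bags are completed (each of $v_3$-value $<\frac{3\alpha}{2}$) and every other bag has $v_3$-value $<\alpha$, so $v_3(P)<(T_3-1)\frac{3\alpha}{2}+\alpha<T_3\frac{3\alpha}{2}$. Combining with the lower bound on $v_3(P)$ gives $n-(n-T_3)\alpha<T_3\frac{3\alpha}{2}$, i.e.\ $T_3>\frac{2n(1-\alpha)}{\alpha}=\frac{5n}{8}$ for $\alpha=\frac{16}{21}$ — contradicting $T_3\le n/3$, which holds because $T_1\ge T_2\ge T_3$ sum to $n$. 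Hence bag-filling finishes, every type-$3$ agent gets a claimed bag, and the allocation is $\frac{16}{21}$-MMS. I expect the only delicate point to be getting this last inequality exactly right (the strict bounds and the ``at most $T_3-1$ completed bags'' count), together with the observation that the slack needed is precisely what the ordering $T_1\ge T_2\ge T_3$ guarantees; everything else parallels \cref{thm:two} line for line.
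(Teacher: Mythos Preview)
Your argument is correct and follows the same approach as the paper: bound the total $v_{i'}$-value of all bags already given away by $(n-T_{i'})\alpha$, bound each bag produced by bag-filling by $\tfrac{3\alpha}{2}$, and compare against the normalized total $n$. The paper phrases the last step as ``total value of all assigned bags is at most $(n-T_{i'})\alpha + T_{i'}\tfrac{3\alpha}{2} \le \tfrac{5n\alpha}{4} < n$'' using $T_{i'}\le n/2$, whereas you phrase it as a contradiction forcing $T_{i'}>\tfrac{5n}{8}$; these are equivalent.

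One presentational wrinkle: your ``\wLoG\ $i'=3$'' is not quite a symmetry, because you finish with $T_3\le n/3$, which has no analogue for $T_2$. The argument still goes through for $i'=2$ since $T_2\le n/2<\tfrac{5n}{8}$, so simply replace the final line by ``$T_{i'}\le T_1$ and $T_1+T_{i'}\le n$ give $T_{i'}\le n/2$'' and drop the WLOG. This is exactly what the paper does.
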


\begin{proof}
Let $A_1 ,\cdots , A_{n-T_i}$ be the remained bags of $A$ after $T_i$ bags of $C_i$ are assigned to all type $i$ agents, and type $i'$ sorted the rest of the $n-T_i$ bags in an ascending order of valuation i.e. $v_{i'}(A_{j}) \leq v_{i'}(A_{j+1})$ for $j\in[n-T_{i}-1]$. 

The first $T_1$ bags are assigned to type $1$ agents, and the remained bags will be 
$$A'' = \{A_j\}^{n-T_i}_{j=T_1+1}.$$
If $v_{i'}(A_{T+1}) \geq \alpha$, since bags are sorted in ascending order, $v_{i'}(a) \geq \alpha$ for all $a\in A''$. Therefore, all the remained bags are claimed by type $i'$, and we'll assign them to type $i'$ agents. In this case, all the assigned bags are claimed by the corresponding agents, and therefore the assignment is a $\frac{16}{21}$- MMS. 

On the other hand if $v_{i'}(A_{T+1}) < \alpha$, as valuations are in an ascending order, $v_{i'}(A_j) \leq v_{i'}(A_{T_1+1})$ for all $j \in [T_1]$. Hence, $v_{i'}(A_j) < \alpha$ for all $j \in [T_1]$, therefore all the bags assigned to type $1$ agents have value of less than $\alpha$ for type $i'$. Note that as bags that were assigned to type $i$ agents were from $C_i$, and bags in $C_i$ have a value of less than $\alpha$ for type $i'$(when $\{i,i'\} \subseteq \{2,3\}$), any bag in $A\setminus A''$ is valued less than $\alpha$ for type $i'$. As $|A\setminus A''| = T_1 + T_i = n-T_{i'}$, 
\begin{equation}\label{ineq:first}
    \sum_{a\in A\setminus A''} v_{i'}(a) < (n-T_{i'})\alpha.
\end{equation}
Let \( P \) denote the set of remaining items after pooling all the items from the bags in \( A'' \). 
Since each bag in \( A'' \) contains exactly one HV item, there are precisely \( T_{i'} \) HV items in \( P \), 
and we require exactly \( T_{i'} \) bags for type \( i' \) agents. By placing each HV item into a separate new bag, 
the remaining items in \( P \) will no longer contain any HV items. Hence, according to Corollary \ref{col:bound}, 
the value of each remaining item in the pool is at most \( \frac{\alpha}{2} \).

Let \( A_{new} \) denote the set of newly created bags. Performing bag filling for each bag in \( A_{new} \) 
until it is claimed by a type \( i' \) agent ensures that the total value of each bag does not exceed \( \frac{3\alpha}{2} \), 
as the value of the last added item can be at most \( \frac{\alpha}{2} \). Therefore,

\begin{equation}\label{ineq:second}
    \sum_{a\in A_{new}}v_{i'}(a) \leq T_{i'}\frac{3\alpha}{2}.
\end{equation}
Set of bags assigned to all agents is $(A\setminus A'') \cup A_{new}$. Putting inequalities \eqref{ineq:second} and \eqref{ineq:first} together we obtain
\begin{equation}
    \begin{aligned}
        \sum_{a\in (A\setminus A'') \cup A_{new}} v_{i'}(a) & < (n-T_{i'})\alpha + T_{i'} \frac{3\alpha}{2}.
    \end{aligned}
\end{equation}
Since type $1$ is the majority type, and $i'\in\{2,3\}$, $T_{i'} \leq T_1$. Since the total number of agents is $n$, $T_{i'}$ cannot be greater than $\frac{n}{2}$. Therefore,
\begin{equation}
    \begin{aligned}
        \sum_{a\in (A\setminus A'') \cup A_{new}} v_{i'}(a)
        & \leq \frac{n}{2}\alpha + \frac{n}{2}\cdot \frac{3\alpha}{2} < n
    \end{aligned}
\end{equation}
where we used $\alpha = \frac{16}{21}$. Hence, we never run out of goods in the bag filling phase or, equivalently, all type $i'$ agents receive a claimed bag at some point during the algorithm. Note that since type $i$ claims all bags in $C_i$, and all bags in $A$ are claimed by type $1$ agents, this assignment is a $\frac{16}{21}$-MMS. 
\end{proof}

\subsubsection{\textbf{Case 3: If $|C_2| \leq T_2$, and $|C_3|\leq T_3$, $|C_1|\leq T_1$}}
After assigning all bags in $C_1$, $C_2, C_3$ to some agents of types $1,2,3$ respectively, we are remained with bags in $C_4$ that are claimed by all types. Hence we can assign them to any remaining agents, and obtain a $\frac{16}{21}$-MMS assignment. 

\subsubsection{\textbf{Case 4: If $|C_2| \leq T_2$, and $|C_3|\leq T_3$, $|C_1|> T_1$}}
\begin{definition}
    A type is considered \textit{saturated} if every agent of that type has received a claimed bag in the assignment; otherwise, it is \textit{unsaturated}.
\end{definition}
\begin{definition}
     A bag $B$ is \emph{safe} for a type $i$ agents if $v_i(B)\leq 1.$
\end{definition}
After assigning $T_1$ bags in $C_1$ to all agents of type $1$, type $1$ is saturated. Let \( H \), \( M \), and \( L \) denote the lists of high-valued, middle-valued, and low-valued items, respectively, that remain in the remaining \( n - T_1 \) bags. Each list is ordered in descending order of valuation.
Since each assigned bag to type $1$ had exactly one HV item, $|H| = n-T_1$. On the other hand, since any number of middle-valued items might remain after the assignment of bags to the majority type, $0 \leq|M|\leq n$. Finally, \cref{alg:case4} is invoked.  \\

\textbf{Main Ideas of Algorithm \ref{alg:case4}:}
The algorithm consists of two parts. Initially, both type 2 and type 3 are active and unsaturated. A type remains active until it meets one of the conditions specified in lines \ref{line:inactive1} or \ref{line:inactive2}, or until it becomes saturated in the first part of the algorithm (by line \ref{line:inactive3}). Once a type becomes inactive, it stays inactive; if none of these conditions are met, it may never become inactive.

As the algorithm proceeds, items are grouped into bundles that are either immediately assigned to agents or saved for future assignment. In either case, the bundled items become unavailable and are removed from $H$, $M$, and $L$, thereby reducing the number of available items. Consequently, at any given stage, $H$, $M$, and $L$ represent the currently available high-valued, middle-valued, and low-valued items. The algorithm operates in two parts, as described below.

\begin{enumerate}
    \item \textbf{First Part:}  
    As long as there is an active type and remaining high-valued and middle-valued items, the algorithm constructs certain bundles that are \emph{safe} for all active types. Each bundle contains one HV item and one MV item. The bags are either:
    \begin{itemize}
        \item Assigned to an agent from an unsaturated type that claims the bag, or
        \item Saved in \( F \) for the second part if no unsaturated type claims the bag.
    \end{itemize}
    This process continues until no high-valued or middle-valued items remain, or no active types remain. Given a set of available high-valued and middle-valued items \( H \) and \( M \), where \( H \neq \emptyset \) and \( M \neq \emptyset \), type \( i \) becomes inactive (i.e., is no longer active) if it becomes saturated (by line \ref{line:inactive3}), or if \( v_i(M[1],H[-1]) > 1 \) and there exists no \( j \) such that \( \alpha \leq v_i(M[j], H[-1]) \leq 1 \), by lines \ref{line:inactive1} or \ref{line:inactive2}. Recall that $H[-1]$ is the least valued available HV item.  

    \item \textbf{Second Part:}  
    When both types are inactive or either \( M = \emptyset \) or \( H = \emptyset \), the second part begins. The collection of bags \( F \) inherited from the first part consists of bags, each containing one HV and one MV item, with a total value of less than \( \alpha \) for any unsaturated type. If \( H \neq \emptyset \), each remaining item in \( H \) is placed in a separate new bag and the bag is added to \( F \).

    The algorithm then picks an arbitrary bag from \( F \) and fills it with remaining items until an unsaturated type claims it. The bag is then assigned to an arbitrary agent belonging to an unsaturated type that claimed it. This bag-filling procedure repeats until all types are saturated. 
\end{enumerate}

Let us illustrates the first part of \Cref{alg:case4} on the following example.

\begin{example}\label{example1}
Consider an instance with $9$ agents, where every group of three agents belongs to the same type. Under our assumption, the instance is $ONI_{\alpha}$ and is therefore ordered. In particular, the set of high-valued (HV) items is 
\(
\text{HV} = \{1, 2, \ldots, 9\},
\)
the set of middle-valued (MV) items is 
\(
\text{MV} = \{10, 11, \ldots, 18\},
\)
and all remaining items are low valued.

Each type 1 agent has taken a bundle containing one high-valued item. Suppose that items $\{2,4,9\}$ are the high-valued items assigned to type 1 agents, and assume that none of these agents have taken any middle-valued item. In \cref{fig:example}, only the high-valued and middle-valued items are displayed, with the items taken by type 1 agents crossed out.

Next, consider the valuations for types 2 and 3. Type 3 agents assign an equal value of $\frac{\alpha}{3}$ to each of the top $18$ items, while the valuation function for type 2 agents over these items is as follows:
\[
v_2(j)=
\begin{cases}
\alpha-\epsilon, & \text{if } j\in\{1,\ldots,5\}, \\
\frac{\alpha}{2}+2\epsilon, & \text{if } j=6, \\
\frac{\alpha}{2}-\epsilon, & \text{if } j\in\{7,\ldots,13\}, \\
1-\alpha-\epsilon, & \text{if } j=14, \\
\frac{\epsilon}{2}, & \text{if } j\in\{15,\ldots,18\}.
\end{cases}
\]

Note that any bundle containing only two items is valued below $\alpha$ for type~3 agents. Consequently, no bag constructed in the first part of \cref{alg:case4} is ever claimed by a type~3 agent.

Initially, both types~2 and~3 are unsaturated and active. In each iteration of the while loop in the first part of \cref{alg:case4}, the algorithm constructs a bundle by pairing the least valued available high-valued item with the highest valued available middle-valued item. 

At the outset, the algorithm constructs the bundle
\(
B_1 = \{8,10\}.
\)
Since $B_1$ is valued below $\alpha$ for both active types, it is saved in the set $F$. In the next iteration, the bundle
\(
B_2 = \{7,11\}
\)
is constructed, and as it too is valued below $\alpha$ for both active types, it is added to $F$.

Subsequently, the algorithm constructs the bundle
\(
B_3 = \{6,12\}.
\)
This bundle is valued above $\alpha$ but below $1$ for type~2, while it remains below $\alpha$ for type~3; hence, $B_3$ is allocated to a type~2 agent.

Next, the algorithm forms the bundle
\(
B_4 = \{5,13\}.
\)
Since $B_4$ is valued above $1$ for type~2, the algorithm searches for an available MV item that, when paired with item $5$, yields a bundle whose value for type~2 lies between $\alpha$ and $1$. The bundle
\(
B_5 = \{5,14\}
\)
satisfies this condition and is thus allocated to a type~2 agent.

The algorithm then constructs the bundle
\(
B_6 = \{3,13\}.
\)
Again, as $B_6$ is valued above $1$ for type~2, the algorithm seeks an available MV item that, when paired with item $3$, results in a value between $\alpha$ and $1$ for type~2. Since no such middle-valued item exists, type~2 becomes inactive at this point.

In the next iteration of the while loop, $B_6$ is constructed once more and assigned to the last remaining type~2 agent. With all three type~2 agents now having received a bundle, type~2 is saturated.

Subsequently, the algorithm constructs the bundle
\(
B_7 = \{1,15\}.
\)
Since $B_7$ is valued below $\alpha$ by the only active and unsaturated type (type~3), it is added to $F$. At this stage, no further high-valued items remain, and the algorithm proceeds to part~2. It is important to note that type~3 never becomes inactive.

At the beginning of part~2, we have
\(
F = \{B_1, B_2, B_7\},
\)
and no available high-valued items remain. Each bag in $F$ is subsequently augmented with the remaining items until a type~3 agent claims it, at which point the bag is allocated to that type~3 agent.\qed

\begin{figure}
    \centering
    \begin{tikzpicture}[every node/.style={draw, circle, minimum size=1cm, inner sep=0pt}]
  % Bottom row: nodes 1 to 9
  \foreach \i in {1,...,9} {
    \node (b\i) at ({(\i-1)*1.5},0) {\i};
  }
  
  % Top row: nodes labeled 18 down to 10 (so t9 is labeled 10)
  \foreach \j in {1,...,9} {
    \pgfmathtruncatemacro{\num}{19-\j}
    \node (t\j) at ({(\j-1)*1.5},2) {\num};
  }
  
  % Cross out bottom nodes 2, 4, and 9
  \foreach \i in {2,4,9} {
    \draw[line width=1pt] (b\i.north east) -- (b\i.south west);
    \draw[line width=1pt] (b\i.north west) -- (b\i.south east);
  }
  
  % Draw a rotated, thinner oval (ellipse) around bottom node 8 (b8) and top node 10 (t9)
  % \node[fit=(b8)(t9), draw=red, thick, shape=ellipse, 
  %       inner xsep=5pt, inner ysep=20pt, rotate=-30] {};
    \draw[red, thick] (11.3,1) ellipse [x radius=2.2, y radius=0.65, rotate=53, xshift =0pt];

    \draw[red, thick] (9.8,1) ellipse [x radius=2.2, y radius=0.65, rotate=53, xshift =0pt];

    \draw[green, thick] (8.3,1) ellipse [x radius=2.2, y radius=0.65, rotate=53, xshift =0pt];

    \draw[green, thick] (6,1) ellipse [x radius=2.2, y radius=0.65, rotate = 90, xshift =0pt];

    \draw[green, thick] (5.2,1) ellipse [x radius=3.5, y radius=0.65, rotate = 25, xshift =0pt];

    \draw[red, thick] (2.2,1) ellipse [x radius=3.7, y radius=0.65, rotate = 25, xshift =0pt];

    \node[draw=none] at (-1.7, 0) {\textbf{HV :}};

    \node[draw=none] at (-1.7, 2) {\textbf{MV :}};

\end{tikzpicture}
    \caption{This figure illustrates Example \ref{example1}. The red bundles are saved in $F$, and the green bundles are allocated to type $2$ agents. }
    \label{fig:example}
\end{figure}
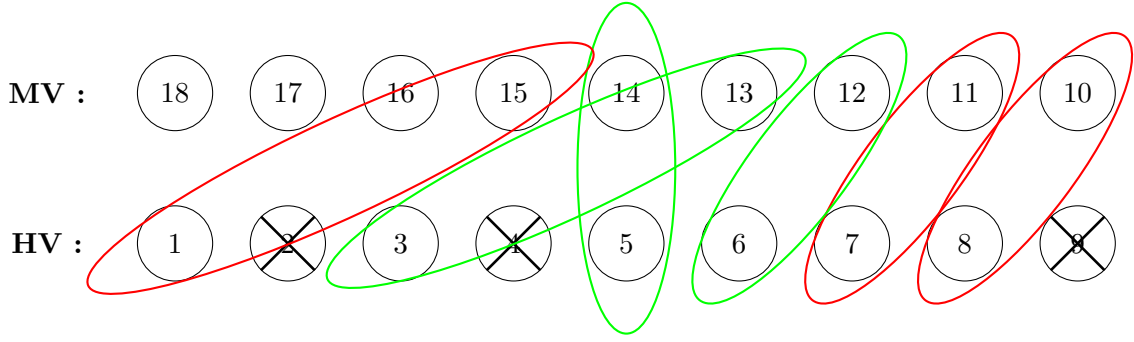

\end{example}

\begin{algorithm}[h]
\caption{Algorithm for Case 4}\label{alg:case4}
\begin{algorithmic}[1]
    \State Let $F\gets \emptyset$, $\mathtt{active} \gets \{2,3\}$, $\mathtt{unsaturated}\gets \{2,3\}$.\\\Comment{Part 1 begins\hspace{5in}}
    \While{$H\neq \emptyset$ and $M\neq \emptyset$ and $\mathtt{active}\neq \emptyset$}
        \State $B \gets \{M[1], H[-1]\}$.
        \State Let \( X = \{ i \in \mathtt{active} \mid v_i(B) > 1 \} \).
        \If{$|X| = 0$} 
            \If{$\exists j\in \mathtt{unsaturated}, v_j(B) \geq \alpha$} 
            \State Assign B to a type $j$ agent. 
            \Else, $F\gets F\cup B$. 
        \EndIf
        \EndIf
        \If{$|X| = 1$}
            \State Let $i$ be the unique element of $X$.
            \If{$\exists j$ s.t. $\alpha \leq v_i(M[j], H[-1]) \leq 1$}
                \State Assign $B' = \{M[j], H[-1]\}$ to a type $i$ agent.
             \Else, $\mathtt{active} \gets \mathtt{active}\setminus \{i\}$. \label{line:inactive1}
             \EndIf     
        \EndIf
        \If{$|X| = 2$}
            \If{$\exists i\in X$, s.t. $\nexists j: \alpha \leq v_i(M[j],H[-1]) \leq 1$}
                \State $\mathtt{active} \gets \mathtt{active}\setminus \{i\}$.\label{line:inactive2}
            \Else 
                \State Let $j$ be the largest index where for some $i\in X$, $\alpha \leq v_i(M[j], H[-1]) \leq 1$.
                \State Assign $B' = \{M[j], H[-1]\}$ to type $i$.
            \EndIf    
        \EndIf
        \State If a bag is saved or assigned, remove its items from $M$ and $H$.
        \State If a type is saturated, remove it from $\mathtt{active}$ and $\mathtt{unsaturated}$.\label{line:inactive3} \\ \Comment{Part 2 begins \hspace{5in}}
    \EndWhile
    \State Place all the remaining items of $H$ in a separate new bag and add the bag to $F$.
    \State Let $R=M \cup L$ be the pool of remained items.
    \While{$\mathtt{unsaturated} \neq \emptyset$}
        \State Choose a bag \( B \) from \( F \), fill it with available items from \( R \), and assign it to any agent of type in \( \mathtt{unsaturated} \) upon claim.
        \State Update $F$, $R$, and $\mathtt{unsaturated}$.
    \EndWhile  
\end{algorithmic}
\end{algorithm}

The following proposition is a key component of the analysis. 
\begin{restatable}{proposition}{lemsafe}
\label{lem:safe}
    Throughout the first part of the algorithm, as long as a type remains active, all saved and assigned bags are guaranteed to be safe for that type.
\end{restatable}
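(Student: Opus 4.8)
The plan is to track the value that each active type assigns to every bag the algorithm builds in Part~1, and show it never exceeds~$1$. Recall that every bag constructed in Part~1 has the form $B' = \{M[j], H[-1]\}$ for the current least-valued available HV item $H[-1]$ and some available MV item $M[j]$ (with $j=1$ in the default case). I would fix an arbitrary type $i$ and argue by induction on the iterations of the while loop that as long as $i \in \mathtt{active}$, every bag that has been saved to $F$ or assigned so far satisfies $v_i(\cdot) \le 1$. The base case is trivial since $F=\emptyset$. For the inductive step, consider the bag $B'=\{M[j],H[-1]\}$ produced in the current iteration, and split on the set $X = \{\,i' \in \mathtt{active} : v_{i'}(B) > 1\,\}$ where $B = \{M[1], H[-1]\}$ is the default bag.

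The case analysis is where the real work lies. If $i \notin X$, then $v_i(B) \le 1$; but the bag actually saved/assigned might be $B'=\{M[j],H[-1]\}$ with $j>1$, chosen because some \emph{other} active type $i'$ had $v_{i'}(B)>1$. Since $M$ is sorted in descending order and $j \ge 1$, we have $v_i(M[j]) \le v_i(M[1])$, hence $v_i(B') = v_i(M[j]) + v_i(H[-1]) \le v_i(M[1]) + v_i(H[-1]) = v_i(B) \le 1$, so $B'$ is safe for $i$. If instead $i \in X$, so $v_i(B) > 1$: when $|X|=1$, either the algorithm finds $j$ with $\alpha \le v_i(M[j],H[-1]) \le 1$ — and exactly that bag is assigned, so it is safe for $i$ by construction — or no such $j$ exists and $i$ is removed from $\mathtt{active}$ on line~\ref{line:inactive1}, so the invariant for $i$ is vacuous thereafter. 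When $|X|=2$ and $i \in X$: if the chosen index $j$ and the witnessing type happen to be $i$ itself, the assigned bag satisfies $\alpha \le v_i(B') \le 1$ directly; if the witnessing type is the other type $i' \in X$, I need to check that the bag $\{M[j],H[-1]\}$ assigned to $i'$ is nonetheless safe for $i$ — here $j$ is the \emph{largest} index for which \emph{some} type in $X$ gets a value in $[\alpha,1]$, and since $i \in X$ means $v_i(M[1],H[-1])>1$, I must rule out that a large-index (small-value) MV item still overshoots for $i$; but $v_i(M[j]) \le v_i(M[1])$ is not enough, so the key point is that $j$ being the largest valid index forces $j$ to be at least as large as every index valid for $i$, and for any index $j'$ valid for $i$ we have $v_i(M[j'],H[-1]) \le 1$, hence $v_i(M[j],H[-1]) \le v_i(M[j'],H[-1]) \le 1$ provided such a $j'$ exists; if no $j'$ valid for $i$ exists then $i$ would have been removed on line~\ref{line:inactive2}, contradicting $i \in \mathtt{active}$.

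The remaining subtlety is the $|X|=2$, $i \in X$ branch where the algorithm does \emph{not} remove anyone and assigns to $i'\ne i$: I must confirm that in this branch $i$ does have a valid index, which is exactly the condition checked on line~\ref{line:inactive2} (the algorithm only reaches the else-branch when \emph{every} $i'' \in X$ has a valid index), so $i$ is guaranteed a valid index $j_i$ and the monotonicity argument $v_i(M[j],H[-1]) \le v_i(M[j_i],H[-1]) \le 1$ closes the case. The main obstacle I anticipate is precisely bookkeeping this last branch correctly — making sure the "largest index over types in $X$" rule does not produce a bag that is unsafe for the other type in $X$ — and being careful that, once a type leaves $\mathtt{active}$, we make no further safety claims about it so that lines~\ref{line:inactive1} and~\ref{line:inactive2} genuinely discharge the obligation. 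Finally, I would note that bags removed from $M,H$ in earlier iterations only shrink the pools, so $H[-1]$ weakly increases in $i$-value over time while this does not affect the per-iteration argument, which only uses the ordering of $M$ at the current step.
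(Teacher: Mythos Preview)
Your proposal is correct and follows essentially the same approach as the paper's proof: a per-iteration case analysis on the set $X$, using the descending order of $M$ (common to all types since the instance is ordered) to push from $v_i(M[1],H[-1])\le 1$ to $v_i(M[j],H[-1])\le 1$, and in the $|X|=2$ branch using that the algorithm only assigns a bag when \emph{every} type in $X$ has a valid index, together with the ``largest index'' rule, to conclude safety for both. Your organization differs slightly (you split first on $i\in X$ versus $i\notin X$, whereas the paper splits on whether $i$ is the only active type), but the substantive arguments coincide.
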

\begin{proof}
    Let us consider an arbitrary iteration of the first while loop where type $i$ is active. Let $H$ and $M$ be the set of available high-valued and middle-valued items respectively. Note that $B = \{M[1],H[-1]\}$. Let us consider all different possibilities and prove that in any case, if a bag is assigned or save, it is safe for type $i$.  
    
    \paragraph{}If type $i$ is the only active type, and $v_i(B) < 1$, either this bag is assigned or saved in $F$. In either cases, bag $B$ is safe for it. If $v_i(B) > 1$, either type $i$ becomes inactive or a bag $B'$ where $\alpha \leq v_i(B') \leq 1$ is assigned. In this case $B'$ is safe for type $i$.

    \paragraph{} If type $i$ is \textbf{not} the only active type, the other active type is $i' = \{2,3\}\setminus i$. If for both active types value of $B$ is less than $1$, $B$ is either saved or assigned, and in both cases $B$ is safe for type $i$. 
    
    If type $i$ is the only active type that values $B$ more than $1$, either it gets inactive or a bag $B'$ where $\alpha \leq v_i(B') \leq 1$ is assigned. Therefore $B'$ is safe for type $i$.
    
    If type $i'$ is the only active type that values $B$ more than $1$, either type $i'$ gets inactive or a bag \( B' = \{M[j], H[-1]\} \) is assigned for some \( j > 1 \), satisfying the condition \( \alpha \leq v_{i'}(B') \leq 1 \). Note that by assumption $v_i(B) < 1$ where $B = \{M[1],H[-1]\}$. Since $M$ is an ordered list, $v_i(M[j]) \leq v_i(M[1])$, therefore $v_i(B') \leq v_i(B)$. Consequently we obtain $v_i(B')<1$ implying $B'$ is a safe bag for type $i$.
    
    If both types $i$, $i'$ value $B$ more than $1$, either one of the types get inactive, or $\exists j,j'$ where $\alpha \leq v_i(M[j],H[-1])\leq 1$, and $\alpha \leq v_{i'}(M[j'], H[-1]) \leq 1$. Consider the largest index $j''$ where one of the active types values $B' = \{M[j''],H[-1]\}$ between $\alpha$ and $1$. The algorithm will assign $B'$ in this iteration. Suppose for contradiction that $B'$ is not safe for type $i$, hence $v_i(M[j''],H[-1]) > 1$. Since $\exists j$ where $\alpha \leq v_i(M[j],H[-1])\leq 1$, it follows that $v_i(M[j]) \leq v_i(M[j''])$. As \( M \) is ordered in descending order of valuation, \( M[j] \) must appear later than \( M[j''] \) in the list. Therefore $j > j''$, implying $\exists j > j''$ such that $\alpha \leq v_i(M[j],H[-1])\leq 1$, and this is a contradiction with the assumption that $j''$ is the largest index. Therefore $B'$ should be safe for type $i$. 
    
\end{proof}

We now present the key lemma required for the proof of \cref{lem:fourthcase}. Due to the nuanced complexity of its proof, we defer the detailed argument to \cref{app:case4} to maintain a smooth exposition.

\begin{lemma}\label{thm:case4}
    In Algorithm \ref{alg:case4}, both types $2$ and $3$ become saturated.
\end{lemma}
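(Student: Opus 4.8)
The goal is to show Algorithm~\ref{alg:case4} saturates both type~2 and type~3. The proof naturally splits along the structure of the algorithm: a \emph{Part~1 analysis} (tracking how many safe two-item bundles get built and distributed) and a \emph{Part~2 counting argument} (showing the pooled leftover value suffices to finish off every still-unsaturated agent by bag-filling). My plan is to let $a_i$ be the number of type~$i$ agents still unsaturated when Part~2 begins, and let $F$ be the collection of surviving bags (each a single HV item, or an $\{$HV, MV$\}$ pair with value $<\alpha$ for every unsaturated type, by Proposition~\ref{lem:safe} together with the explicit conditions in lines~\ref{line:inactive1}--\ref{line:inactive3}). Since every one of the $n-T_1$ HV items ends up in exactly one bag that is either assigned in Part~1 or sits in $F$, and since we are in Case~4 where $|C_1|>T_1$ forces $T_2+T_3 = n-T_1 \le |C_1| \le$ (number of HV-bearing bags), I would first pin down that $|F| \ge a_2 + a_3$, i.e.\ there are at least as many leftover bags as remaining agents to satisfy. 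This uses that each bundle handed out in Part~1 went to exactly one agent of an unsaturated type, so the HV items consumed outside $F$ number exactly $(T_2 - a_2) + (T_3 - a_3)$.

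**The value-accounting core.** The heart of the argument is a potential/volume bound: I would show the total $v_{i'}$-value available at the start of Part~2 (over all items in the $F$-bags plus the pool $R = M\cup L$) is at least $a_{i'}$ for each unsaturated type $i'$, from which bag-filling cannot get stuck — a bag being filled stays below $\alpha \le \frac{16}{21}$ until the last item, and each added item from $R$ has value $<\alpha/2$ (Corollary~\ref{col:bound}, since HV items were removed), so each completed bag costs $<\frac{3\alpha}{2}$. Then $a_{i'}\cdot \frac{3\alpha}{2} < a_{i'}\cdot\frac{16}{21}\cdot\frac32 = \frac{8}{7}a_{i'}$, and I need the surviving value to exceed this. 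Here is where the normalization ($v_{i'}([m]) = n$) and the upper bounds on what type~1 and the Part~1 assignees could have removed come in: type~1 took $T_1$ bags each of $v_1$-value $\ge\frac45$ but these bags have controlled $v_{i'}$-value; and each of the $(T_2-a_2)+(T_3-a_3)$ Part~1 bundles assigned to type $i''$ has $v_{i'}$-value bounded because it was a safe bag for all active types when created (so $\le 1$) — or, more sharply, when $i'$ was the one receiving it the bundle has value $\le 1$, and when the other type received it we can still bound $v_{i'}$ of the two items by their HV$+$MV caps $\alpha + \alpha/2$. Summing: value removed $\le T_1\cdot(\text{bound}) + (n - T_1 - a_2 - a_3)\cdot\frac{3\alpha}{2}$, so value surviving $\ge n - (\cdots)$, and I'd verify $n - (\cdots) \ge \frac{3\alpha}{2}(a_2+a_3)$ reduces to an inequality in $n, T_1, a_2, a_3$ that holds because $T_1 \ge n/3$ (majority type) and $a_2+a_3 \le T_2+T_3 = n-T_1 \le 2n/3$.

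**The delicate step and how to handle it.** The subtle point — and the place I expect to spend the most effort — is bounding the $v_{i'}$-value of the Part~1 bundles that were assigned to the \emph{other} non-majority type (say $i''$), since those are the bundles over which $i'$ has the least control: a priori $v_{i'}$ of an $\{$HV,MV$\}$ pair could be almost $\frac{3\alpha}{2}$. The resolution must use that the bundle was \emph{safe} for $i'$ at the moment it was built, provided $i'$ was still active then; Proposition~\ref{lem:safe} gives exactly $v_{i'}(B')\le 1$ in that case. The only remaining gap is bundles built after $i'$ went inactive, which can happen only via lines~\ref{line:inactive1}/\ref{line:inactive2}; for those I would separately argue (using Lemma~\ref{lem:upper-13}-style reasoning on the structure $v_{i'}(M[1],H[-1])>1$ with no intermediate $j$) that $i'$ being inactive forces $v_{i'}(H[-1]) > 2/3$ on all remaining HV items, which caps how many such post-inactivity bundles there can be and lets me bound that contribution crudely. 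Assembling these pieces — $|F|\ge a_2+a_3$, per-bag fill cost $<\frac{3\alpha}{2}$, and surviving value $\ge\frac{3\alpha}{2}(a_2+a_3)$ — yields that the Part~2 while-loop terminates with $\mathtt{unsaturated}=\emptyset$, which is the claim.
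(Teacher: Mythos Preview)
Your high-level plan---value accounting against the normalization $v_{i'}([m])=n$, using Proposition~\ref{lem:safe} to cap Part-1 bundles at $1$ while $i'$ is active, and bag-filling in Part~2---matches the paper's strategy. But your arithmetic does not close, and the gap is not in the ``delicate step'' you flagged; it is already in the easy case.

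Concretely: take $T_1=\lceil n/3\rceil$ and suppose \emph{no} bundle is assigned in Part~1 (every $\{M[1],H[-1]\}$ has value $<\alpha$ for both types, so everything goes to $F$). Then $a_2+a_3=n-T_1$, there are no Part-1 bundles to bound, and your accounting reads
\[
T_1\alpha \;+\;(n-T_1)\cdot\tfrac{3\alpha}{2}\;\le\;n,
\]
which rearranges to $T_1\ge 3n/8$. Since $T_1$ can be as small as $n/3$, this fails. The culprit is the uniform $\tfrac{3\alpha}{2}$ cap on Part-2 bags: it is the correct worst case for a single bag whose last added item is MV, but you cannot afford it on all $n-T_1$ bags simultaneously.

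The paper repairs this by tracking \emph{how many} MV items can possibly be the last item added in Part~2, and by splitting into cases accordingly. If Part~1 ends because $M=\emptyset$, every Part-2 bag's last item is LV and costs at most $\tfrac{4\alpha}{3}$, which does close with $T_1\ge n/3$. If Part~1 ends because $H=\emptyset$, then $n-T_1$ MV items were consumed in Part~1, leaving at most $T_1$ MV items in $R$; only that many Part-2 bags can hit $\tfrac{3\alpha}{2}$, the rest are capped at $\tfrac{4\alpha}{3}$. If type $i'$ went inactive, the paper pins down the position $g=H[-1]$ and uses Lemma~\ref{lem:upper-13} (on the pair $g,\,2n{-}g{+}1$) to show the surviving MV items beyond a certain index have $v_{i'}$-value $<\tfrac13$, giving a $\alpha+\tfrac13$ cap on the corresponding bags rather than $\tfrac{3\alpha}{2}$. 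Each of these refined bounds is exactly what is needed to push the total below $n$. The paper also splits the Part-1 assigned bundles into $\tau_i$ bundles with $v_i<\alpha$ and $k_i$ bundles with $v_i\in[\alpha,1]$, which tightens that term to $\tau_i\alpha+k_i$ rather than $(\tau_i+k_i)\cdot 1$; this refinement matters in the mixed cases.

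Your sketch of the post-inactivity argument (``$v_{i'}(H[-1])>2/3$ caps how many post-inactivity bundles there can be'') is also off-target: the paper does not count bundles this way but instead bounds the \emph{value} of the remaining MV items via Lemma~\ref{lem:upper-13}, and then does a careful case split on whether $r\le t$ or $r>t$ (with a further split on $v_i(g,g')>1$ versus $<\alpha$). Without these refinements to the per-bag Part-2 cost, the value-accounting scheme cannot reach~$\tfrac{16}{21}$.
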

Finally, in the next lemma, we show that even in the fourth case \cref{alg:main3types} guarantees a $\frac{16}{21}$-MMS allocation. 
\begin{lemma}\label{lem:fourthcase}
    If $|C_2| \leq T_2$, and $|C_3|\leq T_3$, and $|C_1|> T_1$, \cref{alg:main3types} guarantees a \( \frac{16}{21} \)-MMS allocation.
\end{lemma}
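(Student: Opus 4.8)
The plan is to derive \cref{lem:fourthcase} almost entirely from \cref{thm:case4}, treating the latter as the real engine and this lemma as a thin wrapper around it. First I would record the bookkeeping for the opening step of Case~4 of \cref{alg:main3types}: since $|C_1| > T_1$, there are at least $T_1$ bags in $C_1$, so the algorithm can hand one such bag to each of the $T_1$ type-$1$ agents. Every bag $a \in A$ is a bundle of the SHV $\frac45$-MMS partition of $\Icalhat = ([n],[m],\{n\},\{v_1\})$, so $v_1(a) \ge \frac45 > \frac{16}{21} = \alpha$; because the input is an $\text{ONI}_\alpha$ instance it is normalized, hence $\MMS_1 = 1$ and each type-$1$ agent receives value at least $\alpha\cdot\MMS_1$. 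Thus type~$1$ is saturated after the first line of Case~4.

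Next I would track the residual instance. Exactly $n - T_1 = T_2 + T_3$ bags of $A$ remain, and their items are pooled into $P$; since every bag of $A$ contains exactly one HV item, $|P \cap HV| = n - T_1$, which is also the number of remaining (type-$2$ and type-$3$) agents. \cref{alg:case4} is then invoked on $H = P\cap HV$, $M = P\cap MV$, $L = P\cap LV$. By \cref{thm:case4}, both types~$2$ and~$3$ become saturated during \cref{alg:case4}; unwinding the definition of \emph{saturated}, every type-$2$ and every type-$3$ agent is assigned a bag it claims, i.e.\ a bag of value at least $\alpha$, which again equals $\alpha\cdot\MMS_i$ by normalization. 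Combining with the previous paragraph, every agent ends with a bundle worth at least $\frac{16}{21}$ of its MMS value. Any goods still unassigned at termination of \cref{alg:case4} (leftover items of $R$ or unused bags of $F$) can be appended to any agent's bundle; since valuations are non-negative this only increases values, so the resulting allocation is a complete $\frac{16}{21}$-MMS allocation of the $\text{ONI}_\alpha$ instance.

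The genuinely hard part is \cref{thm:case4} itself, which is why it is deferred to \cref{app:case4}; within the present lemma the only points needing care are (i) confirming the item counts above---in particular that after the $T_1$ assignments the number of residual HV items equals the number of residual agents, so that Part~2 of \cref{alg:case4} has exactly one HV item (or one singleton HV bag) available per remaining agent---and (ii) making the normalization-to-MMS translation explicit, so that the ``$\ge\alpha$'' guarantees produced by \cref{alg:case4} and by the SHV partition are genuinely ``$\ge\alpha\cdot\MMS_i$'' guarantees. Neither of these requires computation; all the substance sits in \cref{thm:case4}, whose proof in turn relies on the safety guarantees of \cref{lem:safe} together with the totally-$\alpha$-irreducibility bounds of \cref{col:bound} to show that the bag-filling in Part~2 never runs out of items.
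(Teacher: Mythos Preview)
Your proposal is correct and follows essentially the same approach as the paper's own proof: both handle type~1 via the SHV $\tfrac{4}{5}$-MMS partition, perform the HV-item count to see that the number of remaining HV items (hence bags in $F$ at the start of Part~2) equals the number of remaining agents, and then defer all the real work to \cref{thm:case4}. Your version is slightly more explicit about normalization and about appending leftover items, but the argument is the same.
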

\begin{proof}
    Notably, each bag saved in \( F \) or assigned to an agent in the first part of \cref{alg:case4} contains exactly one HV item, and the total number of HV items is \( n \). Consequently, before the bag-filling process begins, the number of bags in \( F \) is equal to the number of remaining HV items, which is \( n \) minus the number of HV items already assigned. Since each of the assigned bags has exactly one HV item, the number of bags in \( F \) precisely equals \( n \) minus the number of assigned bags, which corresponds to the number of agents who have not yet received a bag. Thus, to guarantee that every agent receives a claimed bag, it suffices to show that the supply of goods remains sufficient throughout the bag-filling process for the bags in \( F \). This is established in lemma \ref{thm:case4}. Combining this with the fact that all bags assigned to type 1 agents have a value of at least \( \frac{4}{5} \), it follows that Algorithm \ref{alg:main3types} guarantees a \( \frac{16}{21} \)-MMS allocation in the fourth case. 
\end{proof}

\section{Conclusion}
In the fair division of indivisible goods, the maximin share is one of the most extensively studied fairness notions. Determining tight lower and upper bounds on the maximum $\alpha$ for which $\alpha$-MMS allocations are guaranteed to exist remains a fundamental open problem. Since MMS allocations do not always exist even for instances with two agent types, our improved bounds represent a significant step forward in this area. To gain deeper insights into this problem, we focused on the special cases of two and three agent types. A compelling open question is whether uniform improvements can be achieved for any $k<n$ types, with guarantees that decreases as $k$ increases, surpassing the current best-known approximation of \(\frac{3}{4} + \frac{3}{3836}\).

\bibliographystyle{alpha}
\bibliography{main}
\clearpage    
\appendix
\section{Proof of Lemma \ref{thm:case4}}\label{app:case4}
If the second while loop begins with \( F = \emptyset \), it implies that all HV items have already been assigned. Since each HV item is assigned to a unique agent, and the total number of HV items is $n$, it follows that all agents have received a bag. Furthermore, as part 1 of the algorithm assigns a bag only to one of the agents who claimed it, this ensures that all \( n \) agents have received a bag they claimed, meaning all types are saturated.

Now, consider the case where the second while loop begins with \( F \neq \emptyset \). As previously established, the number of bags in \( F \) equals the number of agents who have not yet been assigned a bag. It remains to demonstrate that during the bag-filling process, the supply of goods is sufficient to fill all bags in \( F \) such that each is claimed by agents from an unsaturated type.

Assume, for contradiction, that for some type \( i \in \{2,3\} \), type \( i \) is not saturated by the end of the algorithm. This could only occur if the bag-filling process exhausts the available goods, preventing the allocation of bags with a value of at least \( \frac{16}{21} \) to all agents of type \( i \). However, we will show that such a situation never arises. In all possible cases, the total value of the \( n \) assigned bags cannot exceed the total available value of \( n \), thereby ensuring that type \( i \) is saturated.

Let \( P_i \) denote the set of all bags that were assigned or saved during the \textbf{first} part of the algorithm while type \( i \) was \textbf{active}, and let \( F_i \subseteq P_i \) represent the subset of bags that were specifically saved in \( F \) during this period. Proposition~\ref{lem:safe} shows that $\forall B\in P_i, v_i(B) \leq 1$. We can partition bags in $P_i$ as the following: 
\begin{enumerate}
    \item $P^1_i$ = \{$B\in F_i \mid v_i(B) < \alpha$\}. 
    \item $P^2_i$ = \{$B\in P_i\setminus F_i \mid v_i(B) < \alpha$ \}
    \item $P^3_i$ = \{$B\in P_i \mid \alpha \leq v_i(B) \leq 1$ \}
\end{enumerate}
$P^1_i$ contains the bags that were \emph{saved} in the first part of the algorithm while type $i$ was active. Note that according to the algorithm a bag is saved if and only if its value is less than $\alpha$ for all unsaturated types. Let $b_i = |P^1_i|$. 

$P^2_i$ contains the bags that were \emph{assigned} in the first part of the algorithm while type $i$ was active, where each bag had a value of less than $\alpha$ for type $i$. Note that these bags were assigned because their value was more than $\alpha$ for another unsaturated type $i'$, where $i'\neq i$. Let $\tau_i = |P^2_i|$. 

$P^3_i$ contains the bags that were \emph{assigned} in the first part while type $i$ was active, where each bag had a value of more than $\alpha$ for type $i$. Note that since the value of each bag in \( P^3_i \) exceeds \( \alpha \) for an unsaturated type, it is not included in \( F_i \). Let $k_i = |P^3_i|$.\\

\begin{example}
    Recall that in \cref{example1}, at the moment when type~2 became inactive, the state of the HV and MV items was as shown in \cref{fig:example2}. In that figure, crossed items denote those taken by type~1 agents, red bundles represent saved bundles, and green bundles indicate bundles that have been allocated to agents. Since the number of saved bundles in $F$ is $2$, we have $b_2 = 2$. Moreover, because the green bundles allocated to type~2 agents have values in the range $[\alpha,1]$ for type~2, we set $k_2 = 2$. Finally, as none of the assigned bundles were valued below $\alpha$ for type~2, it follows that $\tau_2 = 0$.

    Since type~3 never became inactive, it remained active throughout the execution of part~1 of \cref{alg:case4} (see \cref{fig:example} for an illustration of the bundles constructed during this phase). For type~3, the three red bundles saved in $F$ during the first while loop yield $b_3 = 3$. Furthermore, because all three green bundles allocated to type~2 agents were valued below $\alpha$ for type~3, we have $\tau_3 = 3$. Finally, as none of the assigned bundles were valued in the range $[\alpha,1]$ for type~3, it follows that $k_3 = 0$. \qed
\end{example}

Let $\{B_j\}^{T_1}_{j=1}$ be the first $T_1$ bags assigned to type $1$ agents. Since we are discussing Case 4 of Algorithm \ref{alg:main3types}, all bags in $\{B_j\}^{T_1}_{j=1}$ had a value of less than $\alpha$ for type $i$, therefore 
\begin{equation}\label{eq:firstT}
    v_i(\cup^{T_1}_{j=1} B_j) \leq T_1\alpha.
\end{equation}

Note that $P_i\setminus F_i$ is the set of all bags that were \emph{assigned} in the first part of the algorithm while type $i$ was active. Since $P_i\setminus F_i = P^2_i \cup P^3_i$, we have that $|P_i\setminus F_i| = \tau_i + k_i$. Let $\{B_j\}^{T_1+\tau_i + k_i}_{j=T_1+1}$ represent the set of bags in $P_i\setminus F_i$. From this set, $\tau_i$ bags had a value of less than $\alpha$ for type $i$, and $k_i$ of them had a value between $\alpha$ and $1$ for type $i$. Hence,
\begin{equation}\label{eq:secondtau}
    v_i(\cup^{T_1+\tau_i + k_i}_{j=T_1+1} B_j) \leq \tau_i \alpha + k_i.
\end{equation}
Putting \eqref{eq:firstT} and \eqref{eq:secondtau} together, 
\begin{equation}\label{eq:boundonfirst}
    v_i(\cup^{T_1+\tau_i+k_i}_{j=1} B_j) \leq (T_1+\tau_i)\alpha + k_i.
\end{equation}
Since we are focusing on just one type, for simplicity, let $T = T_1 + \tau_i$, $k = k_i$ and $b = b_i$. Since $T_1 \geq \frac{n}{3}$, we have $T \geq \frac{n}{3}$. Hence, we can rewrite \eqref{eq:boundonfirst} as follows:
\begin{equation}\label{eq:simple}
    v_i(\cup^{T+k}_{j=1} B_j) \leq T\alpha + k.
\end{equation}

\paragraph{$\bullet$} If type $i$ never gets inactive, and the first while loop terminates because $M = \emptyset$, when part 2 begins, $M = \emptyset$, implying $R = L$. For any assigned bag $B_j$ in the second part of the algorithm, right before adding the last item, the value of the bag was less than $\alpha$ for all unsaturated types, including $i$. Since the last added item is from $R$ and each item in $R$ has a value of at most $\frac{\alpha}{3}$ by Corollary \ref{col:bound}, the value of each assigned bag in part 2 is at most $\frac{4\alpha}{3}$. Since the number of bags assigned in the first part is \( \tau_i + k_i \), the number of bags that remain to be assigned in part 2 is given by \( n - T_1 - \tau_i - k \), which simplifies to \( n - T - k \). Let $\{B_j\}^n_{j=T+k+1}$ be the collection of bags assigned in part 2; $v_i(\cup^n_{j=T+k+1} B_j) \leq (n-T-k)\frac{4\alpha}{3}$. By combining this with \eqref{eq:simple}, we derive
\begin{equation}
    \begin{aligned}
        v_i(\cup^n_{j=1}B_j) & \leq T\alpha + k + (n-T-k)\frac{4\alpha}{3}\\
        & = n\frac{4\alpha}{3} -T\frac{\alpha}{3} + k(1-\frac{4\alpha}{3}) \\
        & \leq n\frac{4\alpha}{3} -\frac{n}{3}\frac{\alpha}{3} = n\frac{176}{189} < n, 
    \end{aligned}
\end{equation}
where we used the fact that $T\geq \frac{n}{3}$ and $\alpha =\frac{16}{21}$.

\paragraph{$\bullet$}If type \( i \) never becomes inactive and the first while loop terminates due to \( H = \emptyset \), then throughout this loop, each of the initial \( n - T_1 \) HV items in \( H \) is paired with an MV item. Each pair forms a bundle that is either assigned or saved in $F$. Since exactly \( \tau_i + k_i \) pairs were assigned during the first while loop, the remaining \( n - T_1 - \tau_i - k_i \) pairs are saved in \( F \). By assumption, when part 2 begins, \( F \neq \emptyset \); therefore, in part 2, each bag in \( F \) is filled until it is claimed by an unsaturated type.

Observe that during the first while loop, $n - T_1$ MV items were either saved or assigned in bundles containing one HV item and one MV item. Given that there are at most $n$ middle-valued items available at the beginning, at most $T_1$ of them remain in $R$. Consequently, when the second while loop begins, we have $|M| \leq T_1$. According to Corollary \ref{col:bound}, each of these items has a value of at most $\alpha/2$. During the bag-filling process for the bags in $F$, each item in $M$ can be added as the last item in a bag. Therefore, at most $\min(T_1, n - T - k)$ bags may contain items with a total value of up to $\frac{3\alpha}{2}$. Let $\{B_j\}_{j=T+k+1}^{n}$ represent the collection of bags assigned in part 2. Now, consider the following cases:
\begin{itemize}
    \item If $n-T-k \leq T_1$, during the bag filling at most all of the $n-T-k$ bags have a value of at most $\frac{3\alpha}{2}$, so $v_i(\cup^n_{j=T+k+1} B_j) \leq (n-T-k)\frac{3\alpha}{2}$. By combining this with \eqref{eq:simple}, we derive 
    \begin{equation}
    \begin{aligned}
        v_i(\cup^n_{j=1}B_j) & \leq T\alpha + k + (n-T-k)\frac{3\alpha}{2}\\
        & = n\frac{3\alpha}{2} -T\frac{\alpha}{2} + k(1-\frac{3\alpha}{2}) \\
        & = n\frac{3\alpha}{2} -T_1\frac{\alpha}{2} - \tau_i\frac{\alpha}{2} + k(1-\frac{3\alpha}{2}) 
    \end{aligned}
    \end{equation}
By assumption, $n-T-k \leq T_1$, so $n\leq 2T_1 + \tau_i + k$. As $T_1\geq \frac{n}{3}$, the upper bound is obtained when $T_1 = \frac{n}{3}$, $\tau_i = 0$, $k=\frac{n}{3}$. Therefore, 
    \begin{equation}
    \begin{aligned}
        v_i(\cup^n_{j=1}B_j) & \leq
        n\frac{3\alpha}{2} -\frac{n}{3}\frac{\alpha}{2} + \frac{n}{3}(1-\frac{3\alpha}{2}) \\& = n\frac{61}{63} \leq n
    \end{aligned}
    \end{equation}
where we used $\alpha =\frac{16}{21}$.

\item If $n - T - k > T_1$, then during the bag-filling process, at most $T_1$ bags can reach a value of up to $\frac{3\alpha}{2}$. The remaining $n - T - k - T_1$ bags, however, can have a value of at most $\frac{4\alpha}{3}$, since, by Corollary \ref{col:bound}, the value of any item in $L$ is upper bounded by $\frac{\alpha}{3}$; hence $v_i(\cup^n_{j=T+k+1} B_j) \leq (n-T-k-T_1)\frac{4\alpha}{3} + T_1\frac{3\alpha}{2}$. By combining this with \eqref{eq:simple}, we derive 
    \begin{equation}
    \begin{aligned}
        v_i(\cup^n_{j=1}B_j) & \leq T\alpha + k + (n-T-k-T_1)\frac{4\alpha}{3} + T_1\frac{3\alpha}{2}\\
        & = n\frac{4\alpha}{3} - T\frac{\alpha}{3} + T_1\frac{\alpha}{6} + k(1-\frac{4\alpha}{3}) \\
        & \leq n\frac{4\alpha}{3} - T\frac{\alpha}{3} + T_1\frac{\alpha}{6}\\
        & = n\frac{4\alpha}{3} - \tau_i\frac{\alpha}{3} - T_1\frac{\alpha}{6}\\
        & \leq n\frac{4\alpha}{3} - T_1\frac{\alpha}{6} \\
        & \leq n\frac{4\alpha}{3} - \frac{n}{3}\frac{\alpha}{6} \\
        & = n\frac{184}{189} \leq n
    \end{aligned}
    \end{equation}
    where we used $\alpha= \frac{16}{21}$, $T = T_1 + \tau_i$, and $T_1\geq \frac{n}{3}$.
\end{itemize}

\paragraph{$\bullet$} If type $i$ gets inactive at some point, $\{B_j\}^{T+k}_{j=1}$ is the set of bags assigned before type $i$ gets inactive. Let $\{B_j\}^{n}_{j=T+k+1}$ be the set of bags assigned after type $i$ gets inactive. The latter set of bags can be assigned either during the first or second part of the algorithm. Specifically, when type $i$ becomes inactive, if another active type still exists, the first while loop might continue. As a result, some bags may be assigned while type $i$ is inactive, even though the algorithm is still in the first while loop. We have already shown in \eqref{eq:simple} $v_i(\cup_{j=1}^{T+k}B_j) \leq T\alpha + k$.

From this point onward, let us consider the state of the problem at the moment when type \( i \) becomes inactive. Consider the sets \( M \) and \( H \), representing the available middle-valued and high-valued items, respectively, immediately after type \( i \) becomes inactive. Let \( g = H[-1] \) denote the least valued item in \( H \), and \( g' = M[-1] \) the least valued item in \( M \). Since all bags saved in $F$ while type $i$ was active, and all $T+k$ bags assigned before type $i$ becomes inactive, each contain exactly one HV item, the number of remaining HV items is given by \( |H| = n - T - k - b \).

Let \( t = n - g \) represent the number of unavailable HV items valued no more than $g$. Since \( b \) high-valued items ranked lower than \( g \) have already been saved in some bags in \( F \), and \( \tau_i + k \) high-valued items ranked lower than \( g \) have been assigned, it follows that \( \tau_i + k + b \leq t \leq T_1 + \tau_i + k + b \).

Let \( r = g' - n \) represent the number of unavailable MV items valued no less than $g'$, plus one. Since \( b \) middle-valued items ranked above \( g' \) have already been saved in some bags of \( F \), it follows that \( b \leq r \leq n \). Finally, let \( t' = t - b \), which implies \( \tau_i + k \leq t' \leq T + k \).

Note that the least valued item in \( M \) is positioned at \( r + n \). Since \( b \) middle-valued items ranked above \( M[1] \) have already been saved in some bags in \( F \), the items in \( M \) must belong to the range \( MV[b+1:r] \).

\begin{example}
    For clarity, consider type~2 and the moment it became inactive in \cref{example1}. At that point, the crossed items and the items contained in the red and green bundles in \cref{fig:example2} are unavailable. Thus, for type~2, the least-valued available high-valued (HV) item is item $3$, so we have $g = 3$. Similarly, the highest-valued available middle-valued (MV) item is item $13$, hence $g' = 13$. Given that $n = 9$, we have $t = n - g = 9 - 3 = 6$ and $r = g' - n = 13 - 9 = 4$. \qed
\end{example}

\begin{figure}
    \centering
    \begin{tikzpicture}[every node/.style={draw, circle, minimum size=1cm, inner sep=0pt}]
  % Bottom row: nodes 1 to 9
  \foreach \i in {1,...,9} {
    \ifnum\i=3
    \node[fill=pink] (b\i) at ({(\i-1)*1.5},0) {$g=3$};
  \else
    \node (b\i) at ({(\i-1)*1.5},0) {\i};
  \fi
  }
  
  % Top row: nodes labeled 18 down to 10 (so t9 is labeled 10)
  \foreach \j in {1,...,9} {
    \pgfmathtruncatemacro{\num}{19-\j}
  \ifnum\num=13
    \node[fill=pink] (t\j) at ({(\j-1)*1.5},2) {$g' = 13$};
  \else
    \node (t\j) at ({(\j-1)*1.5},2) {\num};
  \fi  }
  
  % Cross out bottom nodes 2, 4, and 9
  \foreach \i in {2,4,9} {
    \draw[line width=1pt] (b\i.north east) -- (b\i.south west);
    \draw[line width=1pt] (b\i.north west) -- (b\i.south east);
  }
    \draw[<->] (4, -1) -- (12.5, -1) node[draw=none, midway, below] {$t=6$}; % Arrow from n to k in HVI

    \draw[<->] (7, 3) -- (12.5, 3) node[draw=none, midway, above] {$r=4$}; % Arrow from n to k in HVI

  % Draw a rotated, thinner oval (ellipse) around bottom node 8 (b8) and top node 10 (t9)
  % \node[fit=(b8)(t9), draw=red, thick, shape=ellipse, 
  %       inner xsep=5pt, inner ysep=20pt, rotate=-30] {};
    \draw[red, thick] (11.3,1) ellipse [x radius=2.2, y radius=0.65, rotate=53, xshift =0pt];

    \draw[red, thick] (9.8,1) ellipse [x radius=2.2, y radius=0.65, rotate=53, xshift =0pt];

    \draw[green, thick] (8.3,1) ellipse [x radius=2.2, y radius=0.65, rotate=53, xshift =0pt];

    \draw[green, thick] (6,1) ellipse [x radius=2.2, y radius=0.65, rotate = 90, xshift =0pt];

    \node[draw=none] at (-1.7, 0) {\textbf{HV :}};

    \node[draw=none] at (-1.7, 2) {\textbf{MV :}};

\end{tikzpicture}
    \caption{This figure represents the set of available HV and MV items when type $2$ gets inactive in \cref{example1}. Note that all crossed items, and items within red and green bundles are unavailable. 
     }
    \label{fig:example2}
\end{figure}
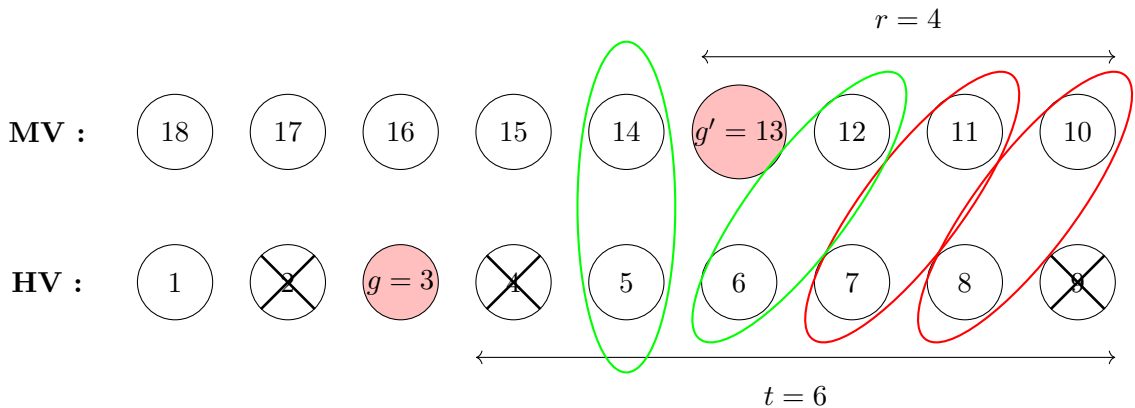

For any assigned bag \( B_j \) after type \( i \) becomes inactive, if \( B_j \) was assigned during the first phase of the algorithm, it contains one high-valued item and one middle-valued item. We consider the middle-valued item in \( B_j \) as the last item added. Prior to adding this item, \( B_j \) contained only a high-valued item, whose value was less than \( \alpha \) for all unsaturated types, as established by Corollary \ref{col:bound}. 

If \( B_j \) was assigned during the second phase of the algorithm, then immediately before the last item was added, the bag's value was less than \( \alpha \) for all unsaturated types, including type \( i \). In both cases, the value of the bag was less than \( \alpha \) before adding the final item. If the value of the last added item is bounded above by \( X \), then the total value of \( B_j \) is upper-bounded by \( \alpha + X \). 

The largest possible upper bound for a bag occurs when the last added item is a middle-valued item. Consequently, the maximum upper bound for \( \cup_{j=T+k+1}^{n} B_j \) is achieved when the number of remaining middle-valued items, \( |M| \), is maximized, and each item in \( M \) is added last to a bag during the bag-filling process. Thus, we proceed under this assumption, considering all \( r-b \) middle-valued items in \( MV[b+1:r] \) to be present in \( M \).

Next, we analyze two possible cases based on whether \( r \) is less than or greater than \( t \). Each case is considered separately, demonstrating that the total value of all assigned bags cannot exceed \( n \). Consequently, the supply of goods remains sufficient throughout the bag-filling process, ensuring that type \( i \) is saturated.

\subsection{Case 1: $r\leq t$}\label{sec:case1}
In this case, we apply the upper bound provided in Corollary \ref{col:bound} to each of the \( r-b \) middle-valued items in \( M \). Consequently, when added to a bag as the final item, each middle-valued item can increase the bag's value to at most \( \frac{3\alpha}{2} \). Given that \( r \leq t \), it follows that \( r-b \leq t-b = t' \), and since \( t' \leq T+k \), we have \( r-b \leq T+k \). Therefore, at most $T+k$ middle-valued items are available. 

Since only \( n-T-k \) bags are required after type \( i \) becomes inactive, the worst-case scenario would involve up to \( \min(n-T-k, T+k) \) bags reaching a value of \( \frac{3\alpha}{2} \). We now examine the following two subcases:

\subsubsection{$T+k \geq \frac{n}{2}$}\label{subsub:tplusk}
In this case $\min(n-T-k, T+k) = n-T-k$, therefore the value of each remained bag can be at most $\frac{3\alpha}{2}$, leading to the following inequality, $v_i(\cup^n_{j=T+k+1} B_j) \leq (n-T-k)\frac{3\alpha}{2}$. By combining this with \eqref{eq:simple}, we derive 
\begin{equation}
    \begin{aligned}
        v_i(\cup^n_{j=1} B_j) & \leq T\alpha + k + (n-T-k)\frac{3\alpha}{2} \\
        & = n\frac{3\alpha}{2} - T\frac{\alpha}{2} + k(1-\frac{3\alpha}{2}) \\
        & \leq n\frac{3\alpha}{2} - \frac{n}{3}\frac{\alpha}{2} + \frac{n}{6}(1-\frac{3\alpha}{2}) \\
        & = n\frac{125}{126} \leq n
    \end{aligned}
\end{equation}
since $T+k \geq \frac{n}{2}$, and $-\frac{\alpha}{2} < 1-\frac{3\alpha}{2}$ when $\alpha=\frac{16}{21}$, the upper bound is obtained when $T$ is the lowest possible number. Since $T\geq \frac{n}{3}$, and $T+k\geq \frac{n}{2}$, the upper bound is obtained when $T=\frac{n}{3}$, and $k = \frac{n}{6}$.  
\subsubsection{$T+k < \frac{n}{2}$}\label{subsub:less}
There are at most $T+k$ middle-valued items. When these items are added as the final items to \( T+k \) bags, the value of each such bag can reach up to \( \frac{3\alpha}{2} \). The remaining \( n-2(T+k) \) bags can have a maximum value of \( \frac{4\alpha}{3} \), as their last added items can only be low-valued, with values upper-bounded by \( \frac{\alpha}{3} \) according to Corollary \ref{col:bound}. Thus, $v_i(\cup^n_{j=T+k+1} B_j) \leq (T+k)\frac{3\alpha}{2} + (n-2(T+k))\frac{4\alpha}{3}$. By combining this with \eqref{eq:simple}, we derive 

\begin{equation}
    \begin{aligned}
        v_i(\cup^n_{j=1} B_j) & \leq T\alpha + k + (T+k)\frac{3\alpha}{2} + (n-2(T+k))\frac{4\alpha}{3} \\
        & = n\frac{4\alpha}{3}-T\frac{\alpha}{6} + k(1-\frac{7\alpha}{6}) \\
        & \leq n\frac{4\alpha}{3}-\frac{n}{3}\frac{\alpha}{6} + \frac{n}{6}(1-\frac{7\alpha}{6}) \\
        & = n\frac{1125}{1134} \leq n. 
    \end{aligned}
\end{equation}
Given $T\geq \frac{n}{3}$ and $T+k \leq \frac{n}{2}$, the upper bound is obtained when $T$ is smallest and $k$ is largest, $T=\frac{n}{3}$ and $k = \frac{n}{6}$ respectively. 

\subsection{Case 2: $r > t$}
When type \( i \) becomes inactive, either \( v_i(H[-1], M[j]) > 1 \) or \( v_i(H[-1], M[j]) < \alpha \) for all \( j \in [|M|] \). Consequently, the value \( v_i(g, g') \) is either greater than 1 or less than \( \alpha \). Therefore, we analyze these two cases separately. 

Let \( l = r - t - 1 \). As $r>t$, \( l \geq 0 \). By the definitions of \( r \) and \( t \), it follows that:\[l = g' - [2n - g + 1].\]

\subsubsection{$v_i(g,g') > 1$}\label{subsub:more} 
The number of middle-valued items can be at most \( r-b \), where \( r-b = l+1+t' \). As discussed earlier, we assume that all \( r-b \) middle-valued items in $MV[b+1:r]$ are present to obtain the greatest upper bound. Given that \( g' = l + 2n - g + 1 \), it follows that \( v_i(2n - g + 1) \geq v_i(g') \). Since \( v_i(g, g') > 1 \), we have \( v_i(g, 2n - g + 1) > 1 \), and by Lemma \ref{lem:upper-13}, it holds that \( v_i(2n - g + 1) < \frac{1}{3} \). 

For all \( j \in [2n - g + 1, g'] \), we have that \( v_i(j) \leq v_i(2n - g + 1) \), and thus \( v_i(j) < \frac{1}{3} \). Consequently, the value of each $l+1$ middle-valued items in \( MV[b+1:r] \) is bounded by \( \frac{1}{3} \). If these items are added as the last item to a bag, the bag's value can increase to at most \( \alpha + \frac{1}{3} \).

On the other hand, for the remaining \( t' \) middle-valued items in \( MV[b+1:r] \) we use the upper-bound provided by Corollary \ref{col:bound}, implying they have a value of at most \( \frac{\alpha}{2} \). When added to a bag, they can increase its value to at most \( \frac{3\alpha}{2} \). Since \( t' \leq T+k \), the number of bags with value at most \( \frac{3\alpha}{2} \) is at most \( \min(T+k, n-T-k) \).

If \( T+k \geq n-T-k \), the analysis in Case \ref{subsub:tplusk} applies. Otherwise, if \( T+k < n-T-k \), at most \( T+k \) bags can have a value of at most \( \frac{3\alpha}{2} \), while the remaining \( n-2(T+k) \) bags have a value upper-bounded by \( \alpha + \frac{1}{3} \). Therefore $v_i(\cup^n_{j=T+k+1} B_j) \leq (T+k)\frac{3\alpha}{2} + (n-2(T+k))\cdot(\alpha+\frac{1}{3})$. By combining this with \eqref{eq:simple}, we derive 
\begin{equation}
    \begin{aligned}
        v_i(\cup^n_{j=1} B_j) & \leq T\alpha + k + (T+k)\frac{3\alpha}{2} + (n-2(T+k))\cdot(\alpha+\frac{1}{3}) \\
        & = n(\alpha+\frac{1}{3})-T(\frac{2}{3}-\frac{\alpha}{2}) - k(\frac{\alpha}{2} - \frac{1}{3}) \\
        & \leq n(\alpha+\frac{1}{3})-T(\frac{2}{3}-\frac{\alpha}{2}) \\
        & \leq n(\alpha+\frac{1}{3})-\frac{n}{3}(\frac{2}{3}-\frac{\alpha}{2}) = n
    \end{aligned}
\end{equation}
where $\alpha = \frac{16}{21}$. Given $T\geq \frac{n}{3}$ and $T+k \leq \frac{n}{2}$, the upper bound is obtained when $T$ and $k$ are smallest, $T=\frac{n}{3}$ and $k = 0$ respectively. 

\subsubsection{$v_i(g,g') < \alpha$}
Since type \( i \) is inactive, it follows that \( v_i(M[j], H[-1]) < \alpha \) or \( v_i(M[j], H[-1]) > 1 \) for all \( j \in [|M|] \), with the additional condition that \( v_i(M[1], H[-1]) > 1 \). Let \( j \) be the largest index such that \( v_i(M[j], H[-1]) > 1 \). By Corollary \ref{col:bound}, we have \( v_i(M[j]) < \frac{\alpha}{2} \), which implies \( v_i(H[-1]) > 1 - \frac{\alpha}{2} \).

Furthermore, since \( v_i(M[j+1], H[-1]) < \alpha \), it follows that \( v_i(M[j+1]) < \frac{3\alpha}{2} - 1 \). Substituting \( \alpha = \frac{16}{21} \), this yields \( v_i(M[j+1]) < \frac{\alpha}{3} \). Consequently, all middle-valued items indexed after \( M[j] \) have a value of at most \( \frac{\alpha}{3} \), which corresponds precisely to the upper bound for low-valued items. 

Thus, based on the position of \( M[j] \), the problem can be reduced to one of the previously analyzed cases. Let \( g' = M[j] \) and define \( r = g' - n \). If \( r \leq t \), the analysis in Case \ref{sec:case1} applies; otherwise, if \( r > t \), the analysis in subcase \ref{subsub:more} applies.

\end{document}